\documentclass[11pt]{article}

\usepackage{geometry}
\usepackage{amsmath,amssymb,amsthm,mathtools}
\usepackage{bbm}
\usepackage{microtype}
\usepackage{xcolor}
\definecolor{linkcolor}{RGB}{0,0,255}
\definecolor{citecolor}{RGB}{0,0,255}
\definecolor{urlcolor}{RGB}{0,0,255}
\usepackage[colorlinks=true,linkcolor=linkcolor,citecolor=citecolor,urlcolor=urlcolor,pdfusetitle]{hyperref}
\usepackage[noblocks]{authblk}
\usepackage{quantikz}
\usepackage{physics}
\usepackage[ruled,vlined,linesnumbered]{algorithm2e}
\usepackage{tikz}
\usepackage{enumitem}
\usepackage{graphicx}
\usepackage{subcaption}
\usepackage{xspace}

\usepackage{mleftright} 

\definecolor{dm00}{RGB}{255,255,153}
\definecolor{dm01}{RGB}{153,255,255}
\definecolor{dm10}{RGB}{153,255,153}
\definecolor{dm11}{RGB}{255,153,153}

\newtheorem{theorem}{Theorem}[section]
\newtheorem{lemma}[theorem]{Lemma}
\newtheorem{proposition}[theorem]{Proposition}
\newtheorem{corollary}[theorem]{Corollary}
\theoremstyle{definition}
\newtheorem{definition}[theorem]{Definition}
\newtheorem*{definition*}{Definition}

\newtheorem{example}[theorem]{Example}

\newcommand{\NN}{\mathbb{N}}       
\newcommand{\ZZ}{\mathbb{Z}}       
\newcommand{\RR}{\mathbb{R}}       
\newcommand{\CC}{\mathbb{C}}       
\newcommand{\EE}{\mathbb{E}}       
\newcommand{\PP}{\mathbb{P}}       
\newcommand{\DDD}{\mathcal{D}}     
\newcommand{\MMM}{\mathcal{M}}     
\newcommand{\PPP}{\mathcal{P}}     
\newcommand{\CCC}{\mathcal{C}}     
\newcommand{\OOO}{\mathcal{O}}      
\newcommand{\ii}{\mathrm{i}}       
\newcommand{\zero}{\mathbf0}       
\newcommand{\one}{\mathbf1}        
\newcommand{\EEE}{\mathcal{E}}     
\newcommand{\GGG}{\mathcal{G}}    
\newcommand{\LLL}{\mathcal{L}}     
\DeclareMathOperator{\id}{id}      
\DeclareMathOperator{\diag}{diag}  
\DeclareMathOperator{\wt}{wt}      

\newcommand{\NAME}{\mbox{SGM}\xspace} 

\title{Classical Shadows with Selective GHZ Measurements}

\author[1]{\mbox{Nicolas Faroß}}
\author[1]{\mbox{Marc Wanner}}
\author[1]{\mbox{Pingal Pratyush Nath}}
\author[2]{\mbox{Akshay Gaikwad}}
\author[2]{\mbox{Anton Frisk Kockum}}
\author[1]{\mbox{Devdatt Dubhashi}}
\affil[1]{\small Department of Computer Science and Engineering, Chalmers University of Technology and University of Gothenburg, 41296 Gothenburg, Sweden\\

\texttt{\{faross,wanner,pingal,dubhashi\}@chalmers.se}}
\affil[2]{\small Department of Microtechnology and Nanoscience, Chalmers University of Technology, 41296 Gothenburg, Sweden\\

\texttt{\{akshayga,anton.frisk.kockum\}@chalmers.se}}

\date{}

\begin{document}

\maketitle

\begin{abstract}
Classical shadows allow the prediction of many quantum-state properties from measurement data, but their efficiency strongly depends on the underlying measurement ensemble.
We introduce selective GHZ measurement (\NAME) shadows, which perform Greenberger--Horne--Zeilinger (GHZ)-basis measurements on randomly selected subsets of qubits and computational-basis measurements on their complement. The corresponding quantum circuits
are practically feasible, with their number of CNOT gates scaling linearly with the number of qubits. This is in stark contrast to global Clifford circuits, which are generally difficult to implement.
Varying the subset distribution allows tuning the \NAME shadow protocol for different classes of observables with prescribed $X/Y$ support, allowing us to outperform local Pauli as well as global Clifford shadows in certain scenarios.
We determine the exact shadow channel and its inverse for arbitrary subset distributions and prove an explicit bound on the shadow norm using methods from discrete Fourier analysis and Markov semigroup theory. For natural choices of subset distributions, this gives a polynomial scaling for observables with fixed $X/Y$-locality. Numerical simulations of up to 50 qubits support the predicted improvements for nonlocal observables and long-range systems, and demonstrate the practical benefits of our protocol's generality by showing how the subset distributions can be tailored to settings such as fermionic modes under the Jordan--Wigner transformation.
\end{abstract}

\section{Introduction}\label{sec:introduction}

Estimating properties of quantum states from measurement data is a fundamental task in quantum information. \textit{Classical shadows} provide a general framework for efficiently predicting many such properties of an unknown quantum state using relatively few measurements~\cite{huang2020predicting,huang2022learning}.
A shadow protocol applies randomly sampled unitaries from a fixed measurement ensemble to copies of a quantum state $\rho$, followed by measurements in the computational basis. Using the post-measurement ``snapshots'', one creates a classical reconstruction of the quantum state, which can then be used to estimate many observables without repeating the experiment for each.
The cost of this strategy is determined by the measurement ensemble, which sets an upper bound on the number of shots required to achieve a given precision through the shadow norm, while also determining the circuit resources required per shot. Two ensembles sit at opposite ends of this tradeoff. Local Pauli measurements are shallow and
hardware-friendly, but their sample complexity grows exponentially with the support of the target observable. At the other extreme, global Clifford measurements are essentially support-independent and highly efficient for low-rank observables, but their implementation remains challenging for many near-term devices because of their depth, entangling-gate count, and connectivity requirements.

A central open question is
\begin{quote}
    \emph{Can we find a better solution to this tradeoff between sample and circuit complexity with a measurement ensemble that is more powerful than the Pauli ensemble, but still is practical and does not require the cost of the Clifford ensemble?}
\end{quote}
This question has motivated intermediate measurement ensembles, from
shallow local Clifford circuits~\cite{bertoni2024shallow,akhtar2023scalable}
to ensembles generated by native Hamiltonian
dynamics~\cite{hu2022hamiltonian,hu2023classical,liu2024predicting}.
For many such ensembles, however, explicit expressions for the shadow
channel and its inverse remain difficult to obtain and reconstruction
often relies on numerical methods. This limits analytic guidance for tailoring
ensembles to target observables.

Given a practical measurement ensemble, the next central challenge is
\begin{quote}
    \emph{Can we rigorously characterize the shadow channel and its inverse, and obtain bounds on the shadow norm, which in turn determines the sample complexity?}
\end{quote}
These challenges have, in turn,
motivated efforts to exploit the mathematical structure of the shadow
channel, using symmetry-based
decompositions~\cite{chen2021robust,zhao2021fermionic,wan2023matchgate,bu2024classical,zhao2024group-theoretic,chang2026classical}
and frame-theoretic duality~\cite{innocenti2023shadow} to characterize
reconstruction and its statistical properties
(see Sec.~\ref{sec:related-work}).

Finally, to the best of the authors' knowledge, existing single-copy shadow protocols with sample-efficient guarantees apply to observable families whose relevant structure grows only polynomially with the number of qubits $n$. For instance, although bounds for $k$-local Pauli observables may formally cover exponentially many operators, only $\OOO(n^k)$ satisfy the locality constraint. Likewise, low-rank observables for the Clifford ensemble and constant-degree Majorana monomials for fermionic Gaussian unitaries~\cite{zhao2021fermionic} form families of size polynomial in $n$. Protocols with efficient guarantees for exponentially large observable families instead rely on simultaneous measurements of multiple copies of the state~\cite{aaronson2018shadow, king2025triply}.
This state of affairs motivates the question
\begin{quote}
\emph{Can a single-copy measurement ensemble efficiently predict an exponentially large family of observables?}
\end{quote}

We contribute to this program by introducing \emph{selective GHZ measurement (\NAME) shadows}, a family of randomized Greenberger--Horne--Zeilinger~\cite{greenberger1989going} (GHZ)-type measurement ensembles parameterized by a probability distribution over subsets of qubits. These measurements admit a shallow circuit implementation whose gate count, in particular the number of CNOT gates, scales linearly with the number of qubits in the worst case, making it well suited to near-term devices. At the same time, varying the underlying distribution allows the measurement statistics to interpolate between Pauli-like and Clifford-like regimes. We derive an analytic expression for the shadow channel eigenvalues in terms of Fourier coefficients of the underlying distribution, and prove explicit bounds on the shadow norm for structured observables using methods from Markov semigroup theory. We show that the \NAME shadow norm, and therefore the corresponding sample complexity, remains polynomial for a broad class of observables containing exponentially many Pauli operators. Our numerical results support this scaling and illustrate the flexibility of the protocol. In particular, for the fermionic setting of~\cite{zhao2021fermionic}, \NAME retains polynomial sample complexity while substantially reducing the circuit complexity.

\subsection{Main results}\label{sec:main-results}

The unitary ensembles underlying our \NAME shadow protocol are inspired by the measurement settings used for selective quantum state tomography in~\cite{patel2026selective}, where subsets of $n$ qubits are measured in one of the GHZ bases
\begin{equation}\label{eq:GHZ-type}
    \GGG_{k,m} = \Bigl\{\frac{1}{\sqrt{2}} \ket{0, x} \pm \frac{1}{\sqrt{2}}  \, \ii^k \ket{1, \overline{x}} : x \in \{0, 1\}^{m-1} \Bigr\},
    \quad
    k \in \{0, 1\}, \,
    1 \leq m \leq n ,
\end{equation}
while the remaining qubits are measured in the computational basis. In contrast to existing approaches, we do not sample these subsets uniformly, but allow for arbitrary probability distributions, which makes it possible to adjust the protocol for different classes of structured observables. For each choice of probability distribution, the corresponding \NAME unitary ensemble is defined as follows.

\begin{definition*}[Def.~\ref{def:main-ensemble}]
Let $p \colon \{0,1\}^n \to [0, 1]$ be a probability distribution. Then the $n$-qubit \emph{selective GHZ measurement (\NAME) ensemble} with distribution $p$ is given by
\begin{equation}
    \bigl\{ U_{k,s} : k \in \{0, 1\}, s \in \{0, 1\}^n \bigr\},
\end{equation}
where the unitary $U_{k, s}$ is chosen with probability $\frac{1}{2} \, p(s)$ and transforms all qubits $i$ with $s_i = 1$ from the corresponding $\GGG_{k,m}$ basis into the computational basis.
\end{definition*}

Each unitary $U_{k,s}$ can be implemented by a logarithmic-depth quantum circuit consisting of a single Hadamard gate, an optional phase gate, and a number of CNOT gates that is linear in the selected subset size (Sec.~\ref{sec:ensemble}).
This simple circuit structure can substantially reduce gate and compilation overhead on near-term quantum hardware compared to generic global Clifford measurements.

Since the \NAME ensemble is implemented by Clifford unitaries, it follows that the corresponding shadow channel $\MMM$ is diagonal in the Pauli basis (Sec.~\ref{sec:clifford-ensembles}).
Denote by
\begin{equation}
    P_{a,b} = \bigotimes_{k=1}^n \ii^{a_k b_k} X^{a_k} Z^{b_k}
\end{equation}
the standard Pauli operator with $X$- and $Z$-support encoded by $a, b \in \{0, 1\}^n$.
Then our first main result provides a simple description of the eigenvalues of the shadow channel in terms of the underlying distribution $p$ and its Fourier transform $\widehat{p}$ over $\ZZ_2^n$.

\begingroup
\let\thetheorem1
\begin{theorem}[Thm.~\ref{thm:main-channel-eigenvalues}]
Let $a, b \in \{0, 1\}^n$. Then the eigenvalue corresponding to $P_{a,b}$ for the \NAME shadow channel with distribution $p$ is given by
\begin{equation}
    \lambda_{a,b} = \begin{cases}
        \frac{1}{2} + \frac{1}{2} \widehat{p}(b), & \text{if $a = (0, \dots, 0)$}, \\
        \frac{1}{2} p\bigl(a\bigr), & \text{otherwise},
    \end{cases}
\end{equation}
where
\begin{equation}
    \widehat{p}(b) = \sum_{x \in \{0, 1\}^n} p(x) \chi_b(x),
    \qquad
    \chi_b(x) = (-1)^{\sum_{i=1}^n b_i x_i}
\end{equation}
is the Fourier transform of $p$ over $\ZZ_2^n$.
\end{theorem}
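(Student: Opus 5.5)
The plan is to use the standard formula for Clifford ensembles: the shadow channel is diagonal in the Pauli basis, and the eigenvalue on $P_{a,b}$ is the probability that the conjugated Pauli $U P_{a,b} U^\dagger$ is diagonal, i.e.\ a $Z$-type string (up to sign). Concretely, $\MMM(\rho)=\EE_U \sum_z \bra{z}U\rho U^\dagger\ket{z}\, U^\dagger\ketbra{z}U$. Since $\sum_z \bra{z}Q\ket{z}\ketbra{z}$ is the identity on diagonal Paulis and zero on all others, we get
\begin{equation}
\lambda_{a,b}=\PP_{k,s}\bigl[U_{k,s}P_{a,b}U_{k,s}^\dagger \text{ is diagonal}\bigr].
\end{equation}
Equivalently, $\lambda_{a,b}$ is the probability that $P_{a,b}$ lies in the stabilizer group of the measured basis. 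For fixed $(k,s)$ this group is the product of the single-qubit groups $\langle Z_i\rangle$ for $s_i=0$ and the stabilizer group of the GHZ basis $\GGG_{k,m}$ on the support of $s$.

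\textbf{Stabilizer group of $\GGG_{k,m}$.} Let $S=\operatorname{supp}(s)$ and $m=|S|$. Each basis vector $\frac{1}{\sqrt2}(\ket{0,x}\pm \ii^k\ket{1,\bar x})$ is stabilized, up to sign, by $Z_iZ_j$ for $i,j\in S$ and by $X^{\otimes S}$ (for $k=0$) or $Y\otimes X^{\otimes S\setminus\{i_0\}}$ (for $k=1$). The group these generate, which has $2^m$ elements modulo phases, therefore consists of:
\begin{itemize}[label={}]
\item \emph{(i)} the even-weight $Z$-strings on $S$, and
\item \emph{(ii)} these strings times the $X/Y$ generator. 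Up to phase, these are exactly the Paulis with $X$-part $a=\one_S$ and an arbitrary $Z$-part $b$ on $S$, subject to the parity condition $|b|\equiv k \pmod 2$. The $Y$ on the first qubit contributes one $Z$, which shifts the parity by $k$.
\end{itemize}
Combining these with the $Z$'s on the complement, $P_{a,b}$ is diagonal after $U_{k,s}$ iff one of the following holds:
\begin{itemize}[label={}]
\item $a=0$ and $|b\wedge s|$ is even, or
\item $a=s$ and $|b\wedge s|\equiv k\pmod 2$.
\end{itemize}
I will verify the parity conventions carefully, since the sign and $\ii^k$ bookkeeping is the only delicate point.

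\textbf{Averaging.} For $a=0$, the condition does not involve $k$, and $\PP_s[\langle b,s\rangle \text{ even}]=\sum_s p(s)\frac{1+\chi_b(s)}{2}=\frac12+\frac12\widehat p(b)$. For $a\neq0$, we need $s=a$, which has probability $p(a)$. Given $s=a$, exactly one of the two values of $k$ matches the parity of $|b\wedge a|$, so the probability is $\frac12 p(a)$.

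I expect the main obstacle to be the parity claim in (ii). The claim is that for each $k$, exactly one of the two $k$-values makes a given $P_{a,b}$ with $a=\one_S$ a stabilizer. This requires checking that $X^{\otimes S}$ and $Y\otimes X^{\otimes S\setminus\{i_0\}}$ stabilize the respective bases: $X^{\otimes m}\ket{0,x}=\ket{1,\bar x}$ gives eigenvalue $\pm1$ for $k=0$, and $Y$ contributes $\ii$ or $-\ii$ factors that pair with $\ii^k$ for $k=1$. It also requires checking that multiplying by even $Z$-strings covers all $b$ of the right parity. This is where I would do the explicit computation first.
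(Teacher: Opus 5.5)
Your proposal is correct and follows the paper's proof. You reduce $\lambda_{a,b}$ to the probability that $U_{k,s}P_{a,b}U_{k,s}^\dagger$ is of $Z$-type, characterize this as $a=\zero$ with $\chi_b(s)=1$ or $a=s$ with $\chi_b(s)=(-1)^k$ (exactly Lemma~\ref{lem:UPU-Z-type}), and average over $k$ and $s$. The only difference is how you prove the per-$(k,s)$ characterization: you use stabilizer generators plus a $2^m$ count, which is exhaustive because the Paulis diagonal in the GHZ basis are the conjugates of $\{I,Z\}^{\otimes m}$, whereas the paper checks the eigenvector condition directly on the GHZ states, where $a\in\{\zero,\one\}$ is forced by $\{y,\overline{y}\}=\{y+a,\overline{y}+a\}$.
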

\endgroup

It follows directly from the above theorem (Corr.~\ref{corr:general-eigenvalue-bound})
that the \NAME ensemble is tomographically complete if the distribution $p$ is strictly positive, i.e.,\ $p(s) > 0$ for all $s \in \{0, 1\}^n$. In this case, the shadow channel is invertible and the inverse is defined by
\begin{equation}
    \MMM^{-1}(P_{a,b}) = \lambda_{a,b}^{-1} \, P_{a,b}.
\end{equation}
The exact eigenvalues $\lambda_{a,b}$ of the shadow channel can either be computed explicitly for concrete examples (Sec.~\ref{sec:example-distributions}) or numerically for arbitrary distributions $p$ using a fast Hadamard transform.

The number of samples required to estimate the expectation value of an observable $A$ using classical shadows depends on the shadow norm $\norm{A}_{\mathrm{sh}}$ with respect to the underlying ensemble. In the case of the local Pauli ensemble and the global Clifford ensemble, their shadow norms can be bounded by
\begin{equation}
    \norm{A}_{\mathrm{sh,P}}^2
    \leq
    4^{k} \cdot \norm{A}^2_\infty,
    \qquad
    \norm{A}_{\mathrm{sh,C}}^2
    \leq
    3 \cdot \Tr(A^2)
    \leq
    3 \cdot 2^n \cdot \norm{A}^2_\infty,
\end{equation}
respectively, where $k$ denotes the locality of $A$. In particular, the local Pauli ensemble is efficient for $k$-local observables, while the global Clifford ensemble is more efficient for general or low-rank observables.

We call an observable $S$-sparse for a subset $S \subseteq \{0,1\}^n$ if it is supported only on Pauli operators $P_{a,b}$ with $a \in S$. Our second main result provides a general bound on the shadow norm of the \NAME ensemble for strictly positive distributions $p$ and $S$-sparse observables.

\begingroup
\let\thetheorem2
\begin{theorem}[Thm.~\ref{thm:main-shadow-bound}]
Let $S \subseteq \{0,1\}^n$ be non-empty and $A$ be an $S$-sparse Hermitian operator. Then the shadow norm of the \NAME ensemble with a strictly positive distribution $p$ is bounded by
\begin{equation}
    \norm{A}_{\mathrm{sh}}^2
    \leq
    (6 + \pi) \mleft[\max_{s \in S} \frac{1}{p(s)}\mright] \norm{A}^2_\infty.
\end{equation}
\end{theorem}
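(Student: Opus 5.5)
The plan is to bound the shadow norm directly from its variational definition,
\[
\norm{A}_{\mathrm{sh}}^2=\max_\sigma \sum_{k,s}\tfrac12 p(s)\sum_{\psi}\bra{\psi}\sigma\ket{\psi}\,\bra{\psi}\MMM^{-1}(A)\ket{\psi}^2 ,
\]
where $\psi$ ranges over the measurement basis selected by $U_{k,s}$: GHZ states $\GGG_{k,m}$ on $\operatorname{supp}(s)$ tensored with computational states on the complement. Group the Pauli expansion of $A$ by $X$-support, $A=\sum_{a\in S}A_a$, where $A_a$ collects the terms $P_{a,b}$. By Theorem 1, $\MMM^{-1}(A)=\MMM^{-1}(A_0)+\sum_{a\ne 0}\frac{2}{p(a)}A_a$, with $A_0$ absent unless $0\in S$. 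We will also use that $A_a=\EE_{c}(-1)^{a\cdot c}Z_cAZ_c$, which gives $\norm{A_a}_\infty\le\norm{A}_\infty$.

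The first structural step is a support lemma. For a basis state $\psi$ of $U_{k,s}$, $\bra{\psi}P_{a,b}\ket{\psi}=0$ unless $a\in\{0,s\}$. The reason is that the stabilizer group of $\psi$ contains only $X$-parts $0$ and $s$. Hence
\[
\bra{\psi}\MMM^{-1}(A)\ket{\psi}=\bra{\psi}\MMM^{-1}(A_0)\ket{\psi}+\tfrac{2}{p(s)}\bra{\psi}A_s\ket{\psi}\,[s\in S,\ s\neq0].
\]
Applying $(u+v)^2\le 2u^2+2v^2$ splits the shadow norm into a diagonal contribution and an off-diagonal contribution, which I treat separately.

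For the off-diagonal contribution, note that for GHZ-type $\psi=\tfrac1{\sqrt2}(\ket{x}\pm\ii^k\ket{x\oplus s})$, the quantity $\bra{\psi}A_s\ket{\psi}$ equals $\pm\mathrm{Re}(\ii^{-k}\bra{x}A\ket{x\oplus s})$ up to a factor of order one. This holds because $A_s$ is exactly the part of $A$ connecting $x$ to $x\oplus s$. The weights $\tfrac12p(s)\cdot 4/p(s)^2$ leave a factor $2/p(s)\le 2\max_{s\in S}1/p(s)$. The remaining sum over $s$, $k$ and $\psi$ is then bounded by $\sum_x \bra{x}\sigma\ket{x}$-type weights times $\sum_s|\bra{x}A\ket{x\oplus s}|^2$, up to cross terms controlled by Cauchy--Schwarz. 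Since $\sum_y|\bra{x}A\ket{y}|^2=\bra{x}A^2\ket{x}\le\norm{A}_\infty^2$, this Parseval-type step removes the sum over $S$ and gives a constant times $\max_{s\in S}p(s)^{-1}\norm{A}_\infty^2$.

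For the diagonal contribution, $A_0$ is diagonal, say $A_0=\diag(f)$, and $\MMM^{-1}$ acts on it as the Fourier multiplier $2/(1+\widehat p(b))$. Writing $T$ for the Markov convolution operator $Tf(y)=\sum_x p(x)f(y\oplus x)$, we have $\MMM^{-1}(A_0)=\diag\bigl(2(1+T)^{-1}f\bigr)$. The same is true restricted to any GHZ basis, where only the $Z$-type Paulis survive. The main obstacle is bounding $\norm{(1+T)^{-1}}_{\infty\to\infty}$, because $\widehat p(b)$ may approach $-1$. Here I would use strict positivity of $p$ together with semigroup methods. Write $1+T=2-L$ with $L=1-T$ the generator of the Markov semigroup $e^{-tL}$, which is contractive in sup-norm. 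Then express $(2-L)^{-1}$, or rather its relevant square, through a Laplace/Cauchy-type integral of $e^{-tL}$ along a contour in the resolvent set. I expect that this integral produces the constant $\pi$, with the remaining $6$ coming from the off-diagonal term and the factor $2$ in the split. The weight $\tfrac12\cdot p(s)$ summed over $s$ makes this term independent of $S$, and $1\le\max_{s\in S}p(s)^{-1}$ lets it be absorbed into the stated bound.
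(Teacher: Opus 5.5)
\textbf{Where the plan breaks: the diagonal sector.} Your reduction matches the paper's: the support lemma, the split $(u+v)^2\le 2u^2+2v^2$, and the off-diagonal sectors. The off-diagonal part is fine, and you do not even need Cauchy--Schwarz there. Summing over the two signs $\pm$ cancels the off-diagonal entries of $\sigma$ exactly, and summing over $k\in\{0,1\}$ turns $\mathrm{Re}(\ii^k\bra{x}A\ket{x+s})^2$ into $|\bra{x}A\ket{x+s}|^2$. This is the paper's identity $\sum_{k,s}A_{k,s}^2=\diag(A^2)$, and it gives the constant $4$; a lossy Cauchy--Schwarz would leave no room for $6+\pi$. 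The diagonal sector fails in three ways.

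(i) Your conclusion that this term is at most an absolute constant times $\|A\|_\infty^2$, absorbed via $1\le\max_{s\in S}p(s)^{-1}$, is false. Take $A=Z^{\otimes n}=P_{\zero,\one}$ and a strictly positive $p$ with total mass $\eta$ on even-weight strings. Theorem~\ref{thm:main-channel-eigenvalues} gives $\lambda_{\zero,\one}=\eta$, so $\|A\|_{\mathrm{sh}}^2=1/\eta$. All of this is diagonal contribution, and it is unbounded as $\eta\to 0$. Likewise $\|(1+T)^{-1}\|_{\infty\to\infty}\ge 1/(2\eta)$. The diagonal term genuinely scales like $1/p(\zero)$, and it is absorbed only because $A_0\neq0$ forces $\zero\in S$.

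(ii) Bounding $\|(1+T)^{-1}\|_{\infty\to\infty}$ is the wrong target. The best general estimate is $1/(2p(\zero))$, from the reverse triangle inequality. Inserting the resulting $\|g\|_\infty\le\|A_0\|_\infty/p(\zero)$ into $\frac14\sum_s p(s)(g(x)+g(x+s))^2\le\|g\|_\infty^2$ only yields $p(\zero)^{-2}$. You must exploit cancellation in $g(x)+g(x+s)$.

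(iii) Your semigroup has the wrong sign. Representing $(2-L)^{-1}$ would require $e^{tL}$, whose sup norm can be $e^{2t}$, and a contour integral gives no $\ell^\infty$ control. On $\ell^\infty(\ZZ_2^n)$, $1+T$ is not of the form $\lambda+(\text{Markov generator})$ with $\lambda>0$, because the translations enter with a plus sign.

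\textbf{The missing idea: the sign-corrected double cover.} On $\ZZ_2^n\times\ZZ_2$, put weight $p(s)$ between $(x,\varepsilon)$ and $(x+s,\varepsilon+1)$ for $s\neq\zero$, and lift $\widetilde f(x,\varepsilon)=(-1)^\varepsilon f(x)$. Then $\widetilde{(1+T)g}=(2p(\zero)+\Delta)\widetilde g$, where $\Delta$ is a genuine graph Laplacian. The quantity you actually need splits as $\LLL g=\EE_{k,s}[(\MMM_{k,s}g)^2]=p(\zero)g^2+\frac12\EEE\widetilde g$, where $\EEE$ is the carr\'e du champ of $e^{-t\Delta}$.
\begin{itemize}
\item The first piece is at most $\|\MMM g\|_\infty^2/p(\zero)$, since $\|\MMM g\|_\infty\ge p(\zero)\|g\|_\infty$.
\item For the second piece, write $\widetilde g=\int_0^\infty e^{-2p(\zero)t}e^{-t\Delta}h\,\dd t$ with $h=2\widetilde{\MMM g}$ and apply Minkowski's inequality. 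Then use the reverse Poincar\'e inequality $\Gamma(e^{-t\Delta}h)\le\|h\|_\infty^2/(2t)$. It holds because translations commute with $\Delta$, which gives $\Gamma(P_tf)\le P_t(\Gamma f)$. Finally use $\int_0^\infty e^{-\lambda t}t^{-1/2}\,\dd t=\sqrt{\pi/\lambda}$.
\end{itemize}
Together these give $\|\LLL g\|_\infty\le\frac{1+\pi/2}{p(\zero)}\|A_0\|_\infty^2$. After the factor $2$ from the split this is $(2+\pi)/p(\zero)$, and adding the off-diagonal $4$ yields the constant $6+\pi$.
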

\endgroup

The proof of the above theorem relies on discrete Fourier analysis and provides a novel application of Markov semigroup theory to bounding shadow norms. In particular, we show that the shadow norm of diagonal operators can be expressed in terms of a Dirichlet form on a double-cover of $\ZZ_2^n$, which can then be bounded using a reverse Poincaré inequality formulated in terms of the carré du champ operator of the heat semigroup on this graph (Sec.~\ref{sec:proof}).

In the case of the uniform distribution $p(x) = \frac{1}{2^n}$, the above theorem implies that the shadow norm of an arbitrary Hermitian operator $A$ is bounded by
\begin{equation}
    \norm{A}_{\mathrm{sh}}^2
    \leq
    (6 + \pi) \cdot 2^n \cdot \norm{A}^2_\infty.
\end{equation}
This bound agrees with the worst-case bound for the global Clifford ensemble up to a constant factor, and this scaling is optimal in the worst case for shadow protocols based on single-copy measurements; see~\cite[Thm.~2]{huang2020predicting}. However, for $S$-sparse observables with $|S| \ll 2^n$, significant improvements are possible (Ex.~\ref{ex:uniform-on-S}).

A special class of $S$-sparse operators consists of all $k$-$X/Y$-local operators, where each Pauli operator in the support has at most $k$ tensor factors belonging to $\{X, Y\}$. This class includes $k$-local operators, but also other physically relevant observables arising, for example, in the context of fermionic modes under the Jordan--Wigner transformation~\cite{zhao2021fermionic} or from long-range Kitaev chains~\cite{vodola2014kitaev}.

By choosing a subset from a binomial distribution with parameter $q = \frac{1}{n+1}$, we obtain the following shadow-norm bound for $k$-$X/Y$-local operators (Ex.~\ref{ex:binomial-shadow-norm}):
\begin{equation}
    \norm{A}_{\mathrm{sh}}^2
    \leq
    (6 + \pi) \cdot e \cdot n^k \cdot \norm{A}^2_\infty.
\end{equation}
This bound is polynomial in $n$ for fixed $k$, which is a significant improvement over the exponential worst-case scaling of the global Clifford ensemble bound. While our scaling is worse than $4^k$ for $k$-local operators using the local Pauli ensemble, we can obtain an exponential improvement over the local Pauli ensemble bound for observables such as $Z \otimes \dots \otimes Z$ that are $k$-$X/Y$-local but not $k$-local. If the $X/Y$-locality of the observable is known in advance, the parameter $q$ can be adjusted to further optimize the constant factor (Ex.~\ref{ex:binomial-tuned}).
Moreover, the same polynomial guarantee applies to the complementary $X/Y$-locality class with at most $k$ tensor factors lying in $\{I,Z\}$, which contains $\boldsymbol{\Theta}(2^n)$ non-trivial Pauli observables.

We numerically investigate the corresponding sample complexity for several values of $k$, using an appropriately chosen $q$. We further demonstrate favorable sample complexity for estimating $k$-body reduced density matrices of $n$-mode fermionic states. This setting also illustrates the flexibility of \NAME shadows, as adapting the subset distribution $p$ to the structure of the target observables substantially improves practical performance.

\subsection{Overview}\label{sec:overview}

The remainder of the paper is organized as follows.
Section~\ref{sec:preliminaries} introduces notation and the
classical shadow formalism, before recalling basics from the theory of Clifford shadows and Markov semigroups.
Section~\ref{sec:ensemble} defines the \NAME ensemble and finds its shadow channel.
Section~\ref{sec:channel-spectrum} computes the exact spectrum of the
\NAME shadow channel (Theorem~\ref{thm:main-channel-eigenvalues}) and illustrates it
for several choices of probability distributions. Section~\ref{sec:shadow-norm} develops
shadow-norm bounds for structured observables, with the proof of the
main bound (Theorem~\ref{thm:main-shadow-bound}) given in
Section~\ref{sec:proof}, and studies the special class of $X/Y$-local observables. Section~\ref{sec:experiments} reports
numerical experiments validating the predicted scaling, and
Section~\ref{sec:discussion} discusses these results in the context of
related work and outlines open questions.

\section{Preliminaries}\label{sec:preliminaries}

We begin by introducing some notation and basic definitions.
Throughout the paper, let $n \geq 1$ be an integer and consider $n$ qubits on the Hilbert space $H = (\CC^2)^{\otimes n}$ with
computational basis $\ket{x}$ for $x \in \{0, 1\}^n$. If $A$ is a linear operator on $H$, we denote by $\norm{A}_\infty$ its spectral norm and by $\norm{A}_1$ its trace norm.
Moreover, for two Hermitian operators $A$ and $B$, we write $A \preceq B$ if $B - A$ is positive semidefinite.

Consider the finite abelian group $\ZZ_2^n$. We identify elements of $\ZZ_2^n$ with bitstrings in $\{0, 1\}^n$ and
write $+$ for bitwise addition modulo $2$. Additionally, we write $\zero = (0, \dots, 0)$, $\one = (1, \dots, 1)$, and denote by $\wt(x) = \sum_{i=1}^n x_i$ the \emph{Hamming weight} of $x \in \{0, 1\}^n$.
Let $a, b \in \{0, 1\}^n$. Then we define their ordinary integer dot product $a \cdot b = \sum_{i=1}^n a_i b_i$ and the \emph{characters}
\begin{equation}\label{qe:characters}
    \chi_a \colon \{0,1\}^n \to \{\pm 1\}, \quad
    \chi_a(x) = (-1)^{a \cdot x}.
\end{equation}
Note that characters satisfy various algebraic properties, such as
\begin{equation}\label{qe:character-properties}
    \chi_a(b) = \chi_b(a), \qquad
    \chi_{\zero}(a) = 1, \qquad
    \chi_{a+b}(c) = \chi_a(c) \chi_b(c),
\end{equation}
and are orthogonal with respect to averaging, i.e.,
\begin{equation}\label{qe:character-orthogonality}
    \frac{1}{2^n}\sum_{x \in \{0,1\}^n} \chi_a(x) \chi_b(x) = \begin{cases}
        1, & \text{if $a = b$}, \\
        0, & \text{otherwise}.
    \end{cases}
\end{equation}
For more information and the group-theoretic background, we refer to~\cite{serre1977linear}.

Throughout the paper, we denote by $I, X, Y, Z$ the single-qubit Pauli matrices
\begin{equation}
    I = \begin{pmatrix}
        1 & 0 \\
        0 & 1
    \end{pmatrix},
    \quad
    X = \begin{pmatrix}
        0 & 1 \\
        1 & 0
    \end{pmatrix},
    \quad
    Y = \begin{pmatrix}
        0 & -\ii \\
        \ii & 0
    \end{pmatrix},
    \quad
    Z = \begin{pmatrix}
        1 & 0 \\
        0 & -1
    \end{pmatrix},
\end{equation}
and we define the $n$-qubit \emph{Pauli group}
\begin{equation}
    \PPP_n = \{ \pm 1, \pm \ii \} \cdot \{I, X, Y, Z\}^{\otimes n}.
\end{equation}
We say that a Pauli operator $P \in \PPP_n$ is of \emph{$Z$-type} if $P \in \{\pm 1, \pm \ii\} \cdot \{I, Z\}^{\otimes n}$, or equivalently, if $P$ is diagonal in the computational basis.
For any $a, b \in \{0, 1\}^n$, we define the Pauli operator
\begin{equation}\label{eq:symmplectic-Pauli}
    P_{a,b} = \bigotimes_{k=1}^n \ii^{a_k b_k} X^{a_k} Z^{b_k}.
\end{equation}
Then each $P_{a,b}$ is Hermitian and the set $\{P_{a,b}\}_{a,b \in \{0, 1\}^n}$ agrees with the
standard Pauli basis $\{I, X, Y, Z\}^{\otimes n}$. In particular, any operator $A$ can be expressed as
\begin{equation}
    A = \sum_{a, b \in \{0, 1\}^n} \alpha_{a,b} P_{a,b}, \qquad
    \alpha_{a,b} = \frac{1}{2^n} \Tr(P_{a,b} A).
\end{equation}
Moreover, the action of $P_{a,b}$ on computational basis states is given by
\begin{equation}\label{eq:Pauli-basis-action}
    P_{a,b} \ket{x} = \ii^{a \cdot b} \, \chi_{b}(x) \, \ket{x + a},
\end{equation}
and the product of two Pauli operators can be expressed as
\begin{equation}\label{eq:Pauli-product}
    P_{a,b} \cdot P_{c,d}
    = \ii^{a \cdot b} \, \ii^{c \cdot d} \, (-\ii)^{(a + c) \cdot (b + d)} \, \chi_{b}(c) \, P_{(a + c, b + d)}.
\end{equation}
Thus, the mapping $(a, b) \to P_{a,b}$ respects the group structure up to phase factors.
See~\cite{calderbank1997quantum,aaronson2004improved} for more details on this symplectic representation of Pauli operators
and its applications in quantum error correction and in the context of stabilizer circuits. For a general introduction to quantum information, we refer to~\cite{nielsen2010quantum,watrous2018theory}.

\subsection{Classical shadows}\label{sec:classical-shadows}

In the following, we introduce the basic concepts of classical shadows developed in~\cite{huang2020predicting}. We will keep the presentation short and focus on the technical aspects that are relevant to our work. For more details on applications of classical shadows, the use of different unitary ensembles, and other related work, we refer to Section~\ref{sec:related-work}.

The basic idea of classical shadows is to compute a classical representation of a quantum state $\rho$ by performing randomized measurements on multiple copies of $\rho$.
These measurements are defined by a \emph{unitary ensemble} $\{U_i\}_{i \in I}$ together with a probability distribution $p \colon I \to [0, 1]$. In the following, we
restrict ourselves to ensembles where the index set $I$ is countable and the probability distribution is discrete.

Two examples of such unitary ensembles, both studied in~\cite{huang2020predicting}, are the
\emph{local Pauli ensemble}, in which each qubit is measured independently and uniformly in one of the three Pauli eigenbases, and the
\emph{global Clifford ensemble}, in which a Clifford
unitary modulo global phase is sampled uniformly before measuring in the computational basis. Given a unitary ensemble $\{U_i\}_{i \in I}$ with a corresponding probability distribution, its \emph{shadow channel} $\MMM$ is defined by
\begin{equation}\label{eq:shadow-channel}
    \MMM(\rho) = \EE_{i \in I} \mleft[ \sum_{x \in \{0, 1\}^{n}} \bra{x} U_i \rho U_i^\dagger \ket{x} \cdot U_i^\dagger \ketbra{x} U_i\mright].
\end{equation}
We call a unitary ensemble \emph{tomographically complete} if the corresponding shadow channel $\MMM$ is invertible. In this case, we
can run the classical shadow protocol (Algorithm~\ref{alg:classical-shadow}) to compute a classical shadow $\widehat{\rho}$ of a quantum state $\rho$.

\begin{algorithm}[htbp]
  \caption{Classical shadow protocol}
  \label{alg:classical-shadow}
  \DontPrintSemicolon
  \SetAlgoLined

  \KwIn{%
  \begin{tabular}[t]{@{}l@{}}
    $N$ copies of a quantum state $\rho$, \\
    a unitary ensemble $\{U_i\}_{i \in I}$ together with a probability distribution.
  \end{tabular}%
    }
\BlankLine
  \KwOut{Classical shadow $\widehat{\rho}$ of size $N$.}

  \BlankLine
  \For{$j \gets 1$ \KwTo $N$}{
    Sample $i \in I$ according to the probability distribution.\;

    Apply $U_i$ to a fresh copy of $\rho$.\;

    Measure in the computational basis to obtain $x_j$.\;

    Compute $
      \widehat{\rho}_j
      =
      \MMM^{-1}\!\mleft(
        U_i^\dagger
        \ketbra{x_j}{x_j}
        U_i
      \mright)
    $.\;
  }
  \BlankLine
  \Return{$\widehat{\rho}
    = (\widehat{\rho}_1,\ldots,\widehat{\rho}_N)$}\;
\end{algorithm}

Note that we need to be able to efficiently sample from the unitary ensemble, compute the inverse of the shadow channel $\MMM^{-1}$ and store the resulting classical shadow $\widehat{\rho}$ in classical memory in order to perform the classical shadow protocol efficiently.
In this case, we can use a median-of-means estimator in a second step to efficiently estimate expectation values of observables $A$ with respect to the quantum state $\rho$ without having to perform full quantum state tomography.

\begin{proposition}[\cite{huang2020predicting}]\label{prop:shadow-norm-sample-complexity}
Let $\rho$ be a quantum state and consider $M$ observables $A_1, \dots, A_M$. Then the classical shadow $\widehat{\rho}$ of size $N$ can be used to predict the expectations
$\Tr(A_1\rho)$, \dots, $\Tr(A_M\rho)$ up to additive error $\varepsilon > 0$ with probability at least $1 - \delta$ given that
\begin{equation}\label{eq:shadow-norm-sample-complexity}
    N \gtrsim \log(\frac{M}{\delta}) \cdot \frac{1}{\varepsilon^2} \cdot \max_{i=1,\ldots,M} \|A_i\|_{\mathrm{sh}}^2.
\end{equation}
\end{proposition}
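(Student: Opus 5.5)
The plan is to follow the standard median-of-means analysis from~\cite{huang2020predicting}, since this proposition is exactly their main sample-complexity theorem specialized to our notation. For an observable $A$, each snapshot $\widehat{\rho}_j$ yields the single-shot estimator
\begin{equation}
    \widehat{o}_j = \Tr(A \widehat{\rho}_j), \qquad \widehat{\rho}_j = \MMM^{-1}\bigl(U_i^\dagger \ketbra{x_j}{x_j} U_i\bigr).
\end{equation}
The argument then proceeds in three steps: unbiasedness, a variance bound via the shadow norm, and concentration via median-of-means plus a union bound.

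\textbf{Unbiasedness.} Averaging over $i$ and over the Born-rule outcome $x_j$ gives
\begin{equation}
    \EE[U_i^\dagger \ketbra{x_j}{x_j} U_i] = \MMM(\rho)
\end{equation}
by the definition~\eqref{eq:shadow-channel}. Since $\MMM^{-1}$ is linear, $\EE[\widehat{\rho}_j] = \rho$ and hence $\EE[\widehat{o}_j] = \Tr(A\rho)$.

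\textbf{Variance bound.} Using self-adjointness of $\MMM$ (and hence of $\MMM^{-1}$) with respect to the Hilbert--Schmidt inner product, write $\widehat{o}_j = \bra{x_j} U_i \MMM^{-1}(A) U_i^\dagger \ket{x_j}$. This gives
\begin{equation}
    \mathrm{Var}(\widehat{o}_j) \le \EE\bigl[\widehat{o}_j^2\bigr] = \EE_i \sum_x \bra{x} U_i\rho U_i^\dagger\ket{x}\,\bra{x} U_i \MMM^{-1}(A) U_i^\dagger \ket{x}^2.
\end{equation}
I then maximize over states $\rho$, which yields the shadow norm $\norm{A}_{\mathrm{sh}}^2$ as defined in~\cite{huang2020predicting}. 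The variance of the traceless part $A - \Tr(A)I/2^n$ only makes this cleaner.

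\textbf{Concentration.} Split the $N$ snapshots into $K$ batches of size $L$ and take the median of the batch means. By Chebyshev, each batch mean deviates from $\Tr(A\rho)$ by more than $\varepsilon$ with probability at most $\norm{A}_{\mathrm{sh}}^2/(L\varepsilon^2)$. Choosing
\begin{equation}
    L \approx 34\,\norm{A}_{\mathrm{sh}}^2/\varepsilon^2
\end{equation}
makes this probability at most $1/4$. The median fails only if at least half the batches fail, and by a Hoeffding/Chernoff bound this happens with probability at most $e^{-cK}$. Taking $K \approx 2\log(2M/\delta)$ and a union bound over the $M$ observables gives overall failure probability at most $\delta$. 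The same snapshots serve every $A_i$ with $L$ set by $\max_i \norm{A_i}_{\mathrm{sh}}^2$, so $N = KL$ matches~\eqref{eq:shadow-norm-sample-complexity}.

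The main point needing care is the variance step: the snapshots $\widehat{\rho}_j$ are not positive and can have large norm, so the second moment must be expressed through the dual action of $\MMM^{-1}$ on $A$ rather than bounded by operator norms of $\widehat{\rho}_j$. Once that is phrased as the shadow norm, everything else is routine.
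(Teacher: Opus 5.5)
Your proposal is correct and takes the same approach the paper relies on: the paper gives no proof of its own and simply cites the median-of-means argument of \cite{huang2020predicting}, which is the argument you reproduce (unbiasedness from the definition of $\MMM$, the second-moment bound $\EE[\widehat{o}_j^2] \le \norm{A}_{\mathrm{sh}}^2$ via Hilbert--Schmidt self-adjointness of $\MMM^{-1}$, which also justifies stating the result with $\norm{A_i}_{\mathrm{sh}}$ rather than the traceless part, then Chebyshev per batch, a Chernoff bound over batches, and a union bound over the $M$ observables). The only looseness is in the constants, which the $\gtrsim$ absorbs: a per-batch failure probability of $1/4$ combined with Hoeffding gives only $e^{-K/8}$, so to pair it with $K \approx 2\log(2M/\delta)$ you need the sharper $1/34$ bound, which gives at most $2^K 34^{-K/2} \le e^{-K/2}$, as Huang et al.\ do.
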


The quantity $\|A\|_{\mathrm{sh}}$ in Eq.~\eqref{eq:shadow-norm-sample-complexity} is called the \emph{shadow norm} of $A$ and depends on the unitary ensemble $\{U_i\}_{i \in I}$
as well as the underlying probability distribution. Denote by $\DDD$ the set of all density matrices. Then the shadow norm is defined by
\begin{equation}\label{eq:shadow-norm}
    \norm{A}_{\mathrm{sh}}^2 = \sup_{\sigma \in \DDD} \EE_{i\in I} \mleft[\sum_{x \in \{0, 1\}^{n}} \bra{x}U_i \sigma U_i^\dagger \ket{x} \bigl(\bra{x}U_i \MMM^{-1}(A) U_i^\dagger \ket{x}\bigr)^2\mright].
\end{equation}

By introducing auxiliary channels $\MMM_i$, both the shadow channel and the shadow norm can be expressed in a more compact way. Moreover, these auxiliary channels will be useful in the analysis of our
\NAME shadow channel in Section~\ref{sec:channel-spectrum} and Section~\ref{sec:shadow-norm}. Let $\{U_i\}_{i \in I}$ be a unitary ensemble.
Then we define the quantum channels $\MMM_i$ by
\begin{equation}\label{eq:aux-channels}
    \MMM_i(\rho) = \sum_{x \in \{0, 1\}^{n}} \bra{x} U_i \rho U_i^\dagger \ket{x} U_i^\dagger \ketbra{x} U_i.
\end{equation}
It follows directly from Eq.~\eqref{eq:shadow-channel} that the shadow channel can be expressed as an expectation
\begin{equation}
    \MMM(\rho) = \EE_{i \in I} \bigl[ \MMM_i(\rho) \bigr].
\end{equation}
Additionally, the following proposition shows that the shadow norm can also be expressed more compactly in terms of the channels $\MMM_i$.

\begin{proposition}\label{prop:shadow-norm-short}
Let $\{U_i\}_{i \in I}$ be a tomographically complete unitary ensemble and $A$ be a Hermitian operator. Then the shadow norm can be written as
\begin{equation}
    \norm{A}_{\mathrm{sh}}^2 = \sup_{\sigma \in \DDD} \Tr(\sigma \EE_{i \in I} \bigl[ \MMM_i(\MMM^{-1}(A))^2 \bigr] ).
\end{equation}
\end{proposition}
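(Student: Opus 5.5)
The plan is a direct unfolding of the definition in Eq.~\eqref{eq:shadow-norm}, written in terms of the auxiliary channels $\MMM_i$ from Eq.~\eqref{eq:aux-channels}. Set $B = \MMM^{-1}(A)$, which is well defined because the ensemble is tomographically complete. Since $\MMM$ maps Hermitian operators to Hermitian operators (each $\MMM_i$ is a mixture of conjugations of projectors with real weights), $\MMM^{-1}$ also preserves Hermiticity. Hence $B$ is Hermitian, and each scalar $\beta_{i,x} = \bra{x} U_i B U_i^\dagger \ket{x}$ is real.

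The key step is to compute $\MMM_i(B)^2$. By definition,
\begin{equation*}
    \MMM_i(B) = \sum_{x} \beta_{i,x}\, U_i^\dagger \ketbra{x} U_i .
\end{equation*}
The operators $\Pi_{i,x} = U_i^\dagger \ketbra{x} U_i$ are mutually orthogonal projectors: they satisfy $\Pi_{i,x}\Pi_{i,y} = \delta_{xy}\Pi_{i,x}$ because $\braket{x}{y} = \delta_{xy}$ after canceling $U_i U_i^\dagger = I$. Squaring therefore gives
\begin{equation*}
    \MMM_i(B)^2 = \sum_x \beta_{i,x}^2\, \Pi_{i,x}.
\end{equation*}

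Next, take the trace against a density matrix $\sigma$:
\begin{equation*}
    \Tr\bigl(\sigma\, \Pi_{i,x}\bigr) = \bra{x} U_i \sigma U_i^\dagger \ket{x}.
\end{equation*}
This yields
\begin{equation*}
    \Tr\bigl(\sigma\, \MMM_i(B)^2\bigr) = \sum_x \bra{x} U_i \sigma U_i^\dagger \ket{x}\, \bigl(\bra{x} U_i B U_i^\dagger \ket{x}\bigr)^2 ,
\end{equation*}
which is exactly the summand inside the expectation in Eq.~\eqref{eq:shadow-norm}. The reality of $\beta_{i,x}$ ensures that $\beta_{i,x}^2$ matches the square written there.

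Finally, linearity of the trace lets us interchange $\Tr(\sigma\,\cdot\,)$ with $\EE_{i\in I}$. For a discrete distribution over a countable index set this is justified, since the sum converges absolutely: each term is bounded by $\norm{\MMM_i(B)}_\infty^2 \le \norm{B}_\infty^2$. Taking the supremum over $\sigma \in \DDD$ gives the claimed identity.

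There is no real obstacle here. The only points requiring care are that $B$ is Hermitian, so that the squared diagonal entries are real and coincide with the squared eigenvalue coefficients, and the exchange of trace and expectation when $I$ is infinite.
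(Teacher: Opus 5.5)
Your proposal is correct and follows the same route as the paper. Both rewrite $\bra{x}U_i\sigma U_i^\dagger\ket{x}$ as $\Tr(\sigma\, U_i^\dagger\ketbra{x}U_i)$ and use the mutual orthogonality of the projectors $\Pi_{i,x}=U_i^\dagger\ketbra{x}U_i$ to identify $\sum_x \beta_{i,x}^2\,\Pi_{i,x}$ with $\MMM_i(\MMM^{-1}(A))^2$; you simply run the computation in the opposite direction and add a justification for exchanging trace and expectation over countable $I$, which the paper omits. (The Hermiticity of $B$ is harmless but not needed for the termwise identity, since $(\sum_x \beta_{i,x}\Pi_{i,x})^2=\sum_x \beta_{i,x}^2\Pi_{i,x}$ holds for complex coefficients and the definition uses a plain square.)
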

\begin{proof}
According to Eq.~\eqref{eq:shadow-norm}, we have
\begin{equation}
    \norm{A}_{\mathrm{sh}}^2 = \sup_{\sigma \in \DDD} \EE_{i \in I} \mleft[\sum_{x \in \{0, 1\}^{n}} \bra{x}U_i \sigma U_i^\dagger \ket{x} \bigl(\bra{x}U_i \MMM^{-1}(A) U_i^\dagger \ket{x}\bigr)^2\mright].
\end{equation}
Using the fact that $\bra{x}U_i \sigma U_i^\dagger \ket{x} = \Tr(\sigma U_i^\dagger \ketbra{x} U_i)$, we can rewrite the shadow norm as
\begin{align*}
    \norm{A}_{\mathrm{sh}}^2 &= \sup_{\sigma \in \DDD} \Tr(\sigma \EE_{i \in I} \mleft[ \sum_{x \in \{0, 1\}^{n}}  U_i^\dagger \ketbra{x} U_i \bigl(\bra{x}U_i \MMM^{-1}(A) U_i^\dagger \ket{x}\bigr)^2 \mright]).
\end{align*}
Since the projections $U_i^\dagger \ketbra{x} U_i$ are orthogonal, it follows that
\begin{multline}
    \sum_{x \in \{0, 1\}^{n}} U_i^\dagger \ketbra{x} U_i \bigl(\bra{x}U_i \MMM^{-1}(A) U_i^\dagger \ket{x}\bigr)^2 \\
    = \mleft( \sum_{x \in \{0, 1\}^{n}} U_i^\dagger \ketbra{x} U_i \bra{x}U_i \MMM^{-1}(A) U_i^\dagger \ket{x} \mright)^2
    = \MMM_{i}(\MMM^{-1}(A))^2.
\end{multline}
Therefore, we have
\begin{equation}
    \norm{A}_{\mathrm{sh}}^2 =\sup_{\sigma \in \DDD} \Tr(\sigma \EE_{i \in I} \bigl[ \MMM_{i}(\MMM^{-1}(A))^2 \bigr] ).
\end{equation}
\end{proof}

\subsection{Clifford ensembles}\label{sec:clifford-ensembles}

Recall that the $n$-qubit \emph{Clifford group} $\CCC_n$ is given by the normalizer of the Pauli group inside the unitary group, i.e.,
\begin{equation}\label{eq:clifford-group}
    \CCC_n = \{ U \colon U \PPP_n U^\dagger = \PPP_n \}.
\end{equation}
We call a unitary ensemble $\{U_i\}_{i \in I}$ a \emph{Clifford ensemble} if $U_i \in \CCC_n$ for all $i \in I$.

In the following, we state some general facts about classical shadows defined by Clifford unitaries. In particular, we show that, in this setting,
the shadow channel is diagonal in the Pauli basis and the shadow norm of a Pauli operator can be expressed in terms of its corresponding eigenvalue.
These properties are well-known and appear throughout the classical shadow literature in the context of Pauli and Clifford shadows; see, e.g.,~\cite{huang2020predicting,bertoni2024shallow,bu2024classical}. However, we give short proofs of all the statements below to keep the presentation self-contained.

\begin{proposition}\label{prop:M-eigenvalues}
Let $\{U_i\}_{i \in I}$ be a Clifford unitary ensemble and $P \in \PPP_n$ be a Pauli operator. Then
\begin{equation}
    \MMM_{i}(P) = \begin{cases}
        P, & \text{if $U_i P U_i^\dagger$ is of $Z$-type}, \\
        0, & \text{otherwise}.
    \end{cases}
\end{equation}
Moreover, the shadow channel $\MMM$ is diagonal in the Pauli basis, i.e.,
\begin{equation}
    \MMM(P) = \lambda_P P, \qquad \lambda_P = \PP_{i \in I}[\text{$U_i P U_i^\dagger$ is of $Z$-type}].
\end{equation}
\end{proposition}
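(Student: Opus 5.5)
We work directly from the definition of the auxiliary channel $\MMM_i$ in Eq.~\eqref{eq:aux-channels}. Fix $i$ and write $Q = U_i P U_i^\dagger$. Since $U_i \in \CCC_n$, Eq.~\eqref{eq:clifford-group} gives $Q \in \PPP_n$, so $Q = \omega P'$ with $\omega \in \{\pm 1, \pm \ii\}$ and $P' \in \{I,X,Y,Z\}^{\otimes n}$. Up to phase, $P'$ equals some $P_{a,b}$. The coefficient appearing in $\MMM_i(P)$ is
\begin{equation*}
    \bra{x} U_i P U_i^\dagger \ket{x} = \bra{x} Q \ket{x}.
\end{equation*}
By Eq.~\eqref{eq:Pauli-basis-action}, $P_{a,b}\ket{x} = \ii^{a\cdot b}\chi_b(x)\ket{x+a}$, so this diagonal entry vanishes for every $x$ unless $a = \zero$, that is, unless $Q$ is of $Z$-type. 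This settles the case where $Q$ is not of $Z$-type: then $\MMM_i(P) = 0$.

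If $Q$ is of $Z$-type, it is diagonal in the computational basis, so $Q = \sum_x \bra{x}Q\ket{x}\,\ketbra{x}$. Hence
\begin{equation*}
    \MMM_i(P) = \sum_x \bra{x}Q\ket{x}\, U_i^\dagger \ketbra{x} U_i
    = U_i^\dagger \Bigl(\sum_x \bra{x}Q\ket{x}\ketbra{x}\Bigr) U_i
    = U_i^\dagger Q U_i = P.
\end{equation*}
Nothing here needs $P$ Hermitian, since $\MMM_i$ is linear and the formula holds for all $P \in \PPP_n$, including those with phases.

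For the second claim, we average over $i$ using $\MMM = \EE_i[\MMM_i]$. Each $\MMM_i(P)$ equals either $P$ or $0$, so
\begin{equation*}
    \MMM(P) = \EE_i\bigl[\mathbbm{1}[U_iPU_i^\dagger \text{ is of } Z\text{-type}]\bigr]\, P = \lambda_P P.
\end{equation*}
Because $\{P_{a,b}\}$ is a basis, $\MMM$ is diagonal in the Pauli basis.

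The only step that needs care is the observation that a non-diagonal Pauli operator has zero diagonal in the computational basis. This follows immediately from Eq.~\eqref{eq:Pauli-basis-action}, since $x + a \neq x$ when $a \neq \zero$. Everything else is routine linear algebra.
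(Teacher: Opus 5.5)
Your proposal is correct and follows essentially the same argument as the paper. You write $Q = U_i P U_i^\dagger \in \PPP_n$; when $Q$ is of $Z$-type it is diagonal, so $\MMM_i(P) = U_i^\dagger Q U_i = P$, and otherwise its diagonal vanishes so $\MMM_i(P) = 0$, after which you average over $i$. Your use of Eq.~\eqref{eq:Pauli-basis-action} to justify the vanishing diagonal fills in a step the paper only asserts.
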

\begin{proof}
Since $U_i$ is a Clifford unitary, there exists a Pauli operator
$Q \in \PPP_n$ such that $U_i P U_i^\dagger = Q$. Hence, we can write
\begin{equation}
    \MMM_{i}(P)
    = \sum_{x \in \{0,1\}^n} \bra{x} U_i P U_i^\dagger \ket{x}
    U_i^\dagger \ketbra{x} U_i
    = \sum_{x \in \{0,1\}^n} \bra{x} Q \ket{x}
    U_i^\dagger \ketbra{x} U_i.
\end{equation}
If $Q$ is of $Z$-type, then $Q$ is diagonal in the computational basis. Thus, we have
\begin{equation}
    Q = \sum_{x \in \{0, 1\}^{n}} \bra{x} Q \ket{x} \ketbra{x},
\end{equation}
which implies
\begin{align}
    \MMM_{i}(P) = U_i^\dagger \Bigl( \sum_{x \in \{0, 1\}^{n}} \bra{x} Q \ket{x} \ketbra{x} \Bigr) U_i
    = U_i^\dagger Q U_i = P.
\end{align}
Otherwise, if $Q$ is not of $Z$-type, then $\bra{x} Q \ket{x} = 0$ for all $x \in \{0, 1\}^{n}$, and we have $\MMM_{i}(P) = 0$.
This proves the first part of the statement. For the second part, we compute
\begin{equation}
    \MMM(P) = \EE_{i \in I}\bigl[\MMM_{i}(P)\bigr] = \PP_{i \in I}[\text{$U_i P U_i^\dagger$ is of $Z$-type}] \cdot P.
\end{equation}
\end{proof}

\begin{proposition}\label{prop:shadow-norm-single-pauli}
Let $\{U_i\}_{i \in I}$ be a tomographically complete Clifford ensemble and $P \in \{I,X,Y,Z\}^{\otimes n}$. Then the shadow norm of $P$ is given by
\begin{equation}
    \norm{P}_{\mathrm{sh}}^2 = \lambda_P^{-1}.
\end{equation}
\end{proposition}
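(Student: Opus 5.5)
We will compute the expression in Proposition~\ref{prop:shadow-norm-short} directly for $A = P$ and show that the operator inside the trace is a scalar multiple of the identity. Since $\{U_i\}$ is tomographically complete, $\lambda_P > 0$, and Proposition~\ref{prop:M-eigenvalues} gives $\MMM^{-1}(P) = \lambda_P^{-1} P$.

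By linearity of each $\MMM_i$ and the first part of Proposition~\ref{prop:M-eigenvalues}, we have $\MMM_i(\MMM^{-1}(P)) = \lambda_P^{-1} P$ whenever $U_i P U_i^\dagger$ is of $Z$-type, and $\MMM_i(\MMM^{-1}(P)) = 0$ otherwise. Since $P \in \{I,X,Y,Z\}^{\otimes n}$ is Hermitian and unitary, $P^2 = I$. Squaring and averaging over $i$ therefore gives
\begin{equation}
    \EE_{i \in I}\bigl[\MMM_i(\MMM^{-1}(P))^2\bigr]
    = \PP_{i\in I}[\text{$U_i P U_i^\dagger$ is of $Z$-type}] \cdot \lambda_P^{-2} \, I
    = \lambda_P \cdot \lambda_P^{-2} \, I
    = \lambda_P^{-1} I,
\end{equation}
where the second equality is the definition of $\lambda_P$.

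It remains to take the supremum over density matrices. For every $\sigma \in \DDD$ we have $\Tr(\sigma \, \lambda_P^{-1} I) = \lambda_P^{-1}$, so the supremum equals $\lambda_P^{-1}$, which proves $\norm{P}_{\mathrm{sh}}^2 = \lambda_P^{-1}$.

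There is no real obstacle. The only point needing care is that squaring kills the sign and phase ambiguity: the relevant operator is $P$ itself (not $U_iPU_i^\dagger$), with $P^2 = I$. This is exactly why the statement restricts to $P$ in the Hermitian Pauli basis $\{I,X,Y,Z\}^{\otimes n}$ rather than to arbitrary elements of $\PPP_n$.
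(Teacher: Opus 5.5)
Your proof is correct and follows essentially the same argument as the paper. You square case by case and average the indicator directly to get $\EE_{i}[\MMM_i(\MMM^{-1}(P))^2] = \lambda_P^{-1} I$. The paper reaches the same identity by writing $\MMM_i(P)^2 = \MMM_i(P)\,P$ and then using $\EE_i[\MMM_i(P)]\,P = \MMM(P)\,P = \lambda_P P^2$, which is the same computation in a different bookkeeping.
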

\begin{proof}
By Proposition~\ref{prop:M-eigenvalues}, the shadow channel $\MMM$ is diagonal in the Pauli basis
with $\lambda_P > 0$ since the ensemble is tomographically complete. Thus, $\MMM^{-1}(P) = \lambda_P^{-1} P$ and we can write
\begin{equation}
    \EE_{i \in I} \bigl[ \MMM_{i}(\MMM^{-1}(P))^2 \bigr] = \lambda_P^{-2} \cdot \EE_{i \in I} \bigl[ \MMM_{i}(P)^2\bigr].
\end{equation}
Furthermore, Proposition~\ref{prop:M-eigenvalues} implies $\MMM_{i}(P)^2 = \MMM_{i}(P) \cdot P$. Hence,
\begin{equation}
    \lambda_P^{-2} \cdot \EE_{i \in I} \bigl[ \MMM_{i}(P)^2 \bigr] = \lambda_P^{-2} \cdot \EE_{i \in I} \bigl[ \MMM_{i}(P)\bigr] \cdot P = \lambda_P^{-2} \cdot \MMM(P) \cdot P = \lambda_P^{-1} P^2.
\end{equation}
Substituting the previous equation into Proposition~\ref{prop:shadow-norm-short}, and using $P^2 = I^{\otimes n}$, yields
\begin{equation}
    \norm{P}_{\mathrm{sh}}^2 = \lambda_P^{-1} \sup_{\sigma \in \DDD} \Tr(\sigma) = \lambda_P^{-1}.
\end{equation}
\end{proof}

Alternatively, the $Z$-type condition in Proposition~\ref{prop:M-eigenvalues} can be reformulated in terms of eigenvectors of the corresponding Pauli operator.
We will use this reformulation in the proof of Theorem~\ref{thm:main-channel-eigenvalues} to compute the eigenvalues of our \NAME shadow channel.

\begin{proposition}\label{prop:UPU-alt}
Let $U \in \CCC_n$ be a Clifford unitary and $P \in \PPP_n$ be a Pauli operator. The following are equivalent:
\begin{enumerate}
    \item $U P U^\dagger$ is of $Z$-type,
    \item $U^\dagger \ket{x}$ is an eigenvector of $P$ for all $x \in \{0, 1\}^{n}$.
\end{enumerate}
\end{proposition}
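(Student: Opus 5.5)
The plan is to conjugate $P$ by $U$ and use that $Q = U P U^\dagger$ is again a Pauli operator, since $U \in \CCC_n$. Since $U^\dagger$ is unitary, $U^\dagger\ket{x}$ is an eigenvector of $P$ with eigenvalue $\mu$ if and only if $\ket{x}$ is an eigenvector of $Q$ with the same eigenvalue. Indeed,
\begin{equation}
    P U^\dagger \ket{x} = \mu\, U^\dagger \ket{x} \iff U P U^\dagger \ket{x} = \mu \ket{x} \iff Q\ket{x} = \mu \ket{x}.
\end{equation}
This reduces the claim to the following statement about Pauli operators: a Pauli operator $Q \in \PPP_n$ is of $Z$-type if and only if every computational basis state $\ket{x}$ is an eigenvector of $Q$.

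For this reduced statement I will write $Q = c\, P_{a,b}$ with a phase $c \in \{\pm 1, \pm \ii\}$, using that $\{P_{a,b}\}$ is the standard Pauli basis. By Eq.~\eqref{eq:Pauli-basis-action}, we have $Q\ket{x} = c\, \ii^{a\cdot b}\chi_b(x)\ket{x+a}$. If $a = \zero$, then $Q\ket{x}$ is a nonzero scalar multiple of $\ket{x}$ for every $x$, so every basis state is an eigenvector; moreover $a = \zero$ means that $Q \in \{\pm1,\pm\ii\}\cdot\{I,Z\}^{\otimes n}$, i.e.\ $Q$ is of $Z$-type. If instead $a \neq \zero$, then $Q\ket{x}$ is a nonzero multiple of $\ket{x+a}$, which is orthogonal to $\ket{x}$. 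Hence $\ket{x}$ is not an eigenvector, since $Q\ket{x} \neq 0$ and $Q\ket{x}$ is not parallel to $\ket{x}$. In this case $Q$ is also not of $Z$-type, because it is not diagonal.

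Combining the two directions gives the equivalence. No step is a real obstacle. The only care needed is the identification of ``$Z$-type'' with $a = \zero$, i.e.\ with diagonality in the computational basis, which the paper already notes when it defines $Z$-type operators.
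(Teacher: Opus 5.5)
Your proposal is correct and follows the paper's proof. Both conjugate by $U$ to show that $U^\dagger\ket{x}$ is an eigenvector of $P$ exactly when $\ket{x}$ is an eigenvector of $Q = UPU^\dagger$, and then use that $Z$-type is the same as diagonal in the computational basis. The only difference is that you verify this last equivalence explicitly via $Q = c\,P_{a,b}$ and Eq.~\eqref{eq:Pauli-basis-action}, whereas the paper takes it directly from its definition of $Z$-type operators.
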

\begin{proof}
Since $U$ is Clifford, there exists a Pauli operator $Q \in \PPP_n$ such that $U P U^\dagger = Q$. Assume that $U P U^\dagger$ is of $Z$-type. Then $Q \ket{x} = \lambda_x \ket{x}$ for all $x \in \{0, 1\}^{n}$, which yields
\begin{equation}
    P U^\dagger \ket{x} = U^\dagger Q \ket{x} = \lambda_x U^\dagger \ket{x}.
\end{equation}
Conversely, assume that $P U^\dagger \ket{x} = \lambda_x U^\dagger \ket{x}$ for all $x \in \{0, 1\}^{n}$. Then we have
\begin{equation}
    U P U^\dagger \ket{x} = \lambda_x U U^\dagger \ket{x} = \lambda_x \ket{x}.
\end{equation}
Thus, $U P U^\dagger$ is diagonal in the computational basis, or
equivalently, is of $Z$-type.
\end{proof}

\subsection{Markov semigroups}\label{sec:markov-semigroups}

In this section, we introduce some basic definitions and facts from the theory of Markov semigroups, including an integral representation of the resolvent and a reverse Poincaré inequality.
These results will be used in Lemma~\ref{lem:graph-energy-estimate} as part of the proof of Theorem~\ref{thm:main-shadow-bound}. In the following, we restrict to the finite-dimensional case and provide short proofs of all propositions
for completeness. For a more detailed introduction to Markov semigroups, or more generally one-parameter continuous semigroups, we refer to~\cite{bakry2014analysis,engel2000one-parameter}.

Throughout the section, let $X$ be a finite set and denote by $\ell^\infty(X)$ the space of all real-valued functions $f \colon X \to \RR$ equipped with the supremum norm $\norm{\cdot}_\infty$.

\begin{definition}
A linear operator $P \colon \ell^\infty(X) \to \ell^\infty(X)$ is a \emph{Markov operator} if
\begin{enumerate}
    \item $P 1 = 1$, where $1 \in \ell^\infty(X)$ denotes the constant function $x \mapsto 1$,
    \item $f \geq 0$ implies $P f \geq 0$.
\end{enumerate}
\end{definition}
It follows directly from the second condition that $Pf \leq Pg$ whenever $f \leq g$. In addition, Markov operators are contractions
with respect to the supremum norm and satisfy the following Cauchy--Schwarz-type inequality.

\begin{proposition}
Let $P$ be a Markov operator and $f \colon X \to \RR$. Then
\begin{equation}
    \norm{Pf}_\infty \leq \norm{f}_\infty,
    \qquad
    (P f)^2 \leq P(f^2).
\end{equation}
\end{proposition}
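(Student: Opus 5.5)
The two claims can be proved separately, using only positivity, linearity and the normalization $P1 = 1$. Because $X$ is finite, the cleanest route is to write $P$ as a matrix. Set $p(x,y) = (P\delta_y)(x)$, where $\delta_y$ is the indicator of $y$. Linearity gives
\begin{equation}
    (Pf)(x) = \sum_{y \in X} p(x,y) f(y).
\end{equation}
Positivity gives $p(x,y) \geq 0$, since $\delta_y \geq 0$. Finally, $P1 = 1$ gives $\sum_y p(x,y) = 1$ for every $x$. So each row $p(x,\cdot)$ is a probability distribution on $X$, and $(Pf)(x)$ is the expectation of $f$ under that distribution.

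\textbf{Contraction.} First I would show $\norm{Pf}_\infty \leq \norm{f}_\infty$, and this can be done without the matrix. Let $c = \norm{f}_\infty$, so that $-c \cdot 1 \leq f \leq c \cdot 1$ pointwise. Monotonicity of $P$, noted just after the definition, together with $P1 = 1$, gives $-c \leq Pf \leq c$ pointwise. Hence $\norm{Pf}_\infty \leq c$.

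\textbf{Cauchy--Schwarz-type inequality.} Next I would show $(Pf)^2 \leq P(f^2)$. Fix $x$. With the row representation, the claim is Jensen's inequality for the convex function $t \mapsto t^2$ under the probability vector $p(x,\cdot)$. Equivalently, it follows from
\begin{equation}
    0 \leq \sum_y p(x,y)\bigl(f(y) - (Pf)(x)\bigr)^2 = P(f^2)(x) - (Pf)(x)^2,
\end{equation}
where the equality comes from expanding the square and using $\sum_y p(x,y) = 1$.

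A matrix-free variant also works. For fixed $x$, apply $P$ to $(f - t\cdot 1)^2 \geq 0$ and evaluate at $x$ to get $P(f^2)(x) - 2t\,(Pf)(x) + t^2 \geq 0$ for every real $t$. Choosing $t = (Pf)(x)$ gives the claim.

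There is no real obstacle. The only point needing care is that $P(f^2)(x)$ is computed with the same weights $p(x,\cdot)$ as $(Pf)(x)$, which is automatic once $P$ is written as a stochastic matrix, or in the second argument, once $t$ is fixed before applying $P$.
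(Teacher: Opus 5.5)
Your proof is correct. It differs from the paper's mainly in how the two claims are ordered and linked.

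The paper proves $(Pf)^2 \leq P(f^2)$ first. It applies the Cauchy--Schwarz inequality to the positive semidefinite bilinear form $B_x(f,g) = P(fg)(x)$, taking $g = 1$ and using $B_x(1,1) = (P1)(x) = 1$. Your matrix-free variant expands $P\bigl((f - t\cdot 1)^2\bigr)(x) \geq 0$ and chooses $t = (Pf)(x)$. That is exactly the standard proof of this Cauchy--Schwarz inequality, unrolled for this special case, so the core of the second claim is the same.

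The paper then derives the contraction as a corollary, via $(Pf)^2 \leq P(f^2) \leq P(\norm{f}_\infty^2 \cdot 1) = \norm{f}_\infty^2$. You instead get it directly from monotonicity applied to $-c \cdot 1 \leq f \leq c \cdot 1$. Your route is slightly more elementary and keeps the two statements independent, while the paper's route obtains both from a single inequality.

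Your stochastic-matrix/Jensen formulation relies on $X$ being finite, which the paper assumes throughout. The bilinear-form argument and your matrix-free variant use only positivity, linearity and $P1 = 1$.
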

\begin{proof}
Note that for every $x \in X$, the function $B_x(f, g) = P(f \cdot g)(x)$ defines a
positive-semidefinite symmetric bilinear form. Hence, the Cauchy--Schwarz inequality yields
\begin{equation}
    (P f)^2(x) = B_x(f, 1)^2 \leq B_x(f, f) \cdot B_x(1, 1) = P(f^2)(x) \cdot P(1^2)(x).
\end{equation}
Using $P 1 = 1$ proves the second inequality. Furthermore,
$f \leq \norm{f}_\infty \cdot 1$ and the Markov property imply
\begin{equation}
    (P f)^2 \leq P(f^2) \leq P(\norm{f}_\infty^2 \cdot 1) = \norm{f}_\infty^2 \cdot 1.
\end{equation}
Taking square roots and the supremum over all $x \in X$ proves the first inequality.
\end{proof}

Next we define Markov semigroups and their infinitesimal generators.

\begin{definition}
A family $\{P_t\}_{t \geq 0}$ of Markov operators is a \emph{Markov semigroup} if
\begin{enumerate}
\item $P_0 = \id$ and $P_{s+t} = P_s P_t$,
\item $\lim_{t \downarrow 0} \norm{P_t f - f}_\infty = 0$ for all $f \in \ell^\infty(X)$.
\end{enumerate}
\end{definition}
The first condition states that the family is a semigroup, while the second condition implies the continuity of the map
$t \mapsto P_t$ in operator norm. Let $L$ be a linear operator. Then the family $\{e^{tL}\}_{t \geq 0}$ is always a (not necessarily Markov)
continuous semigroup. Moreover, the following proposition shows that the converse also holds, i.e.,\ every Markov semigroup is of this form.

\begin{proposition}\label{prop:Markov-exp-L}
Let $\{P_t\}_{t \geq 0}$ be a Markov semigroup. Then there exists a linear operator $L$ such that $P_t = e^{tL}$ for all $t \geq 0$.
\end{proposition}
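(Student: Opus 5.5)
Since $X$ is finite, $\ell^\infty(X)$ is a finite-dimensional real vector space. All linear maps on it are bounded, and the map $t \mapsto P_t$ takes values in the finite-dimensional algebra $\mathrm{End}(\ell^\infty(X))$. The plan is to construct $L$ as the logarithm of $P_t$ for small $t$ and then extend by the semigroup property.

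First, I will upgrade the strong continuity assumption to norm continuity at $0$. Apply the second condition to the indicator functions $\delta_x$, $x \in X$. Since these form a finite basis, $\lim_{t\downarrow 0}\norm{P_t - \id} = 0$ in operator norm. Hence there is $t_0 > 0$ with $\norm{P_t - \id} < 1/2$ for all $t \in [0, t_0]$.

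Next, I will establish continuity everywhere. For $t > 0$ and small $h > 0$, the estimates
\begin{equation}
\norm{P_{t+h} - P_t} \le \norm{P_t}\,\norm{P_h - \id}, \qquad \norm{P_{t-h}-P_t} \le \norm{P_{t-h}}\,\norm{\id - P_h}
\end{equation}
hold. The norms $\norm{P_s}$ are uniformly bounded because Markov operators are contractions in the supremum norm, by the preceding proposition. So $t\mapsto P_t$ is norm continuous.

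With continuity in hand, I will obtain a generator by the standard integral trick. Set $V_\tau = \frac{1}{\tau}\int_0^\tau P_s\,ds$. This converges to $\id$ as $\tau \downarrow 0$, so $V_\tau$ is invertible for small $\tau$. The semigroup law gives
\begin{equation}
\frac{P_h - \id}{h}\, V_\tau = \frac{1}{h\tau}\Bigl(\int_\tau^{\tau+h} P_s\,ds - \int_0^h P_s\,ds\Bigr) \xrightarrow{h\downarrow 0} \frac{1}{\tau}(P_\tau - \id).
\end{equation}
Therefore the limit $L = \lim_{h\downarrow 0} (P_h - \id)/h = \frac{1}{\tau}(P_\tau - \id)V_\tau^{-1}$ exists. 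The same computation, using $P_{t+h}-P_t = (P_h-\id)P_t = P_t(P_h-\id)$, shows that $t\mapsto P_t$ is right-differentiable with $\frac{d}{dt}P_t = L P_t = P_t L$. Left-differentiability follows similarly via $P_{t} - P_{t-h} = P_{t-h}(P_h - \id)$ together with continuity.

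Finally, I will identify $P_t$ with $e^{tL}$. Define $Q_t = e^{-tL}P_t$. Then
\begin{equation}
Q_t' = -Le^{-tL}P_t + e^{-tL}LP_t = 0,
\end{equation}
since $L$ commutes with $e^{-tL}$. Hence $Q_t = Q_0 = \id$, that is, $P_t = e^{tL}$.

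The only delicate point is the existence of the derivative at $0$. In finite dimensions this is handled by the averaging operator $V_\tau$, as above. Alternatively, one could use $L = \frac{1}{t_0}\log P_{t_0}$ via the power series of $\log$ near $\id$, but then $P_t = e^{tL}$ for irrational multiples of $t_0$ would require a continuity argument anyway. I therefore prefer the integral approach.
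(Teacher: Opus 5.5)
Your argument is correct. It is essentially the standard matrix-semigroup proof behind the result the paper simply cites, \cite[Theorem I.2.9]{engel2000one-parameter}: invertibility of the average $\frac{1}{\tau}\int_0^\tau P_s\,ds$ for small $\tau$ gives differentiability with $\frac{d}{dt}P_t = LP_t$, and ODE uniqueness then forces $P_t = e^{tL}$. You also explicitly justify the upgrade from strong continuity at $0$ to norm continuity on all of $[0,\infty)$, which the paper only asserts.
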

\begin{proof}
See~\cite[Theorem I.2.9]{engel2000one-parameter} for a proof in the setting of general matrix semigroups.
\end{proof}

The operator $L$ in the previous proposition is called the \emph{infinitesimal generator} of the Markov semigroup.
Using matrix-valued calculus, we obtain the following integral representation of its resolvent.

\begin{proposition}[Resolvent representation]\label{prop:resolvent-integral}
Let $\{P_t\}_{t \geq 0}$ be a Markov semigroup with infinitesimal generator $L$. Then it holds for every $\lambda > 0$ that
\begin{equation}
    (\lambda \id - L)^{-1}
    =
    \int_0^\infty
    e^{-\lambda t}P_t \, \dd t.
\end{equation}
\end{proposition}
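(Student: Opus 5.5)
The plan is to reduce the claim to a scalar identity via the fundamental theorem of calculus for the matrix exponential. By Proposition~\ref{prop:Markov-exp-L} we have $P_t = e^{tL}$, so the integrand is
\begin{equation}
    e^{-\lambda t}P_t = e^{t(L-\lambda\id)},
\end{equation}
where we use that $\lambda\id$ commutes with $L$. Since $X$ is finite, everything is a finite matrix, and the integral is understood entrywise (or equivalently in operator norm).

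The first step is to establish convergence of the improper integral. Every $P_t$ is a Markov operator, hence a contraction in $\norm{\cdot}_\infty$, which gives
\begin{equation}
    \norm{e^{-\lambda t}P_t} \le e^{-\lambda t}.
\end{equation}
This bound is integrable on $[0,\infty)$ because $\lambda>0$, so the integral converges absolutely.

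The second step is the computation itself. Consider
\begin{equation}
    F(t) = e^{t(L-\lambda\id)}.
\end{equation}
It is differentiable with $F'(t) = (L-\lambda\id)F(t)$, and $(L-\lambda \id)$ commutes with $F(t)$. Integrating from $0$ to $T$ gives
\begin{equation}
    (L-\lambda\id)\int_0^T F(t)\,\dd t = F(T)-\id.
\end{equation}
As $T\to\infty$, $\norm{F(T)}\le e^{-\lambda T}\to 0$. Multiplying by $-1$ then yields
\begin{equation}
    (\lambda\id-L)\int_0^\infty e^{-\lambda t}P_t\,\dd t = \id,
\end{equation}
and the same identity holds with the factors in the other order by commutativity. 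Hence $\lambda\id-L$ is invertible, and its inverse is the stated integral.

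No step is a real obstacle. The only points needing care are justifying the interchange of the bounded operator $(L-\lambda\id)$ with the integral, which holds because integration is linear and the space is finite-dimensional, and the decay of $F(T)$, which is exactly where the Markov contraction property and $\lambda>0$ are used.
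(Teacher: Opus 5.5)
Your proposal is correct and follows essentially the same route as the paper: write $P_t = e^{tL}$, use the contraction property of Markov operators for convergence and for the decay $e^{-\lambda t}P_t \to 0$, and apply the fundamental theorem of calculus to obtain $(\lambda\id - L)\int_0^\infty e^{-\lambda t}P_t\,\mathrm{d}t = \id$. The only cosmetic difference is that you get the two-sided inverse from commutativity of $L$ with $P_t$, whereas the paper uses finite-dimensionality to upgrade the one-sided identity.
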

\begin{proof}
Let $\lambda > 0$ and define the operator
\[
    R = \int_0^\infty e^{-\lambda t}P_t \, \dd t.
\]
Since each $P_t$ is a contraction, the integral
converges and $e^{-\lambda t}P_t \to 0$ as $t \to \infty$.
From Proposition~\ref{prop:Markov-exp-L}, it follows that $\frac{\dd}{\dd t}P_t = L P_t$.
Hence, we have
\[
    (\lambda\id-L)R
    = -\int_0^\infty
      \frac{\dd}{\dd t}\bigl(e^{-\lambda t}P_t\bigr)\, \dd t
    = -(0 - P_0)
    = \id.
\]
Since all operators are finite-dimensional, this is equivalent to $R = (\lambda\id-L)^{-1}$.
\end{proof}

Consider a Markov semigroup with infinitesimal generator $L$. Then its \emph{carré du champ} $\Gamma \colon \ell^\infty(X) \to \ell^\infty(X)$ is defined by
\begin{equation}
    \Gamma(f) = \frac{1}{2} \bigl( L(f^2) - 2 f L(f) \bigr).
\end{equation}
The next proposition establishes a reverse Poincaré inequality for certain Markov semigroups following the argument in~\cite[Proposition 3]{ledoux2011concentration}.

\begin{proposition}[Reverse Poincaré inequality]\label{prop:rev-poincare-inequ}
Let $\{P_t\}_{t \geq 0}$ be a Markov semigroup satisfying $\Gamma(P_t f) \leq P_t(\Gamma f)$ for all $f \in \ell^\infty(X)$ and $t \geq 0$. Then
\begin{equation}
    \Gamma(P_t f)
    \leq
    \frac{1}{2t} \, \norm{f}_\infty^2
    \qquad (t>0).
\end{equation}
\end{proposition}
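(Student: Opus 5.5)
The plan is to follow the classical interpolation argument (as in Ledoux): fix $t>0$ and $f$, and study the function
\[
\phi(s) = P_s\bigl((P_{t-s} f)^2\bigr), \qquad s \in [0,t].
\]
Its endpoint values are
\[
\phi(0) = (P_t f)^2, \qquad \phi(t) = P_t(f^2).
\]
By Proposition~\ref{prop:Markov-exp-L}, $P_s = e^{sL}$, so in finite dimensions $s \mapsto \phi(s)$ is differentiable. Write $g_s = P_{t-s} f$. Using $\frac{\dd}{\dd s}P_s = L P_s = P_s L$ and $\frac{\dd}{\dd s} g_s = -L g_s$, we get
\[
\phi'(s) = P_s\bigl(L(g_s^2) - 2 g_s L g_s\bigr) = 2 P_s\bigl(\Gamma(g_s)\bigr).
\]
This is just the definition of the carré du champ.

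Next we use the commutation hypothesis $\Gamma(P_u h) \le P_u(\Gamma h)$ in the reverse direction. Apply it with $u = s$ and $h = g_s$, noting $P_s g_s = P_t f$:
\[
\Gamma(P_t f) = \Gamma(P_s g_s) \le P_s\bigl(\Gamma(g_s)\bigr) = \tfrac12 \phi'(s).
\]
This holds pointwise for every $s \in [0,t]$. Integrating over $s \in [0,t]$ gives
\[
t\,\Gamma(P_t f) \le \tfrac12\bigl(P_t(f^2) - (P_t f)^2\bigr).
\]

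Finally, we bound the right-hand side. Since $(P_t f)^2 \ge 0$, and by the Markov property $P_t(f^2) \le \norm{f}_\infty^2$ (the earlier contraction/Cauchy--Schwarz proposition), we get $\Gamma(P_t f) \le \frac{1}{2t}\norm{f}_\infty^2$.

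The main point requiring care is the sign and direction bookkeeping. The derivative must produce $+2P_s\Gamma(g_s)$, and the hypothesis must be applied to the pair $(P_s, g_s)$ rather than $(P_{t-s}, f)$; with that choice the inequality runs in the useful direction.

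The differentiation itself is routine matrix calculus on the finite space $X$, so no domain issues arise. The product rule for $g_s^2$ is pointwise and hence valid.
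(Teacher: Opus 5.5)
Your proof is correct and is essentially the paper's argument. Both use the interpolation $s \mapsto P_s\bigl((P_{t-s}f)^2\bigr)$ with derivative $2P_s\bigl(\Gamma(P_{t-s}f)\bigr)$, apply the hypothesis to the pair $(P_s, P_{t-s}f)$, integrate over $[0,t]$, and finish with the Markov bound $P_t(f^2) \leq \norm{f}_\infty^2$.
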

\begin{proof}

Let $f \colon X \to \RR$ and $t > 0$. For $s \in [0, t]$, define $g = P_{t-s}f$. Then we have
\begin{equation}
    \frac{\dd}{\dd s} g = -Lg,
    \qquad
    \frac{\dd}{\dd s} P_s g^2
    = LP_s g^2 - 2 P_s (g Lg)
    = 2 P_s(\Gamma g),
\end{equation}
where we used for the last equality that $P_s$ is a matrix exponential in $L$ and thus commutes with $L$. Hence, we can write
\begin{equation}
    P_t(f^2) - (P_t f)^2 = \int_0^t \frac{\dd}{\dd s} P_s(P_{t-s}f)^2 \, \dd s
    = 2 \int_0^t P_s(\Gamma(g)) \, \dd s.
\end{equation}
Using the assumption $\Gamma(P_t f) \leq P_t(\Gamma f)$ and $P_s g = P_t f$, we obtain
\begin{equation}
    P_t(f^2) - (P_t f)^2
    \geq 2 \int_0^t \Gamma(P_t f) \, \dd s
    = 2 t \, \Gamma(P_t f).
\end{equation}
Dividing by $2t$ and using the Markov property of $P_t$ yields
\begin{equation}
    \Gamma(P_t f)
    \leq
    \frac{1}{2t} \, P_t(f^2)
    \leq
     \frac{1}{2t} \, P_t\bigl(\norm{f}_\infty^2 \cdot 1\bigr)
    =
    \frac{1}{2t} \, \norm{f}_\infty^2 \cdot 1.
\end{equation}
\end{proof}

\section{The selective GHZ ensemble}\label{sec:ensemble}

In~\cite{patel2026selective}, the authors introduce a new measurement scheme for quantum state tomography called \emph{Selective and Efficient Quantum State Tomography (SEEQST)}.
The main idea of this scheme is to partition a density matrix into $2^n$ disjoint blocks so that each block can be estimated using only two measurement settings. In the following,
we introduce a unitary ensemble based on these measurement settings that will give rise to a corresponding classical shadow protocol.

To define the unitary ensemble, we first introduce the \emph{Greenberger--Horne--Zeilinger (GHZ) states} indexed by $k \in \{0, 1\}$ and $x \in \{0, 1\}^m$:
\begin{equation}\label{eq:GHZ-states}
    \ket{\varphi_{k,x}} = \frac{1}{\sqrt{2}} \Bigl( \ket{0, x_2, \dots, x_m} + (-1)^{x_1} \ii^k \ket{1, \overline{x_2}, \dots, \overline{x_m}}
    \Bigr).
\end{equation}
These states generalize the standard GHZ states in~\cite{greenberger1989going}. Additionally, we introduce the corresponding \emph{even and odd GHZ bases}
\begin{equation}\label{eq:GHZ-bases}
    \GGG_{0,m} = \bigl\{\ket{\varphi_{0,x}}\bigr\}_{x \in \{0, 1\}^m},
    \qquad
    \GGG_{1,m} = \bigl\{\ket{\varphi_{1,x}}\bigr\}_{x \in \{0, 1\}^m}.
\end{equation}
Using this notation, each setting performs a measurement in either the even or the odd GHZ basis on a subset of qubits while measuring the remaining qubits in the computational basis. By translating these measurements into
unitary transformations, we arrive at the following definition of our unitary ensemble.

\begin{definition}[\NAME ensemble]\label{def:main-ensemble}
Let $p \colon \{0,1\}^n \to [0, 1]$ be a probability distribution. Then the $n$-qubit \emph{selective GHZ measurement (\NAME) ensemble} with distribution $p$ is given by
\begin{equation}
    \bigl\{ U_{k,s} : k \in \{0, 1\}, s \in \{0, 1\}^n \bigr\},
\end{equation}
where the unitary $U_{k, s}$ is chosen with probability $\frac{1}{2} \, p(s)$ and transforms all qubits $i$ with $s_i = 1$ from the corresponding $\GGG_{k,m}$ basis into the computational basis.
\end{definition}

Note that computational basis states can easily be transformed into even or odd GHZ states using the following combination of Hadamard, phase, and controlled-NOT gates.
\begin{equation}
\scalebox{0.8}{
\begin{quantikz}
\lstick{$\ket{x_1}$}
    & \gate{H}
    & \ctrl{1}
    & \qw
    & \cdots
    & \ctrl{3}
    & \qw \rstick[5]{$\ket{\varphi_{0,x}}$} \\
\lstick{$\ket{x_2}$}
    & \qw
    & \targ{}
    & \qw
    & \cdots
    & \qw
    & \qw \\
\lstick{$\cdots$}
    & \qw
    & \qw
    & \qw
    & \cdots
    & \qw
    & \qw \\
\lstick{$\ket{x_n}$}
    & \qw
    & \qw
    & \qw
    & \cdots
    & \targ{}
    & \qw
\end{quantikz}}
\qquad
\scalebox{0.8}{
\begin{quantikz}
\lstick{$\ket{x_1}$}
    & \gate{H}
    & \gate{S}
    & \ctrl{1}
    & \cdots
    & \ctrl{3}
    & \qw \rstick[5]{$\ket{\varphi_{1,x}}$} \\
\lstick{$\ket{x_2}$}
    & \qw
    & \qw
    & \targ{}
    & \cdots
    & \qw
    & \qw \\
\lstick{$\cdots$}
    & \qw
    & \qw
    & \qw
    & \cdots
    & \qw
    & \qw \\
\lstick{$\ket{x_n}$}
    & \qw
    & \qw
    & \qw
    & \cdots
    & \targ{}
    & \qw
\end{quantikz}
}
\end{equation}
Hence, every unitary $U_{k, s}$ can be implemented by applying one of the circuits above in reverse to all qubits $i$ with $s_i = 1$.
For example, the following circuits implement the unitaries $U_{0, 1011}$ and $U_{1, 1011}$ that transform the qubits $1$, $3$ and $4$ from the even and odd GHZ basis into the computational basis, respectively.
\begin{equation}
U_{0, 1011} \colon\scalebox{0.8}{
\begin{quantikz}
    & \ctrl{3}
    & \ctrl{2}
    & \gate{H}
    & \qw \\
    & \qw
    & \qw
    & \qw
    & \qw \\
    & \qw
    & \targ{}
    & \qw
    & \qw \\
    & \targ{}
    & \qw
    & \qw
    & \qw
\end{quantikz}
}
\qquad\quad
U_{1, 1011} \colon\scalebox{0.8}{
\begin{quantikz}
    & \ctrl{3}
    & \ctrl{2}
    & \gate{S^\dagger}
    & \gate{H}
    & \qw \\
    & \qw
    & \qw
    & \qw
    & \qw
    & \qw \\
    & \qw
    & \targ{}
    & \qw
    & \qw
    & \qw \\
    & \targ{}
    & \qw
    & \qw
    & \qw
    & \qw
\end{quantikz}
}
\end{equation}
Alternatively, it is possible to implement each unitary $U_{k,s}$ using a circuit of logarithmic depth and a number of gates linear in the subset size by parallelizing the sequence of CNOT gates~\cite{moore2001parallel}.

Since the Clifford group (up to global phases) is generated by the Hadamard gate $H$, the phase gate $S$, and the controlled-NOT gate $\mathrm{CNOT}$, it follows from the above circuit diagrams that the \NAME unitary ensemble is a Clifford ensemble.
In particular, we can apply all results from Section~\ref{sec:clifford-ensembles} to obtain that the shadow channel of the \NAME ensemble is diagonal in the Pauli basis and given by
\begin{equation}
    \MMM(P) = \lambda_P P, \qquad \lambda_P = \PP_{k, s}[\text{$U_{k, s} P U_{k, s}^\dagger$ is of $Z$-type}].
\end{equation}
If all eigenvalues $\lambda_P$ are non-zero, then the ensemble is tomographically complete and the inverse of the shadow channel is defined by
\begin{equation}
    \MMM^{-1}(P) = \lambda_P^{-1} P.
\end{equation}
In this case, the shadow norm of a Pauli operator $P$ can be expressed in terms of its corresponding eigenvalue $\lambda_P$ as
\begin{equation}
    \norm{P}_{\mathrm{sh}}^2 = \lambda_P^{-1}.
\end{equation}

Next, we present three concrete probability distributions for the \NAME ensemble before we compute the eigenvalues of the shadow channel in Section~\ref{sec:channel-spectrum} and derive bounds on the shadow norm for general Hermitian operators in Section~\ref{sec:shadow-norm}.
In particular, we will show in Section~\ref{sec:channel-spectrum} that if the probability distribution $p$ is strictly positive, i.e.,\ $p(s) > 0$ for all $s \in \{0, 1\}^n$, then the \NAME ensemble is tomographically complete and we can run the classical shadow protocol in Algorithm~\ref{alg:classical-shadow}.

\begin{example}\label{ex:probability-distributions}
The following are three natural examples of probability distributions for the \NAME ensemble that will be used throughout the paper.
\begin{enumerate}[ref=\theexample.\arabic*]
    \item \emph{Uniform distribution}\label{item:uniform-distribution}: Assume every $s \in \{0, 1\}^n$ is chosen with equal probability. Then the probability of each $s \in \{0, 1\}^n$ is given by
    \begin{equation}
        p(s) = \frac{1}{2^n}.
    \end{equation}
    \item \emph{Binomial distribution}\label{item:binomial-distribution}: Let $q \in (0, 1)$ and assume that every bit $s_i$ is chosen independently with probability $q$, i.e.,\
    \begin{equation}
        s_i = \begin{cases}
            1, & \text{with probability $q$}, \\
            0, & \text{with probability $1-q$}.
        \end{cases}
    \end{equation}
     Then the probability of $s \in \{0, 1\}^n$ is given by
    \begin{equation}
        p(s) = q^{\wt(s)} (1-q)^{n-\wt(s)}.
    \end{equation}
    \item \emph{Uniform--uniform distribution}\label{item:uniform-uniform-distribution}: Assume that $m \in \{0, 1, \dots, n\}$ is chosen uniformly at random and then $s \in \{0, 1\}^n$ is chosen uniformly at random among all bitstrings with Hamming weight $\wt(s) = m$.
    Then the probability of $s \in \{0, 1\}^n$ is given by
    \begin{equation}
        p(s) = \frac{1}{n+1} \binom{n}{\wt(s)}^{-1}.
    \end{equation}
\end{enumerate}

\end{example}

\section{Spectrum of the shadow channel}\label{sec:channel-spectrum}

In this section, we compute the eigenvalues of the \NAME shadow channel $\MMM$ for a general probability distribution $p$. Since the \NAME ensemble is a Clifford unitary ensemble, Proposition~\ref{prop:M-eigenvalues} implies that the shadow channel is diagonal in the Pauli basis with eigenvalues given by
\begin{equation}
    \lambda_P = \PP_{k, s}[\text{$U_{k, s} P U_{k, s}^\dagger$ is of $Z$-type}].
\end{equation}
We begin by establishing an alternative characterization of this $Z$-type condition in the following lemmas, before we show in Theorem~\ref{thm:main-channel-eigenvalues} that the eigenvalues of the \NAME shadow channel
can be expressed in terms of the probability distribution $p$ and its Fourier transform $\widehat{p}$ over $\ZZ_2^n$.
To simplify sign computations, we will index Pauli operators by bitstrings $a, b \in \{0, 1\}^n$ and formulate statements using characters $\chi_{a}$ of $\ZZ_2^n$. See Section~\ref{sec:preliminaries} for a detailed introduction to this notation.

\begin{lemma}\label{lem:GHZ-eigen-forall}
Let $a, b \in \{0, 1\}^n$ and $k \in \{0, 1\}$. Then $\ket{\varphi_{k,x}}$ is an eigenvector of $P_{a,b}$ for all $x \in \{0, 1\}^n$ if and only if
\begin{enumerate}
\item $a = \zero$ and $\chi_\one(b) = 1$, or
\item $a = \one$ and $\chi_\one(b) = (-1)^k$.
\end{enumerate}
\end{lemma}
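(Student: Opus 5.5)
The plan is a direct computation of $P_{a,b}\ket{\varphi_{k,x}}$ using Eq.~\eqref{eq:Pauli-basis-action}. Write $\ket{\varphi_{k,x}} = \frac{1}{\sqrt2}(\ket{y} + (-1)^{x_1}\ii^k\ket{y+\one})$ with $y = (0,x_2,\dots,x_n)$. Note that $y+\one = (1,\overline{x_2},\dots,\overline{x_n})$. As $x$ ranges over $\{0,1\}^n$, $y$ ranges over all bitstrings with $y_1=0$, with both signs $(-1)^{x_1}$ occurring. Applying Eq.~\eqref{eq:Pauli-basis-action} gives
\begin{equation*}
P_{a,b}\ket{\varphi_{k,x}} = \frac{\ii^{a\cdot b}}{\sqrt2}\Bigl(\chi_b(y)\ket{y+a} + (-1)^{x_1}\ii^k\chi_b(y)\chi_b(\one)\ket{y+a+\one}\Bigr).
\end{equation*}

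For this vector to be proportional to $\ket{\varphi_{k,x}}$, its support $\{y+a, y+a+\one\}$ must equal $\{y, y+\one\}$. This forces $a\in\{\zero,\one\}$; if $n=1$ these two cases are the whole space anyway. This settles the necessity of the condition on $a$, and the problem reduces to two cases.

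\emph{Case $a=\zero$.} The vector becomes $\ii^{0}\chi_b(y)\frac{1}{\sqrt2}(\ket{y} + (-1)^{x_1}\ii^k\chi_b(\one)\ket{y+\one})$. This is proportional to $\ket{\varphi_{k,x}}$ if and only if $\chi_b(\one)=1$. Since $\chi_b(\one)=\chi_\one(b)$ by Eq.~\eqref{qe:character-properties}, this gives condition 1.

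\emph{Case $a=\one$.} The two kets swap. The coefficient of $\ket{y}$ is now $(-1)^{x_1}\ii^k\chi_b(y)\chi_b(\one)$, and the coefficient of $\ket{y+\one}$ is $\chi_b(y)$, both up to the common factor $\ii^{\one\cdot b}/\sqrt2$. Proportionality to $\ket{y} + (-1)^{x_1}\ii^k\ket{y+\one}$ requires the ratio of the second coefficient to the first to equal $(-1)^{x_1}\ii^k$. That ratio is $((-1)^{x_1}\ii^k\chi_\one(b))^{-1}$, so the requirement becomes $(\ii^k)^2\chi_\one(b) = 1$, that is, $\chi_\one(b) = (-1)^k$. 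This is condition 2.

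In both cases the derived condition is independent of $x$, so it is necessary and sufficient simultaneously for all $x$. The main point requiring care is the support argument excluding other $a$. The two basis vectors involved are distinct, since they differ in the first bit. Hence $\{y+a, y+a+\one\} = \{y, y+\one\}$ holds if and only if $a\in\{\zero,\one\}$, and this identity also handles $n=1$. The remaining work is routine sign bookkeeping, for which I would keep the factor $\ii^{a\cdot b}$ as an overall scalar.
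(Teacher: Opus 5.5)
Your proposal is correct and follows the paper's proof essentially step for step. The paper also writes $\ket{\varphi_{k,x}} = \frac{1}{\sqrt{2}}(\ket{y} + \alpha\ket{\overline{y}})$ with $y=(0,x_2,\dots,x_n)$ and $\alpha=(-1)^{x_1}\ii^k$, applies Eq.~\eqref{eq:Pauli-basis-action}, uses $\{y,\overline{y}\}=\{y+a,\overline{y}+a\}$ to force $a\in\{\zero,\one\}$, and compares coefficients to get $\chi_\one(b)=1$ and $\alpha^2\chi_\one(b)=1$ respectively; your explicit observation that $\ket{y}$ and $\ket{y+\one}$ are distinct is a small bit of care the paper leaves implicit.
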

\begin{proof}
Let $x \in \{0, 1\}^n$ and define $y = (0, x_2, \dots, x_n)$. Then
\begin{equation}
    \ket{\varphi_{k,x}} = \frac{1}{\sqrt{2}} \bigl(\ket{y} + \alpha \ket{\overline{y}} \bigr)
    \quad \text{with} \quad
    \alpha = (-1)^{x_1} \ii^k
\end{equation}
and Equation~\eqref{eq:Pauli-basis-action} yields
\begin{equation}
    P_{a,b}\ket{\varphi_{k,x}}
    = \frac{1}{\sqrt{2}} \bigl(\ii^{a \cdot b} \, \chi_b(y)\ket{y+a} + \alpha \, \ii^{a \cdot b} \, \chi_b(\overline{y})\ket{\overline{y}+a} \bigr).
\end{equation}
For $P_{a,b}\ket{\varphi_{k,x}} = \lambda \ket{\varphi_{k,x}}$ to hold, we need $\{ y, \overline{y}  \} = \{ y + a, \overline{y} + a \}$, which is precisely the case if $a = \zero$ or $a = \one$. First, assume $a = \zero$. Then $\ii^{a \cdot b} = 1$ and comparing coefficients yields
\begin{equation}
    \lambda = \chi_b(y), \qquad \lambda = \chi_b(\overline{y}) = \chi_\one(b) \chi_b(y).
\end{equation}
Dividing both equations by $\chi_b(y)$ gives $\chi_\one(b) = 1$. Second, assume $a = \one$. Then we have
\begin{align}
    \lambda = \alpha \, \ii^{a \cdot b} \,  \chi_b(\overline{y}) = \alpha \, \ii^{a \cdot b} \, \chi_\one(b) \chi_b(y),
    \qquad
    \alpha \lambda = \ii^{a \cdot b} \, \chi_b(y).
\end{align}
Multiplying the first equation by $\alpha$
and dividing both equations by $\ii^{a \cdot b} \, \chi_b(y)$ yields
 $\alpha^2 \chi_\one(b) = 1$. By definition,
\begin{equation}
    \alpha^2 = (-1)^{2x_1} \, \ii^{2k} = (-1)^k.
\end{equation}
Thus, we require $\chi_\one(b) = (-1)^k$.
\end{proof}

\begin{lemma}\label{lem:UPU-Z-type}
Let $a, b \in \{0, 1\}^n$, $k \in \{0, 1\}$ and $s \in \{0, 1\}^n$. Then $U_{k,s} P_{a,b} U_{k,s}^\dagger$ is of $Z$-type if and only if
\begin{enumerate}
    \item $a = \zero$ and $\chi_b(s) = 1$, or
    \item $a = s$ and $\chi_b(s) = (-1)^k$.
\end{enumerate}
\end{lemma}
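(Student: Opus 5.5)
We combine Proposition~\ref{prop:UPU-alt} with Lemma~\ref{lem:GHZ-eigen-forall}. By Proposition~\ref{prop:UPU-alt}, $U_{k,s} P_{a,b} U_{k,s}^\dagger$ is of $Z$-type if and only if every vector $U_{k,s}^\dagger\ket{x}$, $x \in \{0,1\}^n$, is an eigenvector of $P_{a,b}$. So the task is to describe the basis $\{U_{k,s}^\dagger\ket{x}\}_x$ and to reduce the eigenvector condition to the single-block statement of Lemma~\ref{lem:GHZ-eigen-forall}.

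\textbf{Factor the basis.} After relabeling qubits, write $\{0,1\}^n \cong \{0,1\}^{m} \times \{0,1\}^{n-m}$, where the first factor consists of the qubits $i$ with $s_i=1$ and $m = \wt(s)$. Split $a=(a',a'')$ and $b=(b',b'')$ accordingly. By Definition~\ref{def:main-ensemble}, $U_{k,s}^\dagger\ket{x} = \ket{\varphi_{k,x'}} \otimes \ket{x''}$ with $\ket{\varphi_{k,x'}} \in \GGG_{k,m}$. The degenerate case $s = \zero$ ($m=0$) is just the computational basis and is handled directly below. Up to a global phase, $P_{a,b} = P_{a',b'} \otimes P_{a'',b''}$; the phase does not affect the eigenvector property.

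\textbf{Eigenvectors of product Paulis.} The key step is that a product vector $\ket{u}\otimes\ket{v}$ is an eigenvector of $A\otimes B$ with $A \ne 0$, $B \ne 0$ if and only if $\ket{u}$ is an eigenvector of $A$ and $\ket{v}$ is an eigenvector of $B$. The reverse direction is trivial. For the forward direction, $A\ket{u}\otimes B\ket{v} = \lambda \ket{u}\otimes\ket{v}$, and since Paulis are invertible both sides are nonzero rank-one tensors, which forces $A\ket{u} \propto \ket{u}$ and $B\ket{v} \propto \ket{v}$.

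\textbf{Apply both conditions.} Requiring the eigenvector property for all $x''$ means every computational basis state on the complement is an eigenvector of $P_{a'',b''}$. By Eq.~\eqref{eq:Pauli-basis-action} this holds iff $a''=\zero$. Requiring it for all $x'$ means, by Lemma~\ref{lem:GHZ-eigen-forall} applied on $m$ qubits, that either $a' = \zero$ and $\chi_\one(b') = 1$, or $a' = \one$ and $\chi_\one(b') = (-1)^k$. Since the subset $s$ is exactly the support of the first factor:
\begin{itemize}
\item $a''=\zero$ together with $a'=\zero$ gives $a = \zero$;
\item $a''=\zero$ together with $a'=\one$ gives $a = s$;
\item $\chi_\one(b')$ on the block equals $(-1)^{\sum_{i: s_i=1} b_i} = \chi_b(s)$.
\end{itemize}
This yields exactly conditions 1 and 2 of Lemma~\ref{lem:UPU-Z-type}.

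\textbf{Edge case.} For $s=\zero$, $U_{k,s}$ is the identity. Then the $Z$-type condition is $a=\zero$, and $\chi_b(\zero)=1$ always holds. Condition 2 would read $a=\zero$ with $1=(-1)^k$, which is subsumed by condition 1, so the statement remains consistent.

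The main obstacle is bookkeeping rather than substance. One must verify that Lemma~\ref{lem:GHZ-eigen-forall}, stated for $n$ qubits, applies on an arbitrary sub-block of size $m \geq 1$ whose "first qubit" is the first selected qubit. One must also check that the phase conventions in $P_{a,b}$ and in the GHZ circuit match Definition~\ref{def:main-ensemble}.
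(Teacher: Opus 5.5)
Your proof is correct and follows essentially the same route as the paper. You use Proposition~\ref{prop:UPU-alt} to reduce to an eigenvector condition on the product basis $\ket{\varphi_{k,x'}}\otimes\ket{x''}$, split it into the GHZ block (Lemma~\ref{lem:GHZ-eigen-forall}) and the computational-basis complement (forcing $a''=\zero$), and recombine via $\chi_{\one}(b')=\chi_b(s)$. You also make explicit two points the paper leaves implicit: why the tensor-product eigenvector condition splits (invertibility of Pauli operators), and the degenerate case $s=\zero$.
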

\begin{proof}
Let $I = \{ i : s_i = 1 \} \subseteq \{1, \dots, n\}$ and $m = |I|$. For a subset $J = \{ j_1 < \dots < j_\ell\}$ and $x \in \{0, 1\}^n$, write $x_J = (x_{j_1}, \dots, x_{j_\ell})$.
It follows from Proposition~\ref{prop:UPU-alt} that $U_{k,s} P_{a,b} U_{k,s}^\dagger$ is of $Z$-type if and only if $U_{k,s}^\dagger\ket{x}$ is an eigenvector of $P_{a,b}$ for all $x \in \{0, 1\}^n$. Since $U_{k,s}$ transforms the qubits in $I$ from the $\GGG_{k,m}$ basis into the computational basis, this is equivalent to
\begin{enumerate}
    \item $\ket{\varphi_{k,x}}$ is an eigenvector of $P_{a_I, b_I}$
    for all $x \in \{0, 1\}^{|I|}$, and
    \item $\ket{x}$ is an eigenvector of $P_{a_{I^c}, b_{I^c}}$ for all $x \in \{0, 1\}^{n - |I|}$.
\end{enumerate}
By Lemma~\ref{lem:GHZ-eigen-forall}, the first condition is equivalent to either $a_I = \zero_I$ and $\chi_\one(b_I) = 1$, or $a_I = \one_I$ and $\chi_\one(b_I) = (-1)^k$. Furthermore, the second condition is equivalent to $a_{I^c} = \zero_{I^c}$. Hence, combining these conditions yields
$a = \zero$ and $\chi_b(s) = 1$, or $a = s$ and $\chi_b(s) = (-1)^k$.
\end{proof}

Using the previous lemma, we can now apply Proposition~\ref{prop:M-eigenvalues} to explicitly compute the eigenvalues of the \NAME shadow channel for a general probability distribution $p$.

\begin{theorem}\label{thm:main-channel-eigenvalues}
Let $a, b \in \{0, 1\}^n$. Then the eigenvalue corresponding to $P_{a,b}$ for the \NAME shadow channel with arbitrary distribution $p$ is
\begin{equation}
    \lambda_{a,b} = \begin{cases}
        \frac{1}{2} + \frac{1}{2} \widehat{p}(b), & \text{if $a = \zero$}, \\
        \frac{1}{2} p\bigl(a\bigr), & \text{if $a \neq \zero$},
    \end{cases}
\end{equation}
where
\begin{equation}
    \widehat{p}(b) = \sum_{s \in \{0, 1\}^n} p(s) \chi_b(s)
\end{equation}
is the Fourier transform of $p$ over $\ZZ_2^n$.
\end{theorem}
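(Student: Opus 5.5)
By Proposition~\ref{prop:M-eigenvalues}, the eigenvalue $\lambda_{a,b}$ is the probability, over $k$ uniform in $\{0,1\}$ and $s \sim p$, that $U_{k,s} P_{a,b} U_{k,s}^\dagger$ is of $Z$-type. We therefore plan to write
\begin{equation*}
    \lambda_{a,b} = \sum_{s \in \{0,1\}^n} \sum_{k \in \{0,1\}} \frac{1}{2}\, p(s)\, \mathbbm{1}\bigl[\text{$U_{k,s} P_{a,b} U_{k,s}^\dagger$ is of $Z$-type}\bigr]
\end{equation*}
and evaluate the indicator using Lemma~\ref{lem:UPU-Z-type}. That lemma gives two disjoint conditions, namely $a = \zero$ with $\chi_b(s) = 1$, or $a = s$ with $\chi_b(s) = (-1)^k$. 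This suggests splitting into the cases $a = \zero$ and $a \neq \zero$. We should take care when $s = \zero$, where the two conditions can overlap; we handle this by treating the case $a=\zero$ as a single logical ``or''.

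\emph{Case $a \neq \zero$.} The first condition fails, so only $s = a$ contributes. For this $s$, exactly one $k \in \{0,1\}$ satisfies $(-1)^k = \chi_b(a)$, since $\chi_b(a) \in \{\pm 1\}$. The sum collapses to $\frac{1}{2} p(a)$.

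\emph{Case $a = \zero$.} The condition becomes ``$\chi_b(s) = 1$, or ($s = \zero$ and $\chi_b(\zero) = (-1)^k$)''. When $s = \zero$ we have $\chi_b(\zero) = 1$, so the first clause already holds and the second adds nothing. Hence the indicator is $\mathbbm{1}[\chi_b(s) = 1]$, independent of $k$. Summing over $k$ cancels the factor $\frac{1}{2}$, which gives
\begin{equation*}
    \lambda_{\zero,b} = \sum_{s} p(s)\, \mathbbm{1}[\chi_b(s)=1] = \sum_s p(s)\, \frac{1 + \chi_b(s)}{2} = \frac{1}{2} + \frac{1}{2}\widehat{p}(b),
\end{equation*}
using $\sum_s p(s) = 1$.

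There is no real obstacle, since Lemma~\ref{lem:UPU-Z-type} does the work. The only subtle point is the overlap at $s = a = \zero$, where we must make sure the contribution is not double counted. Reading the lemma's condition as a logical disjunction, as above, handles this.
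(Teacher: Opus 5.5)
Your proof is correct and follows the paper's argument: you invoke Proposition~\ref{prop:M-eigenvalues}, evaluate the $Z$-type condition with Lemma~\ref{lem:UPU-Z-type}, and split into the case $a=\zero$, where both conditions collapse to $\chi_b(s)=1$ because $\chi_b(\zero)=1$, and the case $a\neq\zero$, where exactly one pair $(k,a)$ contributes $\frac{1}{2}p(a)$. Your treatment of the overlap at $s=\zero$ is the same observation the paper makes.
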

\begin{proof}
By Proposition~\ref{prop:M-eigenvalues}, the eigenvalue corresponding to $P_{a,b}$ for the channel is given by
\begin{equation}
    \lambda_{a,b} = \PP_{k \in \{0, 1\}, s \in \{0, 1\}^n}[\text{$U_{k,s} P_{a,b} U_{k,s}^\dagger$ is of $Z$-type}].
\end{equation}
First, assume $a = \zero$. Then both conditions in Lemma~\ref{lem:UPU-Z-type} collapse to $\chi_b(s) = 1$ since $\chi_b(\zero) = 1$, and we have
\begin{equation}
    \lambda_{a,b} = \PP_{k \in \{0,1\}, s \in \{0, 1\}^n}[\chi_b(s) = 1] = \PP_{s \in \{0, 1\}^n}[\chi_b(s) = 1].
\end{equation}
Expanding the probability yields
\begin{equation}
    \lambda_{a,b} = \sum_{s \in \{0, 1\}^n} p(s) \cdot \frac{1}{2}\bigl(\chi_b(s) + 1\bigr) = \frac{1}{2}\widehat{p}(b) + \frac{1}{2}.
\end{equation}
Second, assume $a \neq \zero$. Then both conditions in Lemma~\ref{lem:UPU-Z-type} collapse to $a = s$ and $\chi_b(s) = (-1)^k$. Therefore,
\begin{equation}
    \lambda_{a,b} = \PP_{k \in \{0,1\}, s \in \{0, 1\}^n}[\text{$a = s$ and $\chi_b(a) = (-1)^k$}] = \frac{1}{2}p(a),
\end{equation}
since exactly one unitary $U_{k,s}$ satisfies this condition.
\end{proof}

The previous theorem provides an explicit formula for the eigenvalues of the \NAME shadow channel in terms of the probability distribution $p$ and its Fourier transform $\widehat{p}$.
However, if one is only interested in a lower bound on the eigenvalues, or equivalently an upper bound on the shadow norm of Pauli operators, then the case distinction in the previous theorem can be simplified.

\begin{corollary}\label{corr:general-eigenvalue-bound}
Let $a, b \in \{0, 1\}^n$. Then the eigenvalue corresponding to $P_{a,b}$ for the \NAME shadow channel with arbitrary distribution $p$ is bounded from below by
\begin{equation}
    \lambda_{a,b} \geq \frac{1}{2} p(a).
\end{equation}
\end{corollary}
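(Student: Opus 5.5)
The plan is to read the bound off Theorem~\ref{thm:main-channel-eigenvalues} by splitting into the same two cases, $a \neq \zero$ and $a = \zero$.

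\emph{Case $a \neq \zero$.} The theorem gives $\lambda_{a,b} = \frac{1}{2}p(a)$ exactly. The inequality therefore holds with equality, and nothing more is needed.

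\emph{Case $a = \zero$.} The theorem gives $\lambda_{\zero,b} = \frac{1}{2} + \frac{1}{2}\widehat{p}(b)$, and the target is $\frac{1}{2}p(\zero)$. So it suffices to show
\begin{equation}
    1 + \widehat{p}(b) \geq p(\zero).
\end{equation}
Expand $\widehat{p}(b) = \sum_{s} p(s)\chi_b(s)$ and isolate the term $s = \zero$. Since $\chi_b(\zero) = 1$, that term contributes exactly $p(\zero)$. Each remaining term satisfies $p(s)\chi_b(s) \geq -p(s)$, which gives
\begin{equation}
    \widehat{p}(b) \geq p(\zero) - \sum_{s \neq \zero} p(s) = p(\zero) - \bigl(1 - p(\zero)\bigr) = 2p(\zero) - 1.
\end{equation}
Hence $\frac{1}{2} + \frac{1}{2}\widehat{p}(b) \geq p(\zero) \geq \frac{1}{2}p(\zero)$, as required.

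A more conceptual route to the same conclusion is the proof of Theorem~\ref{thm:main-channel-eigenvalues} itself, which shows $\lambda_{\zero,b} = \PP_s[\chi_b(s) = 1]$. The event $s = \zero$ lies inside $\{\chi_b(s) = 1\}$, so $\lambda_{\zero,b} \geq p(\zero) \geq \frac{1}{2}p(\zero)$.

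There is no real obstacle here. The only point requiring care is the $a = \zero$ case, where $\widehat{p}(b)$ can be negative, so one must lower-bound it using the $s = \zero$ term rather than simply dropping it.
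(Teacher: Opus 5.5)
Your proof is correct and takes the same approach as the paper. The paper writes $\lambda_{\zero,b} = \sum_{s} p(s)\cdot\frac{1}{2}\bigl(\chi_b(s)+1\bigr)$ and drops the non-negative terms with $s \neq \zero$ to get $\lambda_{\zero,b} \geq p(\zero) \geq \frac{1}{2}p(\zero)$, which is your ``conceptual route'' and algebraically the same as your bound $\chi_b(s) \geq -1$; the case $a \neq \zero$ is an equality, exactly as you say.
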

\begin{proof}
Assume $a = \zero$. Then Theorem~\ref{thm:main-channel-eigenvalues} implies
\begin{equation}
    \lambda_{a,b} = \frac{1}{2} + \frac{1}{2} \widehat{p}(b) = \sum_{s \in \{0, 1\}^n} p(s) \cdot \frac{1}{2}\bigl(\chi_b(s) + 1\bigr).
\end{equation}
Since each summand is non-negative, it follows that
\begin{equation}
    \lambda_{a,b} \geq p(\zero) \cdot \frac{1}{2}\bigl(\chi_b(\zero) + 1\bigr) = p(\zero) \geq \frac{1}{2} p(\zero).
\end{equation}
Assume $a \neq \zero$. Then Theorem~\ref{thm:main-channel-eigenvalues} immediately yields $\lambda_{a,b} = \frac{1}{2} p(a)$.
\end{proof}

As a direct consequence of the previous corollary, we obtain that if the probability distribution $p$ is strictly positive, then all eigenvalues of $\MMM$ are also strictly positive and the inverse $\MMM^{-1}$ exists.
In this case, the \NAME shadow channel is tomographically complete. Furthermore, Corollary~\ref{corr:general-eigenvalue-bound} yields the simple upper bound on the shadow norm of Pauli operators
\begin{equation}
    \norm{P_{a,b}}_{\mathrm{sh}}^2 \leq \frac{2}{p(a)},
\end{equation}
which is independent of $b$ and sharp for $a \neq \zero$.

While the previous corollary provides a simple lower bound on the eigenvalues of the \NAME shadow channel, we require the exact eigenvalues to compute the inverse $\MMM^{-1}$ of the shadow channel, since it is given by
\begin{equation}
    \MMM^{-1}(P_{a,b}) = \lambda_{a,b}^{-1} P_{a,b}.
\end{equation}

To determine the eigenvalues in concrete examples, we can use the fact that the Fourier transform over $\ZZ_2^n$ agrees with the usual Hadamard transform from quantum information up to normalization. Let $p \colon \{0, 1\}^n \to [0, 1]$ be a probability distribution and define the vector
\begin{equation}
    \ket{p} = \sum_{x \in \{0, 1\}^n} p(x) \ket{x}.
\end{equation}
Then the vector $\ket{\widehat{p}}$ corresponding to the Fourier transform of $p$ is given by
\begin{equation}
    \lvert\widehat{p}\rangle = \sqrt{2^n} H^{\otimes n} \ket{p},
\end{equation}
where $H = \frac{1}{\sqrt{2}} \mleft(\begin{smallmatrix}
1 & 1 \\ 1 & -1
\end{smallmatrix}\mright)$ denotes the Hadamard gate. This formulation allows us to derive explicit formulas for the eigenvalues of the shadow channels in Example~\ref{ex:eigenvalues-uniform} and Example~\ref{ex:eigenvalues-binomial}. Moreover, it implies that the Fourier transform of $p$ can always be computed numerically in time $\OOO(n 2^n)$ using a fast Hadamard transform.

\subsection{Example distributions}\label{sec:example-distributions}
In the following, we apply Theorem~\ref{thm:main-channel-eigenvalues} to the three probability distributions from Example~\ref{ex:probability-distributions}
and derive closed formulas for the corresponding eigenvalues of the \NAME shadow channel.

\begin{example}[Eigenvalues for uniform distribution]\label{ex:eigenvalues-uniform}
Assume every $s \in \{0, 1\}^n$ is chosen uniformly with probability $p(s) = \frac{1}{2^n}$ as in Example~\ref{item:uniform-distribution}.
As discussed above, the vector $\ket{\widehat{p}}$ corresponding to the Fourier transform of $p$ is given by
\begin{equation}
    \ket{\widehat{p}} = \sqrt{2^n} H^{\otimes n} \ket{p} = H^{\otimes n} \mleft[\frac{1}{\sqrt{2^n}} \sum_{x \in \{0,1\}^n} \ket{x} \mright]
    = H^{\otimes n} H^{\otimes n} \ket{\zero} = \ket{\zero}.
\end{equation}
Therefore, we immediately have
\begin{equation}
    \widehat{p}(b) = \begin{cases}
        1, & \text{if $b = \zero$}, \\
        0, & \text{if $b \neq \zero$},
    \end{cases}
\end{equation}
and Theorem~\ref{thm:main-channel-eigenvalues} yields
\begin{equation}
    \lambda_{a,b} = \begin{cases}
        1, & \text{if $a = \zero$ and $b = \zero$}, \\
        \frac{1}{2}, & \text{if $a = \zero$ and $b \neq \zero$}, \\
        \frac{1}{2^{n+1}}, & \text{if $a \neq \zero$}.
    \end{cases}
\end{equation}
\end{example}

\begin{example}[Eigenvalues for binomial distribution]\label{ex:eigenvalues-binomial}
\label{ex:binomial-distribution}
Let $q \in (0, 1)$ and assume that every $s \in \{0, 1\}^n$ is chosen as in Example~\ref{item:binomial-distribution} with probability
\begin{equation}
    p(s) = q^{\wt(s)} (1-q)^{n - \wt(s)}.
\end{equation}
Using the previous notation, the vector $\ket{p}$ corresponding to the distribution $p$ can be written as a tensor product
\begin{equation}
    \ket{p} = \bigotimes_{i=1}^n \bigl((1-q) \ket{0} + q \ket{1}\bigr).
\end{equation}
Then its Fourier transform $\ket{\widehat{p}}$ factors as
\begin{equation}
    \ket{\widehat{p}} = \sqrt{2^n} H^{\otimes n} \ket{p} = \bigotimes_{i=1}^n \sqrt{2} H \bigl((1-q) \ket{0} + q \ket{1}\bigr).
\end{equation}
On each single qubit, we compute
\begin{equation}
    \begin{pmatrix}
        1 & 1 \\ 1 & -1
    \end{pmatrix}
    \begin{pmatrix}
        1-q \\ q
    \end{pmatrix}
    = \begin{pmatrix}
        1 \\ 1 - 2q
    \end{pmatrix}.
\end{equation}
Therefore, we have
\begin{equation}
   \widehat{p}(b) = 1^{n - \wt(b)}  \cdot (1 - 2q)^{\wt(b)} = (1 - 2q)^{\wt(b)},
\end{equation}
and Theorem~\ref{thm:main-channel-eigenvalues} yields
\begin{equation}
    \lambda_{a,b} = \begin{cases}
        \frac{1}{2} + \frac{1}{2} (1 - 2q)^{\wt(b)}, & \text{if $a = \zero$}, \\
        \frac{1}{2} q^{\wt(a)} (1-q)^{n - \wt(a)}, & \text{if $a \neq \zero$}.
    \end{cases}
\end{equation}
\end{example}

\begin{example}[Eigenvalues for uniform--uniform distribution]
Assume that every $s \in \{0, 1\}^n$ is chosen as in Example~\ref{item:uniform-uniform-distribution} with probability
\begin{equation}
    p(s) = \frac{1}{n+1} \cdot \binom{n}{\wt(s)}^{-1}.
\end{equation}
Furthermore, assume $a = \zero$. Then Theorem~\ref{thm:main-channel-eigenvalues} implies
\begin{equation}
    \lambda_{\zero,b} = \frac{1}{2} + \frac{1}{2} \widehat{p}(b) = \sum_{x \in \{0, 1\}^n} p(x) \cdot \frac{1}{2}\bigl(\chi_b(x) + 1\bigr)
    = \sum_{\substack{x \in \{0, 1\}^n \\ \text{$x \cdot b$ even}}} p(x).
\end{equation}
Denote by $k = \wt(x)$ and $m = \wt(b)$ the Hamming weights of $x$ and $b$, respectively. Assume that $x \cdot b$ is even and that there are $2\ell$ indices with $x_i = b_i = 1$. Then $x_i = 1$ for $2\ell$ of the $m$ indices with $b_i = 1$ and $x_i = 1$ for $k - 2\ell$ of the $n - m$ indices with $b_i = 0$. Thus,
\begin{equation}
    \lambda_{\zero,b}
    = \sum_{\ell = 0}^{\lfloor m /2 \rfloor} \sum_{k=2\ell}^{n-m+2\ell} \binom{m}{2\ell} \binom{n - m}{k - 2\ell} \cdot \frac{1}{n+1} \binom{n}{k}^{-1}.
\end{equation}
Using the beta function identity
\begin{equation}
    \frac{1}{n+1} \binom{n}{k}^{-1} = B(k + 1, n - k + 1) = \int_0^1 t^k (1 - t)^{n-k} \, \dd t,
\end{equation}
we obtain
\begin{equation}
\lambda_{\zero,b} = \int_0^1 \sum_{\ell = 0}^{\lfloor m /2 \rfloor} \binom{m}{2\ell} t^{2\ell} (1 - t)^{m - 2\ell} \sum_{k=2\ell}^{n-m+2\ell} \binom{n - m}{k - 2\ell} t^{k - 2\ell} (1 - t)^{n-m-(k-2\ell)} \, \dd t.
\end{equation}
By the binomial theorem, we have
\begin{equation}
    \sum_{k = 2\ell}^{n-m+2\ell} \binom{n - m}{k - 2\ell} t^{k - 2\ell} (1 - t)^{n-m-(k-2\ell)}
    = \bigl(t + (1-t)\bigr)^{n - m} = 1.
\end{equation}
Furthermore,
\begin{equation}
    \sum_{\ell=0}^{\lfloor m /2 \rfloor} \binom{m}{2\ell} t^{2\ell} (1 - t)^{m - 2\ell}
    = \frac{1 + (1 - 2t)^{m}}{2}
\end{equation}
is the probability that a $\operatorname{Binomial}(m, t)$ random variable is even. Hence,
\begin{equation}
    \lambda_{\zero,b} = \int_0^1 \frac{1 + (1 - 2t)^{m}}{2} \, \dd t =
     \frac{1}{2} + \frac{1}{2} \mleft[ \frac{(1 - 2t)^{m+1}}{-2(m+1)} \mright]_0^1
     = \frac{1}{2} + \frac{(-1)^{m} + 1}{4(m+1)}
\end{equation}
and Theorem~\ref{thm:main-channel-eigenvalues} yields
\begin{equation}
    \lambda_{a,b} = \begin{cases}
        \frac{1}{2} + \frac{1}{2(\wt(b)+1)}, & \text{if $a = \zero$ and $\wt(b)$ is even}, \\
        \frac{1}{2}, & \text{if $a = \zero$ and $\wt(b)$ is odd}, \\
        \frac{1}{2(n+1)} \binom{n}{\wt(a)}^{-1}, & \text{if $a \neq \zero$}.
    \end{cases}
\end{equation}
In particular, this shows that the Fourier coefficients of $p$ are given by
\begin{equation}
    \widehat{p}(b) = \begin{cases}
        \frac{1}{\wt(b) + 1}, & \text{if $\wt(b)$ is even}, \\
        0, & \text{if $\wt(b)$ is odd}.
    \end{cases}
\end{equation}
\end{example}

\section{Shadow-norm bounds for structured observables}\label{sec:shadow-norm}

In this section, we generalize the following shadow-norm bound for Pauli operators from Section~\ref{sec:channel-spectrum}
to general Hermitian operators:
\begin{equation}
    \norm{P_{a,b}}_{\mathrm{sh}}^2 \leq \frac{2}{p(a)}.
\end{equation}
To formulate this result, we first introduce the following notion of sparsity.

\begin{definition}[$S$-sparsity]\label{def:S-sparse}
Let $S \subseteq \{0,1\}^n$. Then an operator $A$ is \emph{$S$-sparse}
if $a \notin S$ implies $\Tr(AP_{a,b}) = 0$
for all $b \in \{0,1\}^n$.
\end{definition}

Equivalently, an operator $A$ is $S$-sparse if it is supported only on Pauli operators $P_{a,b}$ with $a \in S$. For example, $S = \{\zero\}$ corresponds precisely to operators that are diagonal in the computational basis, while $S = \{ \one \}$ corresponds precisely to anti-diagonal operators.
Figure~\ref{fig:sparsity-structure} illustrates the sparsity structure of a general $2$-qubit operator $A = (a_{ij})_{i,j=1}^4$ by coloring
the matrix entries according to the sparsity pattern. For example, we immediately see that $\{ 10, 11 \}$-sparse operators are block matrices with zero diagonal blocks, while $\{ 00, 01 \}$-sparse operators are block matrices with zero anti-diagonal blocks.
Furthermore, this partitioning agrees precisely with the partitioning of the density matrix $\rho$ induced by the \NAME measurement settings in~\cite{patel2026selective}.

\begin{figure}[htbp]
  \centering
  \begin{tikzpicture}[x=9mm,y=-9mm,line width=0.45pt]
    \foreach \row/\colors in {
      0/{dm00,dm01,dm10,dm11},
      1/{dm01,dm00,dm11,dm10},
      2/{dm10,dm11,dm00,dm01},
      3/{dm11,dm10,dm01,dm00}%
    }{
      \foreach \cellcolor [count=\col from 0] in \colors {
        \fill[\cellcolor]
          (\col,\row) rectangle ++(1,1);
            \pgfmathtruncatemacro{\rowindex}{\row+1}
            \pgfmathtruncatemacro{\colindex}{\col+1}
            \node[font=\Large]
            at ({\col+0.5},{\row+0.5})
            {$a_{\rowindex\colindex}$};
      }
    }
    \foreach \k in {0,...,4} {
      \draw (\k,0) -- (\k,4);
      \draw (0,\k) -- (4,\k);
    }
    \foreach \bits [count=\row from 0] in {00,01,10,11} {
      \draw[fill=dm\bits]
        (5,{\row+0.28}) rectangle ++(0.44,0.44);
      \node[anchor=west,font=\large,inner sep=0pt]
        at (5.65,{\row+0.5}) {$\bits$};
    }
  \end{tikzpicture}
  \caption{Sparsity pattern for a general $2$-qubit operator $A$.}
  \label{fig:sparsity-structure}
\end{figure}

Using the notion of $S$-sparsity, we can now state the main theorem of this section, which allows us to bound the shadow norm of a $S$-sparse Hermitian operator $A$ in terms of the probability distribution $p$ and the operator norm $\norm{A}_\infty$.

\begin{theorem}\label{thm:main-shadow-bound}
Let $S \subseteq \{0,1\}^n$ be non-empty and $A$ be an $S$-sparse Hermitian operator. Then the shadow norm of the \NAME ensemble with a strictly positive distribution $p$ is bounded by
\begin{equation}
    \norm{A}_{\mathrm{sh}}^2
    \leq
    (6 + \pi) \mleft[\max_{s \in S} \frac{1}{p(s)}\mright] \norm{A}^2_\infty.
\end{equation}
\end{theorem}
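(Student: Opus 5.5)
Our plan is to start from Proposition~\ref{prop:shadow-norm-short}, which gives $\norm{A}_{\mathrm{sh}}^2 = \sup_\sigma \Tr(\sigma\, \EE_{k,s}[\MMM_{k,s}(\MMM^{-1}(A))^2])$, and so reduces the problem to proving the operator inequality
\begin{equation*}
\EE_{k,s}\bigl[\MMM_{k,s}(\MMM^{-1}(A))^2\bigr] \preceq (6+\pi)\Bigl[\max_{s\in S} p(s)^{-1}\Bigr]\norm{A}_\infty^2 \cdot I .
\end{equation*}
We decompose $A = D + \sum_{a \in S\setminus\{\zero\}} A_a$, where $A_a = \sum_b \alpha_{a,b}P_{a,b}$ is the part of $A$ with $X$-support $a$ and $D = A_{\zero}$ is its diagonal part (present only if $\zero\in S$). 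By Lemma~\ref{lem:UPU-Z-type}, $\MMM_{k,s}$ annihilates $A_a$ for $a\notin\{\zero,s\}$. On $A_s$ it keeps exactly the $b$ with $\chi_b(s)=(-1)^k$, and on $D$ the $b$ with $\chi_b(s)=1$. With $\MMM^{-1}(A_s)=\frac{2}{p(s)}A_s$, this gives
\begin{equation*}
\MMM_{k,s}(\MMM^{-1}(A)) = \tfrac{2}{p(s)}\, \Pi_{k,s}(A_s) + \MMM_{k,s}(\MMM^{-1}(D)),
\end{equation*}
where $\Pi_{k,s}$ is the projection onto the relevant half of the $b$'s. Using $(X+Y)^2 \preceq 2X^2+2Y^2$, we bound the two pieces separately.

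\textbf{Off-diagonal piece.} $\Pi_{k,s}(A_s)$ equals $\frac12(A_s \pm \text{conjugated }A_s)$, the conjugation being by $P_{s,\cdot}$-type symmetries of the GHZ basis; in fact it is the pinching of $A_s$ onto the $U_{k,s}$-eigenbasis. Hence its spectral norm is at most $\norm{A_s}_\infty$. Moreover $\norm{A_s}_\infty\le\norm{A}_\infty$, because $A_s$ is the average of $A$ twisted by diagonal Paulis $P_{\zero,c}$ with phases $\chi_c(\cdot)$, i.e.\ a Schur-multiplier/pinching onto the pattern $x\mapsto x+s$. Weighting by $\frac{p(s)}{2}$ per $(k,s)$, the expectation contributes
\begin{equation*}
\sum_{s\in S}\sum_k \tfrac{p(s)}{2}\cdot\tfrac{4}{p(s)^2}\norm{A}_\infty^2 .
\end{equation*}
As written this is too lossy: it sums over $S$ rather than taking a max. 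We therefore keep the operator structure, writing $\sum_s \frac{2}{p(s)}\Pi(A_s)^2 \preceq 2\max_s p(s)^{-1} \sum_s \Pi(A_s)^2$. We would then use that $\EE_k\Pi_{k,s}(A_s)^2$ is dominated by the pinched $A_s^2$, and that $\sum_s$ of the block-diagonal pieces $A_s^2$ restricted appropriately is bounded by $A^2\preceq\norm{A}_\infty^2$. This holds because the $A_s$ occupy disjoint entry patterns $\{(x,x+s)\}$, so $\sum_s \mathrm{diag}(A_s^2)=\mathrm{diag}(A^2)$. The aim for this piece is a constant of about $4$.

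\textbf{Diagonal piece (main obstacle).} Identify $D$ with a function $f$ on $\ZZ_2^n$. The map $\MMM^{-1}$ multiplies $\widehat f(b)$ by $\bigl(\tfrac12+\tfrac12\widehat p(b)\bigr)^{-1}$, which is unbounded in sup norm a priori. After applying $\MMM_{k,s}$, which projects onto the characters with $\chi_b(s)=1$ and is realised on the double cover where $s$ acts, the piece becomes a Dirichlet-form-type quantity. The plan is to write
\begin{equation*}
\bigl(\tfrac12+\tfrac12\widehat p\bigr)^{-1} = 2\bigl(1+\widehat p\bigr)^{-1} = 2\,(\lambda\,\id - L)^{-1}
\end{equation*}
for the generator $L = \widehat p - \id$ of the Markov semigroup of the random walk with step distribution $p$ (lifted to a double cover to encode the parity constraint), with $\lambda=2$. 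Proposition~\ref{prop:resolvent-integral} then expresses this as $\int_0^\infty e^{-2t}P_t\,\dd t$. The quantity $\EE_s[(\text{projected }f)^2]$ should equal $f^2$ plus a carré du champ term $\Gamma$. We would verify $\Gamma(P_tf)\le P_t\Gamma f$ (it holds with equality, since the walk is translation invariant on an abelian group) and apply Proposition~\ref{prop:rev-poincare-inequ}: splitting the resolvent integral at a cutoff and using $\Gamma(P_tf)\le\frac{1}{2t}\norm f_\infty^2$ together with the Cauchy--Schwarz bound $\sqrt{\Gamma}$ inside the integral gives $\int_0^\infty e^{-2t}(2t)^{-1/2}\dd t=\sqrt{\pi}/2$-type factors. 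These should combine into the $\pi$ in the constant, with $p(\zero)^{-1}\le\max_S p^{-1}$ absorbing the rest. Getting this Dirichlet-form identity exactly right is the step I expect to be hardest.
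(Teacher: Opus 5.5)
Your overall architecture is the paper's. You reduce via Proposition~\ref{prop:shadow-norm-short} to an operator bound and split with $(X+Y)^2\preceq 2X^2+2Y^2$. You control the off-diagonal sectors through $\sum_s\diag(A_s^2)=\diag(A^2)$, as in Lemma~\ref{lem:sector-square-sum}, and the diagonal sector by a resolvent plus reverse-Poincar\'e argument. The off-diagonal piece is essentially right, but to reach the constant $4$ you need the exact identity $\Pi_{0,s}(A_s)^2+\Pi_{1,s}(A_s)^2=A_s^2$. This holds because the anticommutator of the two halves vanishes by Eq.~\eqref{eq:Pauli-product}. You also need that $A_s^2=\sum_x\lvert\bra{x}A\ket{x+s}\rvert^2\ketbra{x}$ is already diagonal. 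A Kadison--Schwarz domination $\Pi_{k,s}(A_s)^2\preceq A_s^2$ for each $k$ would only give $8$.

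\textbf{The genuine gap is the diagonal piece.}

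\emph{Your resolvent identity is wrong.} The identity $2(1+\widehat p)^{-1}=2(\lambda\,\id-L)^{-1}$ with $L=\widehat p-\id$ and $\lambda=2$ is false, since $2\id-L$ has multiplier $3-\widehat p(b)$. No Markov generator can repair it with $\lambda=2$. The operator $2(2\id-L)^{-1}=\int_0^\infty 2e^{-2t}P_t\,\dd t$ is an $\ell^\infty$-contraction. On diagonal operators, however, $\MMM^{-1}$ has eigenvalue $\frac{2}{1+\widehat p(b)}>1$ on $\chi_b$ for every $b\neq\zero$. This eigenvalue is close to $1/p(\zero)$ when $p$ puts almost all nonzero mass on $\{s:\chi_b(s)=-1\}$. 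The underlying obstruction is that $(\MMM g)(x)=\frac12\sum_s p(s)\bigl(g(x)+g(x+s)\bigr)$ is built from sums, not differences.

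\emph{The missing idea is the antisymmetric lift.} Take $\ZZ_2^n\times\ZZ_2$ with edges $(x,\varepsilon)\sim(x+s,\varepsilon+1)$ of weight $p(s)$ for $s\neq\zero$, let $\Delta$ be its graph Laplacian, and set $\widetilde g(x,\varepsilon)=(-1)^\varepsilon g(x)$. Then $g(x)+g(x+s)=(-1)^\varepsilon\bigl(\widetilde g(x,\varepsilon)-\widetilde g(x+s,\varepsilon+1)\bigr)$. Hence $\widetilde{\MMM g}=p(\zero)\widetilde g+\frac12\Delta\widetilde g$ and $\EE_{k,s}[(\MMM_{k,s}g)^2]=p(\zero)g^2+\frac12\EEE\widetilde g$. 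The double cover is not there to encode $\chi_b(s)=1$; Lemma~\ref{lem:translation-operator} already handles that. It is there to turn sums into differences.

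\emph{Splitting off $s=\zero$ is essential.} It supplies $\lambda=2p(\zero)>0$. Keeping it as an edge gives $\lambda=0$ and a divergent $\int_0^\infty t^{-1/2}\,\dd t$. This is where the factor $1/p(\zero)$ you later want to absorb comes from; with your $\lambda=2$, nothing produces it.

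\emph{How the pieces combine.} With $g=\MMM^{-1}D$, Lemma~\ref{lem:graph-energy-estimate} gives $\lVert\EEE\widetilde g\rVert_\infty\le\frac{\pi}{p(\zero)}\lVert D\rVert_\infty^2$. The remaining term carries coefficient $p(\zero)$, not $1$ as in your ``$f^2$ plus $\Gamma$'', which would cost $1/p(\zero)^2$. It is bounded using $\lVert\MMM g\rVert_\infty\ge p(\zero)\lVert g\rVert_\infty$. Together these give $(2+\pi)/p(\zero)$, which with the off-diagonal $4$ yields $6+\pi$.

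A minor point: $\Gamma(P_tf)\le P_t\Gamma f$ is not an equality in general. It follows from commuting $T_s$ with $P_t$ and applying Jensen's inequality $(P_th)^2\le P_t(h^2)$, which is all that is needed.
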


The proof of Theorem~\ref{thm:main-shadow-bound} is non-trivial and uses both Fourier analysis over $\ZZ_2^n$ and the theory of Markov semigroups applied to heat semigroups on graphs. Hence, we postpone the proof to Section~\ref{sec:proof} and first apply Theorem~\ref{thm:main-shadow-bound} to several examples, where we can compare our bound to existing shadow-norm bounds.
We refer to Section~\ref{sec:experiments} for experimental results and a more practical evaluation of our shadow protocol.

\begin{example}\label{ex:uniform-shadow-norm}
In the simplest case, we can choose $S = \{0, 1\}^n$ and assume that every $s \in S$ is chosen uniformly with probability $p(s) = \frac{1}{2^n}$ as in Example~\ref{item:uniform-distribution}.
Then Theorem~\ref{thm:main-shadow-bound} implies that the shadow norm of an arbitrary Hermitian operator $A$ is bounded by
\begin{equation}
    \norm{A}_{\mathrm{sh}}^2
    \leq
    (6 + \pi) \cdot 2^n \cdot \norm{A}_\infty^2.
\end{equation}
Denote by $\norm{\cdot}_{\mathrm{sh,C}}$ the shadow norm of the global Clifford ensemble in~\cite{huang2020predicting}. Then the authors prove the general bound
\begin{equation}
    \norm{A}_{\mathrm{sh,C}}^2
    \leq
    3 \cdot \Tr(A^2).
\end{equation}
Since $\Tr(A^2) \leq 2^{n} \cdot \norm{A}_\infty^2$, with equality for example for Pauli operators, it follows that
\begin{equation}
    \norm{A}_{\mathrm{sh,C}}^2
    \leq
    3 \cdot 2^n \cdot \norm{A}_\infty^2.
\end{equation}
Hence, our bound for the \NAME shadow norm in the case of the uniform distribution differs only by a constant factor of approximately $3$ from the corresponding bound for the global Clifford shadow norm.
According to~\cite[Theorem 2]{huang2020predicting}, the $2^n$ scaling in $n$ is optimal in the worst case for protocols based on single-copy measurements and arbitrary observables.
\end{example}

\begin{example}\label{ex:uniform-on-S}
Let $S \subseteq \{0, 1\}^n$ be a proper non-empty subset. Fix $\delta \in (0, 1)$ and consider the probability distribution
\begin{equation}
    p(s) = \begin{cases}
        \frac{\delta}{|S|}, & \text{if $s \in S$}, \\
        \frac{1 - \delta}{2^n - |S|}, & \text{otherwise}.
    \end{cases}
\end{equation}
As $\delta \to 1$, this distribution minimizes the bound in Theorem~\ref{thm:main-shadow-bound} by approaching the uniform distribution over $S$ while guaranteeing the invertibility of the shadow channel for $\delta < 1$. In this case, we have
\begin{equation}
    \norm{A}_{\mathrm{sh}}^2
    \leq
    (6 + \pi) \cdot \frac{|S|}{\delta} \cdot \norm{A}_\infty^2.
\end{equation}
Following the discussion in Example~\ref{ex:uniform-shadow-norm}, the corresponding bound for the global Clifford ensemble is
\begin{equation}
    \norm{A}_{\mathrm{sh,C}}^2
    \leq
    3 \cdot 2^n \cdot \norm{A}_\infty^2
\end{equation}
for $S$-sparse operators, since $\Tr(A^2) = 2^{n} \cdot \norm{A}_\infty^2$ is attained by Pauli operators $P_{s,b}$ with $s \in S$.
Therefore, we improve the shadow-norm bound for the global Clifford ensemble by a factor of
\begin{equation}
    \frac{6 + \pi}{3 \delta} \cdot \frac{|S|}{2^n}
\end{equation}
if $|S| < \frac{3 \delta}{6 + \pi} \cdot 2^n$. In particular, this improvement can be exponential if $|S|$ is exponentially smaller than $2^n$.
\end{example}

\subsection{Shadow-norm bounds for \texorpdfstring{$X/Y$}{X/Y}-local observables}\label{sec:XY-local-observables}

We now consider a natural family of structured observables for which the uniform--uniform distribution and the binomial distribution in Example~\ref{ex:binomial-distribution} can be used to significantly reduce the shadow-norm bound compared to both the bound for the \NAME ensemble with a uniform distribution, as well as the bounds for the local Pauli and global Clifford ensembles in~\cite{huang2020predicting}.

Let $k \in \NN$. Then we call an operator \emph{$k$-$X/Y$-local} if it is supported only on Pauli operators with at most $k$ tensor factors belonging to $\{X, Y\}$. Equivalently, an operator is $k$-$X/Y$-local if it is $S$-sparse for
\begin{equation}
    S = \bigl\{ s \in \{0, 1\}^n : \wt(s) \leq k \bigr\}.
\end{equation}

As a simple example, consider the following operator on $4$ qubits which
is $2$-$X/Y$-local since each Pauli operator has at most $2$ tensor factors belonging to $\{X, Y\}$:
\begin{equation}
    X \otimes I \otimes Z \otimes Y + Z \otimes Z \otimes X \otimes I.
\end{equation}
It follows directly from the definition of $X/Y$-locality that every $k$-local operator is also $k$-$X/Y$-local. However, the converse does not hold. For example,
the operator $Z \otimes \dots \otimes Z$ is $0$-$X/Y$-local but $n$-local.
Besides local operators, other physically relevant examples of $X/Y$-local operators arise in the study of fermionic modes under the Jordan--Wigner transformation~\cite{zhao2021fermionic} or of long-range Kitaev chains~\cite{vodola2014kitaev}.
See Section~\ref{sec:experiments} for more details on the first case and for a practical discussion.

Since the $X/Y$-locality is a special instance of $S$-sparsity, the bound in Theorem~\ref{thm:main-shadow-bound} can be reformulated for this class of observables as
\begin{equation}
    \norm{A}_{\mathrm{sh}}^2
    \leq
    (6 + \pi) \mleft[\max_{\wt(s) \leq k} \frac{1}{p(s)}\mright] \norm{A}^2_\infty.
\end{equation}

In the following examples, we show that, for suitable probability distributions and fixed $k$, we can obtain a bound that is polynomial in $n$. Furthermore, we compare this bound to the corresponding bounds for the local Pauli ensemble and global Clifford ensemble from~\cite{huang2020predicting} at the end of this section.

\begin{example}
Consider the probability distribution from Example~\ref{item:uniform-uniform-distribution} where we first choose a Hamming weight $m \in \{0, \dots, n\}$ uniformly at random and then choose a bitstring $s \in \{0, 1\}^n$ uniformly at random among all bitstrings with $\wt(s) = m$. Using the loose bound $\binom{n}{m} \leq n^m$, we obtain
\begin{equation}
    \frac{1}{p(s)} = (n + 1) \cdot \binom{n}{m} \leq (n + 1) \cdot n^{m} \leq 2 n^{m + 1}.
\end{equation}
Hence, Theorem~\ref{thm:main-shadow-bound} implies that the shadow norm of a $k$-$X/Y$-local Hermitian operator $A$ is bounded by
\begin{equation}
    \norm{A}_{\mathrm{sh}}^2
    \leq (6 + \pi) \cdot 2 n^{k + 1} \cdot \norm{A}_\infty^2.
\end{equation}
For fixed $k$, this bound is polynomial in $n$ and asymptotically improves the exponential bounds for the uniform \NAME ensemble and the global Clifford ensemble discussed in Example~\ref{ex:uniform-shadow-norm}.
\end{example}

The next example shows that the previous exponent of $k + 1$ can be improved to
$k$ by using the binomial distribution in Example~\ref{ex:binomial-distribution} with the right choice of parameter.

\begin{example}\label{ex:binomial-shadow-norm}

Consider the binomial distribution from Example~\ref{ex:binomial-distribution} with parameter $q = \frac{1}{n + 1}$. Then $1 - q = \frac{n}{n+1}$ and we have
\begin{equation}
    \frac{1}{p(s)} = (n + 1)^{{\wt(s)}} \mleft(\frac{n+1}{n}\mright)^{n-\wt(s)}
    =
    \mleft(\frac{n + 1}{n}\mright)^n n^{\wt(s)}.
\end{equation}
Since
\begin{equation}
    \mleft(\frac{n + 1}{n}\mright)^n
    =
    \mleft(1 + \frac{1}{n}\mright)^n
    \leq e,
\end{equation}
it follows that $\frac{1}{p(s)} \leq e \cdot n^{\wt(s)}$.
Therefore, Theorem~\ref{thm:main-shadow-bound} implies that the shadow norm of a $k$-$X/Y$-local Hermitian operator $A$ is bounded by
\begin{equation}
    \norm{A}_{\mathrm{sh}}^2
    \leq (6 + \pi) \cdot e \cdot n^k \cdot \norm{A}_\infty^2.
\end{equation}
This improves the previous bound for the uniform--uniform distribution asymptotically by a factor of $n$.
\end{example}

The choice of parameter $q = \frac{1}{n+1}$ in the previous example
does not depend on the locality parameter $k$ and yields the asymptotic scaling $\OOO(n^k)$ without knowing the $X/Y$-locality of the observable in advance.
However, if we know an upper bound on the $X/Y$-locality parameter $k$ in advance, we can adjust the parameter $q$ to further improve the constant factor.

\begin{example}\label{ex:binomial-tuned}
Consider the binomial distribution from Example~\ref{ex:binomial-distribution}, and let $s \in \{0, 1\}^n$ with Hamming weight $\wt(s) = m$. Then we have
\begin{equation}
    \frac{1}{p(s)} = q^{-m} (1-q)^{-(n-m)}.
\end{equation}
Minimizing this expression with respect to $q$ yields the optimal choice $q = \frac{m}{n}$, which gives
\begin{equation}
    \frac{1}{p(s)} = 2^{n H_2(m/n)},
    \qquad
    H_2(t) = -t \log_2(t) - (1-t) \log_2(1-t),
\end{equation}
where $H_2$ denotes the binary entropy function, and we use the convention $0 \cdot \log_2(0) = 0$. Now consider a $k$-$X/Y$-local Hermitian operator $A$ with $1 \leq k \leq \frac{n}{2}$
and choose $q = \frac{k}{n}$. Since $q \leq \frac{1}{2}$, we have for every $0 \leq m \leq k$ that
\begin{equation}
    q^{-m}(1-q)^{-(n-m)}
    \leq
    q^{-k}(1-q)^{-(n-k)}
    =
    2^{nH_2(k/n)}.
\end{equation}
Therefore, Theorem~\ref{thm:main-shadow-bound} implies that the shadow norm of $A$ is bounded by
\begin{equation}
    \norm{A}_{\mathrm{sh}}^2
    \leq (6 + \pi) \cdot 2^{n H_2(k/n)} \cdot \norm{A}_\infty^2.
\end{equation}
Using the estimate $H_2(t) \leq t \log_2(e/t)$, we obtain
\begin{equation}
    2^{n H_2(k/n)}
    \leq 2^{k \log_2(en/k)}
    = \mleft(\frac{en}{k}\mright)^{k} = \mleft(\frac{e}{k}\mright)^{k} \cdot n^k.
\end{equation}
This bound has the same asymptotic scaling as in the previous example, but improves the constant factor if $k \geq 2$. Moreover, if $k$ is not fixed but of order $\Theta(n)$, then this bound has a scaling of $2^{\OOO(n)}$ instead of $n^{\OOO(n)}$ as in Example~\ref{ex:binomial-shadow-norm}.

Using a similar argument, one can derive an analogous bound for the complementary case of observables with all Pauli operators in their support having at least $k$ tensor factors belonging to $\{X, Y\}$; see also Section~\ref{sec:exp-XY}.
\end{example}

The bounds in the above examples are polynomial in $n$ for fixed $k$, which significantly improves the exponential scaling $2^n$ of the global Clifford bound as discussed previously. In~\cite{huang2020predicting}, the authors also study the local Pauli ensemble, which admits a shadow-norm bound of the form
\begin{equation}
    \norm{A}_{\mathrm{sh,P}}^2
    \leq
    4^k \cdot \norm{A}_\infty^2
\end{equation}
for $k$-local operators. Thus, in the case of $k$-local operators, our bounds for the \NAME ensemble have a worse scaling of $n^k$ compared to $4^k$ for the local Pauli ensemble. However, for $k$-$X/Y$-local operators that are not $k$-local, such as $Z \otimes \dots \otimes Z$, our bounds can yield an exponential improvement over the local Pauli ensemble.

\section{Proof of Theorem~\ref{thm:main-shadow-bound}}\label{sec:proof}

In this section, we present a proof of Theorem~\ref{thm:main-shadow-bound}, which states that the \NAME shadow norm of an $S$-sparse Hermitian operator $A$ is bounded by
\begin{equation}
    \norm{A}_{\mathrm{sh}}^2
    \leq
    (6 + \pi) \mleft[\max_{s \in S} \frac{1}{p(s)}\mright] \norm{A}^2_\infty.
\end{equation}
The first step will be a decomposition of the Hermitian operator $A$
into the disjoint Pauli sectors $A_{k,s}$ according to the conditions in Lemma~\ref{lem:UPU-Z-type}:
\begin{equation}
    A = \sum_{k\in\{0,1\}} \sum_{s\in\{0,1\}^n} A_{k,s},
    \qquad
    A_{k,s} = \frac{1}{2^n} \sum_{\substack{b\in\{0,1\}^n\\ \chi_b(s)=(-1)^k}} \Tr(AP_{s,b}) P_{s,b}.
\end{equation}
Then we can estimate the shadow norm on each sector $A_{k,s}$ separately and combine the estimates using the following lemma to obtain a bound on the total shadow norm $\norm{A}_{\mathrm{sh}}^2$ in terms of $\norm{A}_\infty^2$. For the full argument, we refer to the \hyperref[proof:shadow-norm-bound]{proof at the end of this section}.

\begin{lemma}\label{lem:sector-square-sum}
Let $A$ be a Hermitian operator. For every
$k\in\{0,1\}$ and $s\in\{0,1\}^n$, define
\begin{equation}
    A_{k,s}
    =
    \sum_{\substack{
        b\in\{0,1\}^n\\
        \chi_b(s)=(-1)^k
    }}
    \alpha_{s,b}P_{s,b},
    \qquad
    \alpha_{s,b}
    =
    \frac{1}{2^n}\Tr(AP_{s,b}).
\end{equation}
Then each $A_{k,s}$ is Hermitian and
\begin{equation}
    \sum_{k\in\{0,1\}}
    \sum_{s\in\{0,1\}^n}
    A_{k,s}^2
    =
    \diag(A^2).
\end{equation}
\end{lemma}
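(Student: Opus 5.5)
We need to prove that each $A_{k,s}$ is Hermitian and that $\sum_{k,s} A_{k,s}^2 = \diag(A^2)$. The plan is to work entirely with matrix entries in the computational basis, using Eq.~\eqref{eq:Pauli-basis-action}.

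\textbf{Hermiticity.} Each $P_{s,b}$ is Hermitian, and $\alpha_{s,b} = 2^{-n}\Tr(AP_{s,b})$ is real because $A$ and $P_{s,b}$ are both Hermitian. Hence every $A_{k,s}$ is a real linear combination of Hermitian operators and is Hermitian.

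\textbf{Combining the two $k$-sectors.} Fix $s$ and set $A_s = A_{0,s} + A_{1,s} = \sum_b \alpha_{s,b} P_{s,b}$. By Eq.~\eqref{eq:Pauli-basis-action}, $A_s$ has nonzero entries only at positions $\bra{x+s}A_s\ket{x}$. Moreover, summing the Pauli expansion over $b$ shows that these entries agree with those of $A$:
\begin{equation}
\bra{x+s}A_s\ket{x} = \bra{x+s}A\ket{x}.
\end{equation}
The idea is to use the character constraint to separate the two sectors. Since $P_{s,b}\ket{x}$ and $P_{s,b}\ket{x+s}$ carry the phases $\chi_b(x)$ and $\chi_b(x+s) = \chi_b(x)\chi_b(s)$, the sector $A_{k,s}$ is exactly the part of $A_s$ in the $(-1)^k$-eigenspace of conjugation by the appropriate structure. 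Concretely, I expect
\begin{equation}
\bra{x+s}A_{k,s}\ket{x} = \tfrac12\bigl(A_{x+s,x} + (-1)^k c_{s,x}\, A_{x,x+s}\bigr)
\end{equation}
for some unimodular phase $c_{s,x}$ that depends on $\ii^{s\cdot b}$. I will verify this by projecting with $\tfrac12\bigl(1+(-1)^k\chi_b(s)\bigr)$ inside the sum over $b$.

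\textbf{Computing the diagonal of the squares.} Since $A_{k,s}$ maps $\ket{x}$ into $\mathrm{span}\{\ket{x+s}\}$, the square $A_{k,s}^2$ maps $\ket{x}$ into $\mathrm{span}\{\ket{x}\}$, so it is diagonal. Using Hermiticity, its diagonal entries are
\begin{equation}
\bra{x}A_{k,s}^2\ket{x} = \bigl|\bra{x+s}A_{k,s}\ket{x}\bigr|^2.
\end{equation}
Summing over $k$ and applying the parallelogram law $|u+v|^2 + |u-v|^2 = 2|u|^2 + 2|v|^2$ gives
\begin{equation}
\tfrac12\bigl(|A_{x+s,x}|^2 + |A_{x,x+s}|^2\bigr) = |A_{x+s,x}|^2,
\end{equation}
where the last equality uses that $A$ is Hermitian. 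Summing over $s$ then yields
\begin{equation}
\sum_y |A_{y,x}|^2 = (A^2)_{xx}.
\end{equation}
This establishes $\sum_{k,s} A_{k,s}^2 = \diag(A^2)$.

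\textbf{Main obstacle.} The delicate step is the explicit sector formula, in particular tracking the phase $\ii^{s\cdot b}$ and confirming that $c_{s,x}$ is unimodular. A cleaner way to handle this is to avoid entries altogether. Note that $P_{s,b}^2 = I$, and that $P_{s,b}P_{s,b'}$ is proportional to $P_{\zero,b+b'}$ with coefficient $\chi_b(s)$ up to a phase, by Eq.~\eqref{eq:Pauli-product}. Then $A_{0,s}^2 + A_{1,s}^2 = \tfrac12\bigl(A_s^2 + T_s\bigr)$, where $T_s$ is $A_s$ twisted by the character; this reduces to checking that the cross terms with $\chi_b(s) \neq \chi_{b'}(s)$ cancel. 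I would try the entrywise route first, since the parallelogram identity makes the structure transparent.
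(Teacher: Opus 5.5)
Your argument is correct, but it takes a different route from the paper in how it handles the sum over $k$. The paper first shows $\{A_{0,s},A_{1,s}\}=0$ using the Pauli product formula: the cross terms carry the factor $\chi_a(s)+\chi_b(s)=0$, so $A_{0,s}^2+A_{1,s}^2=A_s^2$. It then computes $A_s=\sum_x\bra{x}A\ket{x+s}\ketbra{x}{x+s}$ by character orthogonality and squares. The fallback you sketch at the end (cancelling cross terms with $\chi_b(s)\neq\chi_{b'}(s)$) is exactly this route.

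Your main route instead computes each sector's entries and lets the parallelogram law absorb the cross terms. The step you flag as delicate is immediate. Because $\chi_b(x)\chi_b(s)=\chi_b(x+s)$, the twisted sum $\sum_b\alpha_{s,b}\ii^{s\cdot b}\chi_b(x+s)$ is the same expression as the untwisted one, evaluated at $x+s$ instead of $x$. It therefore equals $\bra{x}A_s\ket{x+s}=\bra{x}A\ket{x+s}$. So $c_{s,x}=1$, and equivalently
\[
A_{k,s}=\tfrac12\bigl(A_s+(-1)^kP_{s,\zero}A_sP_{s,\zero}\bigr),
\]
using $P_{s,\zero}P_{s,b}P_{s,\zero}=\chi_b(s)P_{s,b}$.

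Your identification $\bra{x+s}A_s\ket{x}=\bra{x+s}A\ket{x}$ is also fine. Since $P_{a,b}\ket{x}\propto\ket{x+a}$, only the $a=s$ terms of the full Pauli expansion contribute to that entry, which is slightly more direct than the paper's orthogonality computation.

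What your route buys is an explicit description of each sector: $A_{0,s}$ and $A_{1,s}$ are the parts of the off-diagonal block $A_s$ that are symmetric and antisymmetric under conjugation by $P_{s,\zero}$. You never need the Pauli product formula. The paper's route isolates the structural fact that the two sectors anticommute and only ever handles the combined block $A_s$. Once $c_{s,x}=1$ is written out, both arguments are equally short.
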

\begin{proof}
Since $A$ is Hermitian, each of the Pauli coefficients $\alpha_{s,b}$ is real. Thus, $A_{k,s}$ is Hermitian as a real linear combination of Hermitian operators. For $s \in \{0,1\}^n$, define the combined operator
\begin{equation}
    A_s = A_{0,s}+A_{1,s} = \sum_{b\in\{0,1\}^n}
    \alpha_{s,b}P_{s,b}.
\end{equation}
Denote by $\{A,B\} = AB + BA$ the anticommutator of two operators $A$ and $B$. Then
\begin{equation}
    A_{s}^2 = A_{0,s}^2 + A_{1,s}^2 + \{ A_{0,s}, A_{1,s} \},
\end{equation}
where
\begin{equation}
    \{A_{0,s}, A_{1,s}\} =
    \sum_{\substack{
        a\in\{0,1\}^n\\
        \chi_a(s)=1
    }}
    \sum_{\substack{
        b\in\{0,1\}^n\\
        \chi_b(s)=-1
    }}
    \alpha_{s,a}
    \alpha_{s,b}\{P_{s,a}, P_{s,b}\}.
\end{equation}
According to Eq.~\eqref{eq:Pauli-product}, we have
\begin{equation}
    \{P_{s,a}, P_{s,b}\}
    = P_{s,a} P_{s,b} + P_{s,b} P_{s,a} = \ii^{s \cdot a} \ii^{s \cdot b} (-\ii)^{0} \bigl( \chi_a(s) + \chi_b(s) \bigr) P_{\zero, a + b}.
\end{equation}
Hence, $\{A_{0,s}, A_{1,s}\} = 0$ if $\chi_a(s)=1$ and $\chi_b(s)=-1$, which shows that
\begin{equation}
    A_{s}^2 = A_{0,s}^2 + A_{1,s}^2.
\end{equation}
Summing over all $s \in \{0, 1\}^n$ gives
\begin{equation}
    \sum_{k\in\{0,1\}} \sum_{s \in \{0,1\}^n} A_{k,s}^2
    =
    \sum_{s \in \{0,1\}^n} A_s^2.
\end{equation}
Next, we compute $A_s$ in the computational basis. Let $x, y \in \{0, 1\}^n$. Then Eq.~\eqref{eq:Pauli-basis-action} yields
\begin{equation}
    \bra{x} P_{s,b} \ket{y} = \ii^{s \cdot b} \chi_b(y) \braket{x}{y + s} = \begin{cases}
        \ii^{s \cdot b} \chi_b(y), & \text{if $y = x + s$},\\
        0, & \text{otherwise},
    \end{cases}
\end{equation}
and
\begin{equation}
    \Tr(AP_{s,b})
    =
    \sum_{x\in\{0,1\}^n}
    \bra{x}AP_{s,b}\ket{x}
    =
    \ii^{s \cdot b}
    \sum_{x\in\{0,1\}^n}
    \chi_b(x) \bra{x}A\ket{x + s}.
\end{equation}
Consider the matrix element
\begin{equation}
    \bra{x}A_s\ket{y}
    =
    \frac{1}{2^n}
    \sum_{b\in\{0,1\}^n}
    \Tr(AP_{s,b})
    \bra{x}P_{s,b}\ket{y}.
\end{equation}
If $y \neq x + s$, then $\bra{x}P_{s,b}\ket{y}=0$ for all $b \in \{0,1\}^n$ and
$\bra{x}A_s\ket{y}=0$. Otherwise, $y= x+s$ and
\begin{equation}\label{eq:As-nonzero-matrix-element}
    \bra{x}A_s\ket{x+s}
    =
    \frac{1}{2^n}
    \sum_{b, z\in\{0,1\}^n}
    \ii^{s \cdot b}
    \ii^{s \cdot b}
    \chi_b(z)
    \chi_b(x + s)
    \bra{z}A\ket{z + s}.
\end{equation}
Note that
\begin{equation}
    \ii^{s \cdot b}
    \ii^{s \cdot b}
    \chi_b(z)
    \chi_b(x + s)
    = \chi_b(s) \chi_{b}(z) \chi_{b}(x) \chi_b(s)
    = \chi_{z}(b) \chi_{x}(b).
\end{equation}
Hence, character orthogonality implies
\begin{equation}
    \frac{1}{2^n}
    \sum_{b \in\{0,1\}^n}
    \ii^{s \cdot b}
    \ii^{s \cdot b}
    \chi_b(z)
    \chi_b(x + s)
    = \begin{cases}
        1, & \text{if $x = z$}, \\
        0, & \text{otherwise},
    \end{cases}
\end{equation}
and Eq.~\eqref{eq:As-nonzero-matrix-element} simplifies to
\begin{equation}
    \bra{x}A_s\ket{x+s} = \bra{x}A\ket{x + s}.
\end{equation}
Therefore, we have
\begin{equation}
    A_s
    =
    \sum_{x \in \{0, 1\}^n}
    \bra{x}A\ket{x+s} \ketbra{x}{x+s}.
\end{equation}
Squaring and using $y = x + s$ in the resulting sum yields
\begin{align}
    A_s^2
    &=
    \sum_{x, y \in \{0, 1\}^n}
    \bra{x}A\ket{x+s} \bra{y}A\ket{y+s} \ketbra{x}{x+s}\ketbra{y}{y+s} \\
    &=
    \sum_{x \in \{0, 1\}^n}
    \bra{x}A\ket{x+s} \bra{x+s}A\ket{x} \ketbra{x}{x}.
\end{align}
Since $A$ is Hermitian,
\begin{equation}
    \bra{x+s}A\ket{x} = \overline{\bra{x}A\ket{x+s}}.
\end{equation}
Thus,
\begin{equation}
    \sum_{s \in \{0, 1\}^n} A_s^2 = \sum_{s,x \in \{0, 1\}^n}
    \lvert\bra{x}A\ket{x+s}\rvert^2 \ketbra{x}{x}.
\end{equation}
For fixed $x$, the map $s \mapsto x+s$ defines a bijection on $\{0, 1\}^n$. Hence,
\begin{equation}
    \sum_{s \in \{0, 1\}^n} \lvert\bra{x}A\ket{x+s}\rvert^2
    =
    \sum_{y \in \{0, 1\}^n} \lvert\bra{x}A\ket{y}\rvert^2.
\end{equation}
Since $A$ is Hermitian, it follows that
\begin{equation}
    \bra{x}A^2\ket{x}
    =
    \sum_{y \in \{0, 1\}^n} \bra{x}A\ket{y} \overline{\bra{x}A\ket{y}}
    =
    \sum_{y \in \{0, 1\}^n} \lvert\bra{x}A\ket{y}\rvert^2.
\end{equation}
Therefore, we conclude that
\begin{equation}
    \sum_{k \in \{0, 1\}} \sum_{s \in \{0, 1\}^n} A_{k,s}^2
    =
    \sum_{x \in \{0, 1\}^n} \bra{x}A^2\ket{x} \ketbra{x}{x}
    =
    \diag(A^2).
\end{equation}
\end{proof}

To estimate the shadow norm on each of the Pauli sectors $A_{k,s}$, we will perform a case distinction.
In the case $s \neq \zero$, we can use Proposition~\ref{prop:M-eigenvalues} and Theorem~\ref{thm:main-channel-eigenvalues} to
explicitly compute and estimate the terms of the form $\MMM_{\ell,r}(\MMM^{-1}(A_{k,s}))^2$ that arise from the definition of the shadow norm.
Additionally, we have $A_{1,\zero} = 0$ in the case $s = \zero$ and $k = 1$. However, the Pauli operators in the remaining case $A_{0,\zero}$ do not share the same eigenvalues in Theorem~\ref{thm:main-channel-eigenvalues}. Therefore, we need a different approach to estimate the shadow norm in this setting.

Since $A_{0,\zero}$ is diagonal in the computational basis, we can restrict ourselves to the subspace of diagonal operators $A$ and identify them with functions
\begin{equation}
    f_A \in \ell^\infty(\ZZ_2^n), \qquad f_A(x) = \bra{x}A\ket{x}.
\end{equation}
Note that under this identification, the operator norm of $A$ agrees with the $\ell^\infty$-norm of the function $f_A$ since
\begin{equation}
    \norm{A}_\infty = \max_{x \in \{0, 1\}^n} |\bra{x}A\ket{x}| = \norm{f_A}_\infty.
\end{equation}
Moreover, in this setting, the auxiliary channels $\MMM_{k,s}$ can be expressed as follows.

\begin{lemma}\label{lem:translation-operator}
Let $k \in \{0, 1\}$ and $s \in \{0, 1\}^n$. After identifying diagonal operators with functions $f \in \ell^\infty(\ZZ_2^n)$,
the channel $\MMM_{k,s}$ satisfies
\begin{equation}
    (\MMM_{k,s} f)(x) = \frac{1}{2}\bigl(f(x) + f(x + s)\bigr).
\end{equation}
\end{lemma}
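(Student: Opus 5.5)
We need to prove that for diagonal $A$ identified with $f$, $\MMM_{k,s}(A)$ is diagonal with entries $\frac12(f(x)+f(x+s))$. The plan is to work in the Pauli basis and use linearity. Every diagonal operator is a real (or complex) linear combination of $Z$-type Paulis $P_{\zero,b}$, and under the identification $P_{\zero,b}$ corresponds to the character $\chi_b$, since by Eq.~\eqref{eq:Pauli-basis-action} we have $P_{\zero,b}\ket{x}=\chi_b(x)\ket{x}$. Both sides of the claimed identity are linear in $f$, and the characters span $\ell^\infty(\ZZ_2^n)$ by character orthogonality. It therefore suffices to verify the formula for $f=\chi_b$ with arbitrary $b\in\{0,1\}^n$.

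For a fixed character, Proposition~\ref{prop:M-eigenvalues} gives $\MMM_{k,s}(P_{\zero,b})=P_{\zero,b}$ if $U_{k,s}P_{\zero,b}U_{k,s}^\dagger$ is of $Z$-type and $0$ otherwise. By Lemma~\ref{lem:UPU-Z-type} with $a=\zero$, the $Z$-type condition holds exactly when $\chi_b(s)=1$. The second alternative of that lemma, $a=s$, either coincides with the first case when $s=\zero$ (where $\chi_b(\zero)=1$ and $(-1)^k$ must equal $1$, which is subsumed), or is impossible when $s\neq\zero$. Hence
\begin{equation}
    \MMM_{k,s}(\chi_b) = \frac{1+\chi_b(s)}{2}\,\chi_b .
\end{equation}

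On the other side, the multiplicativity property~\eqref{qe:character-properties} gives $\chi_b(x+s)=\chi_b(x)\chi_b(s)$, so
\begin{equation}
    \tfrac12\bigl(\chi_b(x)+\chi_b(x+s)\bigr)=\tfrac12\bigl(1+\chi_b(s)\bigr)\chi_b(x).
\end{equation}
The two sides agree for every character, and linearity finishes the argument. Note that the result is independent of $k$.

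The only delicate point is the case analysis of Lemma~\ref{lem:UPU-Z-type} when $a=\zero$. I have to make sure that for $s=\zero$ the second condition adds nothing new, and that for $s\neq\zero$ it cannot hold. Both follow directly from $a=\zero$, so I expect no real obstacle. As an alternative check, one could compute directly: $U_{k,s}^\dagger\ket{y}$ are product states with a GHZ factor on the support of $s$. The diagonal of $A$ in this basis averages $f$ over the pair $\{x,x+s\}$, because each GHZ vector has equal weight on the two basis states $x$ and $x+s$. Re-expanding these rank-one projectors in the computational basis gives the same formula, with the off-diagonal parts cancelling when summed over the pair.
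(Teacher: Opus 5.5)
Your proposal is correct and follows essentially the same route as the paper: reduce to characters $\chi_b \leftrightarrow P_{\zero,b}$ by linearity, use Proposition~\ref{prop:M-eigenvalues} with Lemma~\ref{lem:UPU-Z-type} at $a=\zero$ to get $\MMM_{k,s}(\chi_b)=\tfrac12\bigl(1+\chi_b(s)\bigr)\chi_b$, and match this with $\tfrac12\bigl(\chi_b(x)+\chi_b(x+s)\bigr)$ via multiplicativity of characters. Your explicit check that the second alternative of Lemma~\ref{lem:UPU-Z-type} adds nothing when $a=\zero$ is a detail the paper leaves implicit.
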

\begin{proof}
Note that each Pauli operator $P_{\zero, b}$ is diagonal in the computational basis and corresponds to the character $\chi_b \in \ell^\infty(\ZZ_2^n)$ under our identification
because $\bra{x}P_{\zero, b}\ket{x} = \chi_b(x)$. Hence, we can use Proposition~\ref{prop:M-eigenvalues} and Lemma~\ref{lem:UPU-Z-type} to compute
\begin{equation}
    (\MMM_{k,s} \chi_b)(x)
    =
    \bra{x}\MMM_{k,s}(P_{\zero, b})\ket{x}
    = \begin{cases}
        \chi_b(x), & \text{if $\chi_b(s) = 1$},\\
        0, & \text{otherwise}.
    \end{cases}
\end{equation}
Therefore, we have
\begin{equation}
    (\MMM_{k,s} \chi_b)(x) = \frac{1}{2}\bigl(1 + \chi_b(s)\bigr) \cdot \chi_b(x)
    = \frac{1}{2}\bigl(\chi_b(x) + \chi_b(x + s)\bigr).
\end{equation}
The statement follows by linearity since the characters $\chi_b$ form a basis of $\ell^\infty(\ZZ_2^n)$.
\end{proof}

The main insight to establish the shadow-norm bound for $A_{0,\zero}$ is the following observation. Using Lemma~\ref{lem:translation-operator}, the shadow channel $\MMM$ and the corresponding squared expectation can be written as
\begin{equation}
    \EE[\MMM_{k,s}f](x) = \frac{1}{2} \sum_{s \in \{0, 1\}^n} p(s) \bigl(f(x) + f(x + s)\bigr),
\end{equation}
\begin{equation}
    \EE[(\MMM_{k,s}f)^2](x) = \frac{1}{4} \sum_{s \in \{0, 1\}^n} p(s) \bigl(f(x) + f(x + s)\bigr)^2.
\end{equation}
These expressions can be realized as the graph Laplacian and the Dirichlet energy of a finite graph when performing a sign correction. This allows us to apply tools from
the theory of Markov semigroups to the heat semigroup on the graph to obtain the desired bound for the diagonal sector.

The following technical lemma constructs the corresponding graph and establishes the precise bound used in \hyperref[proof:shadow-norm-bound]{proof at the end of this section}. It itself relies on Lemma~\ref{lem:graph-energy-estimate} that will be discussed and proven afterwards. We refer to~\cite{chung1996spectral,levin2017markov} for more information on graph Laplacians and Dirichlet forms in the context of graphs and Markov chains, and to~\cite{zaslavsky1982signed,reiner2014critical} for more details on the graph double-cover construction used in the following proof.

\begin{lemma}\label{lem:square-func-bound}
Let $f \colon \{0,1\}^n \to \RR$ and assume $p(\zero) > 0$. Using the identification of diagonal operators with $\ell^\infty(\ZZ_2^n)$,
define $\LLL f = \EE_{k,s}[(\MMM_{k,s}f)^2]$. Then
\begin{equation}
    \norm{\LLL f}_\infty
    \leq \frac{1 + \pi/2}{p(\zero)} \cdot \norm{\MMM f }^2_\infty.
\end{equation}
\end{lemma}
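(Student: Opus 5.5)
We plan to rewrite $\LLL f$ as a carré du champ of an auxiliary Markov generator and $\MMM f$ as that generator applied to the same function. The bound then follows from the resolvent representation (Proposition~\ref{prop:resolvent-integral}) and the reverse Poincaré inequality (Proposition~\ref{prop:rev-poincare-inequ}).

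\textbf{Step 1: rewriting on a double cover.} By Lemma~\ref{lem:translation-operator},
\[
g := \MMM f = \tfrac12\textstyle\sum_s p(s)\bigl(f(x)+f(x+s)\bigr), \qquad \LLL f = \tfrac14\textstyle\sum_s p(s)\bigl(f(x)+f(x+s)\bigr)^2 .
\]
The plus signs prevent reading these directly as a Laplacian and a Dirichlet energy, so we correct the sign on a double cover. Take the vertex set $X=\ZZ_2^n\times\ZZ_2$, which is the Cayley graph of $\ZZ_2^{n+1}$ with edges $(x,\sigma)\sim(x+s,\sigma+1)$ of weight $p(s)$. Let $L_G$ be the corresponding generator,
\[
L_G F(x,\sigma)=\sum_s p(s)\bigl(F(x+s,\sigma+1)-F(x,\sigma)\bigr).
\]
Define the antisymmetric lifts $F(x,\sigma)=(-1)^\sigma f(x)$ and $G(x,\sigma)=(-1)^\sigma g(x)$. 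Then $F(x+s,\sigma+1)-F(x,\sigma)=-(-1)^\sigma\bigl(f(x+s)+f(x)\bigr)$, which gives
\[
-L_G F = 2G, \qquad \Gamma_G(F)(x,\sigma)=\tfrac12\textstyle\sum_s p(s)\bigl(f(x)+f(x+s)\bigr)^2 = 2\,\LLL f(x).
\]
Since $\norm{G}_\infty=\norm{g}_\infty$, it suffices to show $\Gamma_G(F)\le (2+\pi)\norm{G}_\infty^2/p(\zero)$.

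\textbf{Step 2: isolating the $s=\zero$ edges.} Split $L_G=L'+p(\zero)(T-\id)$, where $T$ is the sheet swap $(x,\sigma)\mapsto(x,\sigma+1)$ and $L'$ uses only the edges with $s\neq\zero$. The operator $L'$ again generates a Markov semigroup $P'_t=e^{tL'}$. On antisymmetric functions $T=-\id$, so $(2p(\zero)\id-L')F=2G$. Because $L'$ is a convolution on an abelian group, it preserves antisymmetry. Proposition~\ref{prop:resolvent-integral} then gives
\[
F=2\int_0^\infty e^{-2p(\zero)t}P'_tG\,\dd t .
\]
The carré du champ decomposes in the same way, $\Gamma_G(F)=\Gamma'(F)+\tfrac{p(\zero)}2\bigl(F\circ T-F\bigr)^2=\Gamma'(F)+2p(\zero)F^2$. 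The second term is easy: the integral gives $\norm{F}_\infty\le \norm{G}_\infty/p(\zero)$, so $2p(\zero)F^2\le 2\norm{G}_\infty^2/p(\zero)$.

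\textbf{Step 3: the main estimate.} The main obstacle is bounding $\Gamma'(F)$, and the key input is the hypothesis $\Gamma'(P'_th)\le P'_t\Gamma'(h)$ of Proposition~\ref{prop:rev-poincare-inequ}. I expect it to hold because $P'_t$ is convolution with a probability measure $\mu_t$ on $\ZZ_2^{n+1}$, i.e.\ $P'_th=\sum_u\mu_t(u)\,h(\cdot+u)$. Moreover $\Gamma'$ commutes with translations and is, pointwise, a sum of squares of linear differences. Jensen's inequality (or Cauchy--Schwarz) therefore gives $\Gamma'(\sum_u\mu_t(u)h_u)\le\sum_u\mu_t(u)\Gamma'(h_u)=P'_t\Gamma'(h)$, where $h_u=h(\cdot+u)$. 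The reverse Poincaré inequality then yields $\Gamma'(P'_tG)\le\norm{G}_\infty^2/(2t)$.

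Next I use that $\sqrt{\Gamma'(\cdot)(y)}$ is a seminorm at each point $y$, being the weighted $\ell^2$-norm of a linear map. Minkowski's inequality applied to the integral representation of $F$ gives
\[
\sqrt{\Gamma'(F)}\le 2\int_0^\infty e^{-2p(\zero)t}\frac{\norm{G}_\infty}{\sqrt{2t}}\,\dd t=\norm{G}_\infty\sqrt{\pi/p(\zero)} .
\]
The last equality uses $\int_0^\infty e^{-ct}t^{-1/2}\,\dd t=\sqrt{\pi/c}$ with $c=2p(\zero)$. Hence $\Gamma'(F)\le\pi\norm{G}_\infty^2/p(\zero)$.

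Adding the two contributions gives $\Gamma_G(F)\le(2+\pi)\norm{G}_\infty^2/p(\zero)$. Using $\LLL f=\tfrac12\Gamma_G(F)$ and $\norm{G}_\infty=\norm{\MMM f}_\infty$, this is exactly the claimed bound $(1+\pi/2)\norm{\MMM f}_\infty^2/p(\zero)$.
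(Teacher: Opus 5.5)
Your proposal is correct and follows essentially the same route as the paper: the sign-corrected lift $(-1)^\sigma f$ on the double cover $\ZZ_2^n\times\ZZ_2$, separating the $s=\zero$ contribution (bounded through $\norm{f}_\infty\le\norm{\MMM f}_\infty/p(\zero)$), and bounding the remaining Dirichlet energy via the resolvent representation with $\lambda=2p(\zero)$, the reverse Poincaré inequality, and Minkowski's inequality. The differences are only cosmetic. You include the sheet-swap edges and then split them off, you get the $\norm{f}_\infty$ bound from the resolvent rather than the reverse triangle inequality, and you inline the argument of Lemma~\ref{lem:graph-energy-estimate} instead of citing it.
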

\begin{proof}
Using Lemma~\ref{lem:translation-operator}, we can expand the shadow channel $\MMM = \EE_{k, s}[\MMM_{k,s}]$ and the
operator $\LLL$ into
\begin{equation}
   (\MMM f)(x) = \frac{1}{2} \sum_{s \in \{0,1\}^n} p(s) \bigl(f(x) + f(x + s)\bigr),
\end{equation}
\begin{equation}
    (\LLL f)(x) = \frac{1}{4} \sum_{s \in \{0,1\}^n} p(s) \bigl(f(x) + f(x + s)\bigr)^2.
\end{equation}
Define the bipartite graph with vertices $V = \ZZ_2^n \times \ZZ_2$ and with non-negative, translation-invariant weights
\begin{equation}
w\bigl((x,\varepsilon),(y,\delta)\bigr)
= \begin{cases}
        p(x + y), & \text{if $x \neq y$ and $\delta \neq \varepsilon$},\\
        0, & \text{otherwise}.
    \end{cases}
\end{equation}
For any function $f \colon \{0,1\}^n \to \RR$, define its extension
\begin{equation}
    \widetilde{f} \colon V \to \RR,
    \quad
    \widetilde{f}(x, \varepsilon) = (-1)^\varepsilon f(x)
\end{equation}
and the graph Laplacian and energy operator
\begin{equation}
(\Delta \widetilde{f})(x, \varepsilon) = \sum_{(y, \delta) \in V} w\bigl((x, \varepsilon), (y, \delta) \bigr)\bigl(\widetilde{f}(x, \varepsilon) - \widetilde{f}(y, \delta)\bigr),
\end{equation}
\begin{equation}
(\EEE\widetilde{f})(x, \varepsilon) = \frac{1}{2} \sum_{(y, \delta) \in V} w\bigl((x, \varepsilon), (y, \delta) \bigr)\bigl(\widetilde{f}(x, \varepsilon) - \widetilde{f}(y, \delta)\bigr)^2.
\end{equation}
Then, we have
\begin{align}
    \Delta(\widetilde{f})(x, \varepsilon) &= \sum_{y \in \ZZ_2^n \setminus \{x\}} p(x + y) \bigl(\widetilde{f}(x,\varepsilon) - \widetilde{f}(y,\varepsilon + 1) \bigr) \\
    &= (-1)^\varepsilon \sum_{y \in \ZZ_2^n \setminus \{\zero\}} p(y) \bigl(f(x) + f(x + y)\bigr),
\end{align}
and similarly
\begin{equation}
   (\EEE\widetilde{f})(x, \varepsilon)  = \frac{1}{2} \sum_{y \in \ZZ_2^n \setminus \{\zero\}} p(y) \bigl(f(x) + f(x + y)\bigr)^2.
\end{equation}
Moreover, this shows
\begin{equation}
    \widetilde{\MMM f} = p(\zero) \widetilde{f} + \frac{1}{2}  \Delta \widetilde{f},
    \qquad
    \LLL f = p(\zero) f^2 + \frac{1}{2} \EEE \widetilde{f}.
\end{equation}
Therefore, it follows that
\begin{equation}
    \norm{\LLL f}_\infty \leq \frac{1}{p(\zero)} \norm{p(\zero) f}^2_\infty + \frac{1}{2} \bigl\lVert\EEE\widetilde{f}\bigr\rVert_\infty.
\end{equation}
To bound the first term, observe that
\begin{equation}
    (\MMM f)(x) = \frac{1 + p(\zero)}{2} f(x) + \frac{1}{2} \sum_{s \neq \zero} p(s) f(x + s).
\end{equation}
Thus, the reverse triangle inequality yields
\begin{equation}
    \norm{\MMM f}_\infty \geq \frac{1 + p(\zero)}{2} \norm{f}_\infty - \frac{1}{2} \sum_{s \neq \zero} p(s) \norm{f}_\infty = p(\zero) \norm{f}_\infty.
\end{equation}
For the second term, we apply Lemma~\ref{lem:graph-energy-estimate} with $\lambda = 2 p(\zero)$ to obtain
\begin{equation}
    \lVert\EEE\widetilde{f}\rVert_\infty \leq
    \frac{\pi}{4 p(\zero)} \Bigl\lVert 2 \Bigl(p(\zero) \widetilde{f} + \frac{1}{2} \Delta \widetilde{f}\Bigr)\Bigr\rVert_\infty^2
    = \frac{\pi}{p(\zero)} \bigl\lVert\widetilde{\MMM f}\bigr\rVert_\infty^2.
\end{equation}
Hence, combining the previous two bounds gives
\begin{equation}
    \norm{\LLL f}_\infty \leq \frac{1}{p(\zero)} \norm{\MMM f}^2_\infty + \frac{\pi}{2p(\zero)} \bigl\lVert\widetilde{\MMM f}\bigr\rVert_\infty^2 = \frac{1 + \pi/2}{p(\zero)} \norm{\MMM f}^2_\infty,
\end{equation}
where we additionally used that $\norm{g}_\infty = \norm{\widetilde{g}}_\infty$ for all $g \colon \{0,1\}^n \to \RR$.
\end{proof}

The following lemma establishes the key estimate in the previous proof in a purely graph-theoretic setting. It relies on the facts from the theory of Markov semigroups introduced in Section~\ref{sec:markov-semigroups}, particularly the integral representation of the resolvent and a reverse Poincaré inequality. See also~\cite{munch2023non-negative,baudoin2016reverse} for further applications of reverse Poincaré inequalities in the context of heat semigroups on graphs.

\begin{lemma}
\label{lem:graph-energy-estimate}
Consider a weighted graph with vertices $V = \ZZ_2^m$ and
non-negative, translation-invariant weights $w \colon V \times V \to \RR$, i.e.,
\begin{equation}
    w(x, y) \geq 0, \qquad
    w(x + z, y + z) = w(x, y)
    \qquad
    \forall x, y, z \in V.
\end{equation}
For $f \colon V\to \RR$, define the graph Laplacian and the local Dirichlet energy by
\begin{equation}
    (\Delta f)(x)
    =
    \sum_{y\in V}
    w(x,y)\bigl(f(x)-f(y)\bigr),
\end{equation}
\begin{equation}
    (\EEE f)(x)
    =
    \frac12
    \sum_{y\in V}
    w(x,y)
    \mleft(f(x)-f(y)\mright)^2.
\end{equation}
Then it holds for every $\lambda>0$ that
\begin{equation}
    \norm{\EEE f}_\infty
    \leq
    \frac{\pi}{2\lambda}
    \norm{\lambda f +\Delta f}_\infty^2.
\end{equation}
\end{lemma}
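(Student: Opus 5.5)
The plan is to view $L = -\Delta$ as the generator of the heat semigroup $P_t = e^{-t\Delta}$ on the graph and to write $f$ through the resolvent. Set $g = \lambda f + \Delta f = (\lambda\,\id - L) f$. By Proposition~\ref{prop:resolvent-integral},
\begin{equation}
    f = (\lambda\,\id - L)^{-1} g = \int_0^\infty e^{-\lambda t} P_t g \, \dd t .
\end{equation}

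First we check that $P_t$ is a Markov semigroup. Since $L$ has non-negative off-diagonal entries and zero row sums, $e^{tL}$ is positive and preserves constants. Next we identify the carré du champ: a direct computation gives
\begin{equation}
    \Gamma(f)(x) = \frac12\bigl(L(f^2) - 2fLf\bigr)(x) = \frac12\sum_y w(x,y)\bigl(f(x)-f(y)\bigr)^2 = (\EEE f)(x).
\end{equation}
So the goal becomes a bound on $\Gamma(f)$.

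The key structural input is the commutation hypothesis $\Gamma(P_t h) \le P_t \Gamma(h)$ of Proposition~\ref{prop:rev-poincare-inequ}. This is where translation invariance enters. The translations $\tau_z h(x) = h(x+z)$ commute with $\Delta$, hence with $P_t$. Moreover, $\Gamma(h)(x) = \frac12 \sum_z w(0,z)\bigl(h(x) - h(x+z)\bigr)^2$ is an average of squares of the functions $D_z h = h - \tau_z h$. Since $D_z$ commutes with $P_t$ and Markov operators satisfy $(P_t u)^2 \le P_t(u^2)$, we get
\begin{equation}
    (D_z P_t h)^2 = (P_t D_z h)^2 \le P_t\bigl((D_z h)^2\bigr).
\end{equation}
Summing over $z$ with weights $w(0,z)$ yields $\Gamma(P_t h) \le P_t\Gamma(h)$. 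Proposition~\ref{prop:rev-poincare-inequ} then gives $\Gamma(P_t g) \le \frac{1}{2t}\norm{g}_\infty^2$.

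Finally we combine these through convexity. The map $h \mapsto \sqrt{\Gamma(h)(x)}$ is a seminorm, since it is a weighted $\ell^2$-norm of the differences. Applying Minkowski's inequality to the integral representation of $f$ gives
\begin{equation}
    \sqrt{\Gamma(f)(x)} \le \int_0^\infty e^{-\lambda t} \sqrt{\Gamma(P_t g)(x)}\, \dd t \le \norm{g}_\infty \int_0^\infty \frac{e^{-\lambda t}}{\sqrt{2t}}\, \dd t = \norm{g}_\infty \sqrt{\frac{\pi}{2\lambda}} .
\end{equation}
Squaring yields $\norm{\EEE f}_\infty \le \frac{\pi}{2\lambda}\norm{\lambda f + \Delta f}_\infty^2$, which is exactly the constant claimed.

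I expect the main obstacle to be the commutation step. One must see that translation invariance lets each difference operator $D_z$ commute with the semigroup, so that the pointwise Cauchy--Schwarz inequality for Markov operators applies. The remaining technical point is justifying Minkowski's inequality for the Bochner-type integral, which is routine in finite dimensions via Riemann sums.
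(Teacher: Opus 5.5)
Your proposal is correct and follows essentially the same route as the paper's proof. Both arguments identify $\EEE$ with the carr\'e du champ of the heat semigroup $P_t = e^{-t\Delta}$ and derive $\Gamma(P_t h) \le P_t\Gamma(h)$ from the commutation of the translation (difference) operators with $P_t$ plus the Cauchy--Schwarz inequality for Markov operators. They then combine the resolvent representation $f = \int_0^\infty e^{-\lambda t}P_t g\,\dd t$, Minkowski's inequality, the reverse Poincar\'e inequality and $\int_0^\infty e^{-\lambda t}(2t)^{-1/2}\,\dd t = \sqrt{\pi/(2\lambda)}$ to reach the stated constant. The only differences are cosmetic, for instance verifying the Markov property via the rate-matrix structure of $-\Delta$ rather than the paper's series expansion $e^{-t\Delta} = e^{-t\sum_s a(s)}\sum_k \frac{t^k}{k!}\bigl(\sum_s a(s)T_s\bigr)^k$.
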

\begin{proof}
For every $s \in V$, define the weight $a(s) = w(\zero, s)$ and the translation operator
\begin{equation}
    (T_s f)(x)=f(x + s).
\end{equation}
Then the Laplacian and the energy can be written as
\begin{equation}
    \Delta
    =
    \sum_{s\in V}
    a(s)\bigl(\id-T_s\bigr),
    \qquad
    \EEE f
    =
    \frac12
    \sum_{s\in V}
    a(s)
    \bigl(
        (T_s- \id) f
    \bigr)^2.
\end{equation}
Note that the operators $T_s$ commute because $\ZZ_2^m$ is abelian. Moreover, the negative Laplacian $-\Delta$ generates the continuous semigroup
\begin{equation}
    (P_t)_{t \geq 0}, \quad P_t =e^{-t\Delta}.
\end{equation}
Since $(\Delta 1)(x) = 0$ for all $x \in V$, it follows that
\begin{equation}
    P_t 1 = \sum_{k=0}^\infty \frac{1}{k!}(-t\Delta)^k 1 = \id 1 = 1.
\end{equation}
Let $f \geq 0$. Then $a(s)T_s f \geq 0$ for all $s \in V$, which implies
\begin{equation}
    P_t f
    =
    e^{-t \sum_{s\in V}
    a(s)}
    \sum_{k=0}^\infty \frac{t^k}{k!}\Bigl(\sum_{s \in V} a(s) T_s\Bigr)^k f \geq 0.
\end{equation}
Consequently, each $P_t$ is a Markov operator and $(P_t)_{t \geq 0}$ is a Markov semigroup. Additionally,
\begin{align*}
    \frac{1}{2} \bigl( (-\Delta) f^2 - 2 f (-\Delta) f \bigr)
    &= \frac{1}{2} \sum_{s\in V}
    a(s)\bigl(-f^2 + T_s f^2 + 2 f^2 - 2 f T_s f\bigr) \\
    &= \frac{1}{2} \sum_{s\in V}
    a(s)\bigl(T_s f - f\bigr)^2 \\
    &= \Gamma f.
\end{align*}
Thus, $\EEE =\Gamma$ agrees with the carré du champ operator. Since every $T_s$ commutes with $\Delta$, it also commutes with
$P_t$. Hence,
\begin{align}
    \Gamma(P_tf)(x)
    &=
    \frac12
    \sum_{s\in V}
    a(s)
    \bigl(
        P_t(T_sf-f)(x)
    \bigr)^2
    \\
    &\leq
    \frac12
    \sum_{s\in V}
    a(s)
    P_t\mleft(
        (T_sf-f)^2
    \mright)(x)
    \\
    &=
    P_t\bigl(\Gamma(f)\bigr)(x).
\end{align}
Let $g = \bigl(\lambda\id + \Delta\bigr)f$. Then Proposition~\ref{prop:resolvent-integral} gives
\begin{equation}
    f
    =
    \int_0^\infty e^{-\lambda t}P_tg\,\dd t.
\end{equation}
For $x\in V$, Minkowski's integral inequality implies
\begin{align}
    \sqrt{(\Gamma f)(x)}
    &= \mleft[
    \sum_{s\in V}
    \frac{a(s)}{2}
    \mleft(
        \int_0^\infty e^{-\lambda t} \bigl((T_s - \id)P_tg\bigr)(x)\,\dd t
    \mright)^2 \mright]^{\frac12} \\
    &\leq
    \int_0^\infty
    e^{-\lambda t}
    \mleft[
    \sum_{s\in V}
    \frac{a(s)}{2}
    \bigl(
          (T_s - \id)P_tg
    \bigr)^2
    (x)
    \mright]^{\frac{1}{2}}
    \,\dd t
    \\
    &=
    \int_0^\infty
    e^{-\lambda t}
    \sqrt{\Gamma(P_tg)(x)}
    \,\dd t.
\end{align}
By Proposition~\ref{prop:rev-poincare-inequ}, we have $\Gamma(P_tg)(x) \leq \frac{1}{2t} \norm{g}_\infty^2$
and therefore
\begin{equation}
    \sqrt{\Gamma(f)(x)}
    \leq
    \frac{\norm{g}_\infty}{\sqrt{2}}
    \int_0^\infty
    e^{-\lambda t}t^{-1/2}\,\dd t.
\end{equation}
Substituting $u = \lambda t$ gives
\begin{equation}
    \int_0^\infty
    e^{-\lambda t}t^{-1/2}\,\dd t
    =
    \lambda^{-\frac12} \int_0^\infty
    e^{-u} u^{-1/2}\,\dd u
    = \sqrt{\frac{\pi}{\lambda}},
\end{equation}
where the integral is given by the Gamma function evaluated at $1/2$. Hence,
\begin{equation}
    \sqrt{\Gamma(f)(x)} \leq \sqrt{\frac{\pi}{2\lambda}} \, \norm{g}_\infty =
    \sqrt{\frac{\pi}{2\lambda}} \,  \norm{\bigl(
            \lambda\id+\Delta
        \bigr)f}_\infty.
\end{equation}
Squaring and taking the supremum over $x \in V$ yields
\begin{equation}
    \norm{\Gamma f}_\infty
    \leq
    \frac{\pi}{2\lambda}
    \norm{
        \lambda f + \Delta f
    }_\infty^2.
\end{equation}
\end{proof}

Using the previous lemmas, we can now prove Theorem~\ref{thm:main-shadow-bound}.

\begin{proof}[Proof of Theorem~\ref{thm:main-shadow-bound}]
\phantomsection
\label{proof:shadow-norm-bound}
Let $A$ be a $S$-sparse Hermitian operator. Decompose it into Pauli sectors
\begin{equation}
    A = \sum_{k \in \{0,1\}} \sum_{s \in \{0,1\}^n} A_{k,s},
    \quad
    A_{k,s} = \frac{1}{2^n} \sum_{\substack{b \in \ZZ_2^n \\ \chi_b(s) = (-1)^k}} \Tr(A P_{s,b}) P_{s,b},
\end{equation}
and recall from Proposition~\ref{prop:shadow-norm-short} that the shadow norm can be written as
\begin{equation}
    \norm{A}_{\mathrm{sh}}^2 = \sup_{\sigma \in \DDD} \Tr(\sigma \EE_{k \in \{0,1\}, s \in \{0,1\}^n} \bigl[ \MMM_{k,s}(\MMM^{-1}(A))^2 \bigr] ).
\end{equation}
Since $\sigma \in \DDD$ is a density matrix, it holds for any random positive-semidefinite operator $M$ that
\begin{equation}
    \Tr(\sigma \EE[M]) \leq \norm{\sigma}_1 \cdot \norm{\EE[M]}_\infty = \norm{\EE[M]}_\infty.
\end{equation}
Hence, after expanding the expectation, we obtain $\norm{A}_{\mathrm{sh}}^2 \leq \norm{T}_\infty$ with
\begin{equation}\label{eq:T-norm}
    T = \frac{1}{2} \sum_{k \in \{0, 1\}} \sum_{s \in \{0,1\}^n} p(s) \MMM_{k,s}(\MMM^{-1}(A))^2.
\end{equation}
Using Theorem~\ref{thm:main-channel-eigenvalues} and $A_{1,\zero} = 0$, we can write
\begin{equation}
    \MMM^{-1}(A) = \MMM^{-1}(A_{0,\zero}) + \sum_{k \in \{0, 1\}} \sum_{\substack{s \in \{0,1\}^n \setminus \{\zero\}}} \frac{2}{p(s)} A_{k,s}.
\end{equation}
Moreover, for $\ell \in \{0, 1\}$ and $t \in \{0,1\}^n \setminus \{\zero\}$, Proposition~\ref{prop:M-eigenvalues} and Lemma~\ref{lem:UPU-Z-type} imply
\begin{equation}
    \MMM_{k,s}(A_{\ell,t}) = \begin{cases}
        A_{k, s}, & \text{if $s = t$ and $k = \ell$}, \\
        0, & \text{otherwise}.
    \end{cases}
\end{equation}
Thus, for $s \neq \zero$, we have
\begin{equation}
    \MMM_{k,s}(\MMM^{-1}(A))
    = \MMM_{k,s}(\MMM^{-1}(A_{0,\zero})) + \frac{2}{p(s)} A_{k,s}.
\end{equation}
Substituting the previous equation into $T$, and using $(A + B)^2 \preceq 2A^2 + 2B^2$ for any Hermitian operators $A$ and $B$, yields
\begin{align}
    T
    \preceq \sum_{k \in \{0, 1\}} \sum_{s \in \{0,1\}^n} p(s) \MMM_{k,s}(\MMM^{-1}(A_{0,\zero}))^2 + \sum_{k \in \{0, 1\}} \sum_{s \in \{0,1\}^n}  \frac{4}{p(s)} A_{k,s}^2.
\end{align}
Since both terms are positive operators, we can apply $\norm{\cdot}_\infty$ and bound the first term using Lemma~\ref{lem:square-func-bound} to obtain
\begin{equation}\label{eq:T-two-parts}
    \norm{T}_\infty \leq \frac{2 + \pi}{p(\zero)} \cdot \norm{A_{0,\zero}}^2_\infty
    + \Big\lVert \sum_{k \in \{0, 1\}}\sum_{s \in \{0,1\}^n}  \frac{4}{p(s)} A_{k,s}^2 \Big\rVert_\infty.
\end{equation}
Note that $A_{0,\zero} = \diag(A)$ is precisely the diagonal part of $A$. If $\zero \notin S$, then $A_{0,\zero} = 0$. Otherwise,
\begin{equation}
    \frac{1}{p(\zero)} \leq \max_{s \in S} \frac{1}{p(s)},
    \qquad
    \norm{A_{0,\zero}}^2_\infty
    = \norm{\diag(A)}^2_\infty
    \leq
    \norm{A}^2_\infty.
\end{equation}
Thus, in both cases, we have
\begin{equation}\label{eq:diag-final-bound}
    \frac{2 + \pi}{p(\zero)} \cdot \norm{A_{0,\zero}}^2_\infty
    \leq
    (2 + \pi) \mleft[\max_{s \in S} \frac{1}{p(s)}\mright] \norm{A}_\infty^2.
\end{equation}
To bound the second term in Eq.~\eqref{eq:T-two-parts}, we use Lemma~\ref{lem:sector-square-sum} to obtain
\begin{equation}
    \sum_{k \in \{0,1\}}\sum_{s \in \{0,1\}^n} A_{k,s}^2
    \preceq
    \diag(A^2)
    \preceq
    \norm{A}^2_\infty \cdot \id.
\end{equation}
Using $A_{k,s}=0$ for $s \notin S$, it follows that
\begin{equation}\label{eq:off-diag-final-bound}
    \Big\lVert \sum_{k \in \{0, 1\}}\sum_{s \in \{0,1\}^n}  \frac{4}{p(s)} A_{k,s}^2 \Big\rVert_\infty
    \leq
    4 \mleft[\max_{s \in S} \frac{1}{p(s)}\mright] \norm{A}^2_\infty.
\end{equation}
By combining Eq.~\eqref{eq:diag-final-bound} and Eq.~\eqref{eq:off-diag-final-bound}, we obtain
\begin{equation}
    \norm{A}_{\mathrm{sh}}^2
    \leq
    \norm{T}_\infty
    \leq
    (6 + \pi) \mleft[\max_{s \in S} \frac{1}{p(s)}\mright] \norm{A}^2_\infty.
\end{equation}
\end{proof}

\section{Numerical experiments}\label{sec:experiments}
In this section, we present the numerical experiments used to assess the performance of the shadow protocols considered in this work. To this end, we implemented three GPU-compatible Python backends for sampling measurement outcomes from a quantum state $\rho$ according to a specified shadow protocol. The first backend relies on~\href{https://pennylane.ai/}{PennyLane}. The second is a custom Clifford simulator implemented in~\href{https://docs.jax.dev/en/latest/}{JAX}, based on the stabilizer-tableau algorithm of~\cite{aaronson2004improved}. The third is a custom JAX-based statevector simulator that supports batched execution of quantum circuits on a single GPU\@.

The experiments were performed on NVIDIA GH200 GPUs and consumed approximately 100 GPU-hours in total. The code is available on~\href{https://github.com/marcwannerchalmers/classical_shadows_SGM/}{github}.

\subsection{Exact \texorpdfstring{$(n-k)$}{(n-k)}-local \texorpdfstring{$X/Y$}{X/Y} observables}\label{sec:exp-XY}

We demonstrate the advantage of \NAME shadows on the following prediction task. Given $N$ samples obtained from an $n$-qubit state $\rho$, drawn at random from a prescribed family of quantum states, the goal is to estimate the expectation values of $M$ observables chosen uniformly at random from an $S$-sparse family of operators according to Definition~\ref{def:S-sparse}, with $|S|=\operatorname{poly}(n)$. As discussed previously in Section~\ref{sec:shadow-norm}, the advantage of our protocol over the Pauli or Clifford ensembles is expected to be most pronounced when $S$ corresponds to a ``non-example'' in the sense of~\cite{huang2020predicting}, such as a family of highly nonlocal Pauli observables whose relevant support set remains polynomially small in $n$.

\begin{figure}[t]
    \centering
    \includegraphics[width=0.8\textwidth]
    {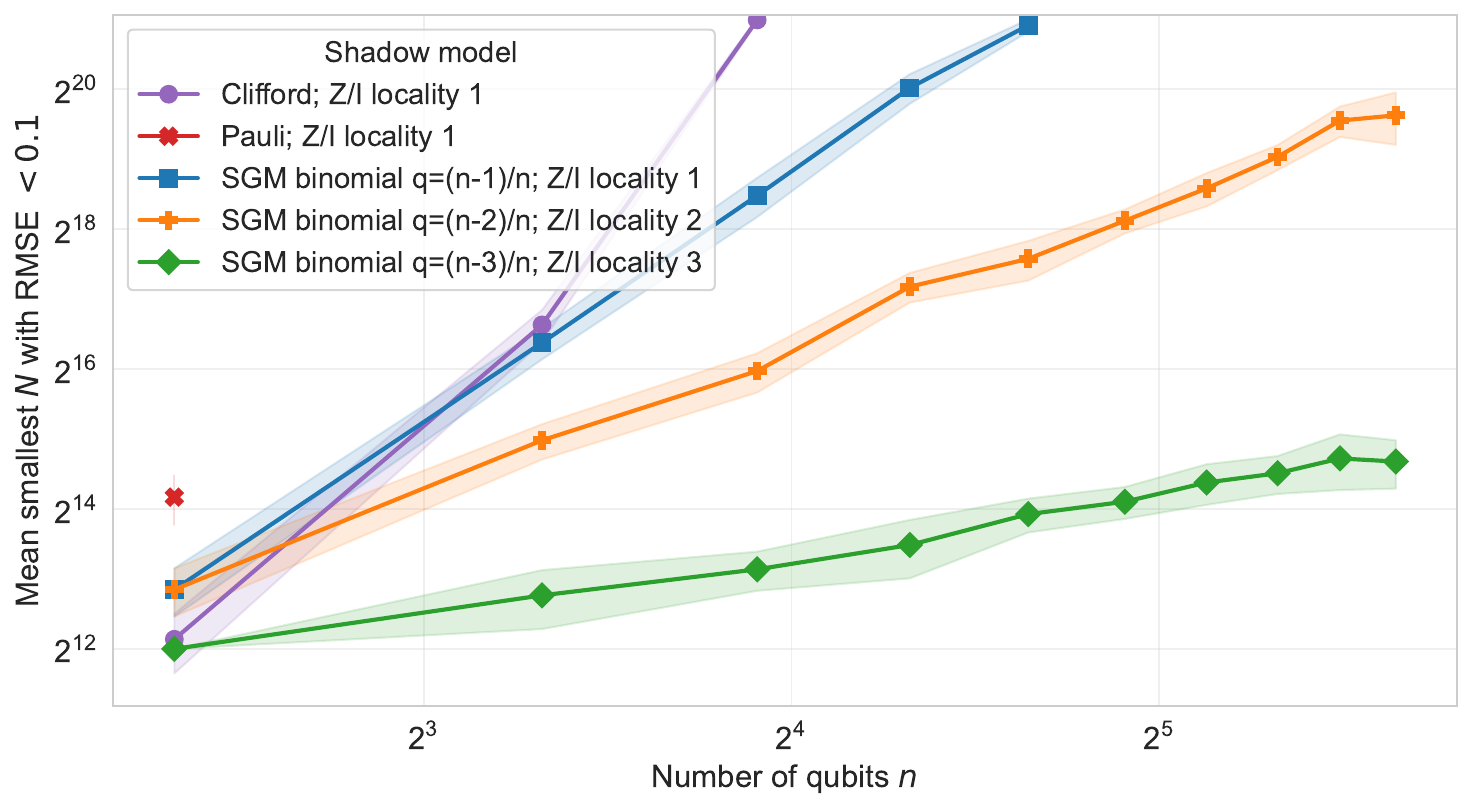}
    \caption{
        Sample complexity for estimating highly nonlocal observables on GHZ-type states.
        The number of shadow samples required to achieve
        $\mathrm{RMSE} \leq \varepsilon=0.1$ is shown as a function of the system size $n$.
        The observables are random Pauli operators that act as either $X$ or $Y$ on
        $n-k$ qubits, with $k=1,2,3$.
        For each value of $k$, \NAME is sampled from a binomial distribution with
        $q=(n-k)/n$, chosen to match the corresponding observable locality.
        Results are obtained from 50 randomly sampled observables and 10 independent
        repetitions for each system size.
    }
    \label{fig:ghz_sample_complexity}
\end{figure}

In the first experiment, we consider uniformly random instances of Pauli observables that act as either $X$ or $Y$ on all but $k$ qubits. The corresponding support masks are described by
\begin{equation}\label{exp:one_sparse}
    S_k
    =
    \mleft\{
        s \in \{0,1\}^n : \wt(s) = n-k
    \mright\},
    \qquad
    k=1,2,3.
\end{equation}
Thus, each observable has $X/Y$-support on exactly $n-k$ qubits, while the remaining $k$ qubits are padded by operators outside this support. We estimate the observables using the empirical mean to make the convergence behavior more transparent and to avoid ambiguities associated with tuning hyperparameters, such as the number of buckets in the median-of-means estimator.

Figure~\ref{fig:ghz_sample_complexity} shows the sample complexity for predicting random observables from the families defined above, with the underlying state drawn from the random GHZ-type ensemble described in Eq.~\eqref{eq:GHZ-type}. As a representative and widely studied class of highly entangled many-body states, this ensemble provides a natural benchmark for assessing the estimation of highly nonlocal observables. Sample points where the shadow prediction is trivial due to all snapshots being zero were removed. The numerical results show that the tuned \NAME protocol exhibits a sample complexity that grows polynomially with $n$, in agreement with Theorem~\ref{thm:main-shadow-bound}. This behavior contrasts with the substantially less favorable scaling obtained for Pauli and Clifford shadows on the same highly nonlocal prediction task.

\begin{figure}[t]
    \centering
    \begin{subfigure}[c]{0.49\textwidth}
        \centering
        \includegraphics[height=0.8\textwidth]
        {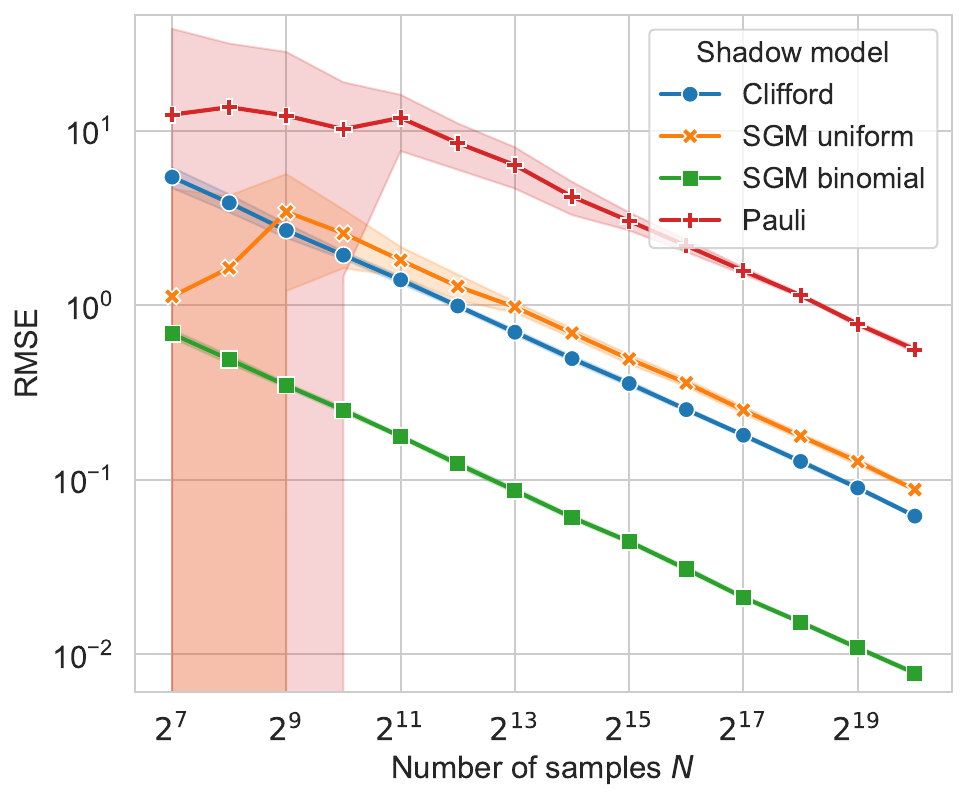}
        \caption{$n=12$}
        \label{fig:local_pauli_scaling_n12}
    \end{subfigure}
    \hfill
    \begin{subfigure}[c]{0.49\textwidth}
        \centering
        \includegraphics[height=0.8\textwidth]
        {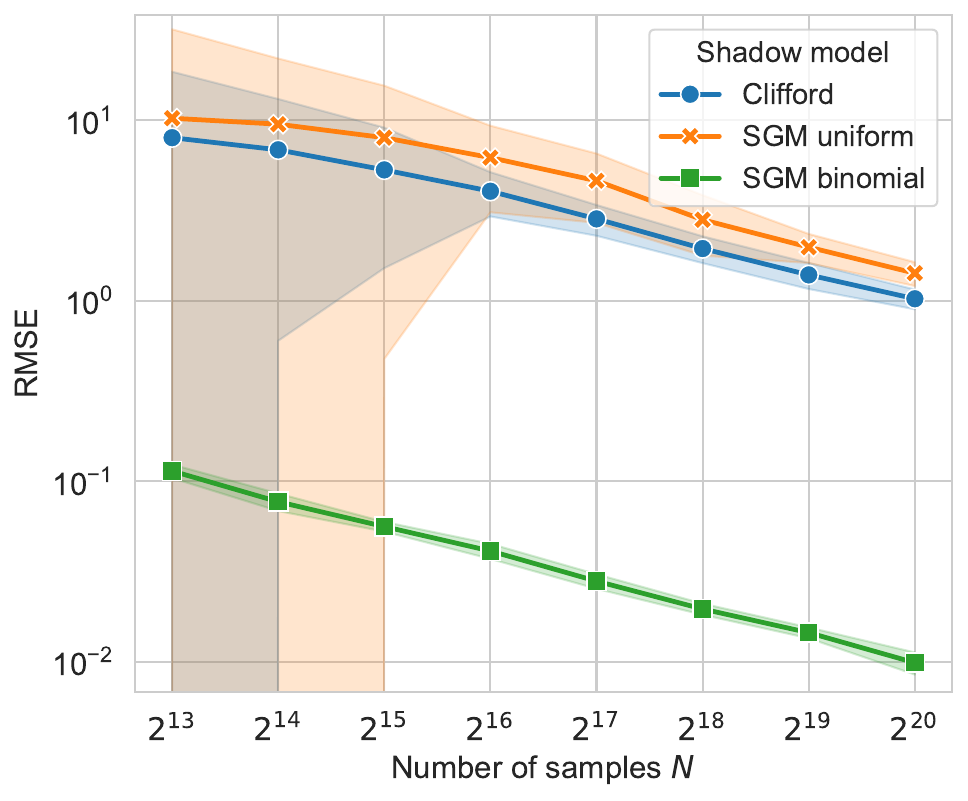}
        \caption{$n=20$}
        \label{fig:local_pauli_scaling_n20}
    \end{subfigure}

    \caption{
        Estimation error for highly nonlocal Pauli observables as a function of
        the number of shadow samples $N$, averaged over 10 independent Haar-random state
        realizations.
        We compare the Clifford and Pauli ensembles with the uniform \NAME ensemble and the binomial \NAME ensemble with $q=(n-1)/n$. The plotted quantity is the root-mean-square error (RMSE), and the shaded regions indicate variation across state instances.
    }
    \label{fig:xy_local_scaling}
\end{figure}

Figure~\ref{fig:xy_local_scaling} considers the analogous estimation problem for Haar-random states. Here, the observables act as either $X$ or $Y$ on all but one qubit. The results illustrate the poor suitability of the standard Pauli protocol for this task as well as the importance of adapting the shadow distribution to the target family of observables. In particular, increasing the system size from $n=12$ to $n=20$ leads to a significant decline in the performance of Clifford shadows and \NAME with a uniform sampling distribution. By contrast, the tuned \NAME protocol with $q=\frac{n-1}{n}$
exhibits a comparatively modest difference in prediction accuracy.

Two aspects of the RMSE curves are useful for interpreting the numerical results.
First, the RMSE for an observable drawn from the targeted ensemble is \emph{independent} of the number $M$ of observables used to estimate it. Since it relates to an empirical average rather than the worst-case prediction, it underlies the law of large numbers, such that increasing $M$ merely improves the statistical precision with which this quantity is estimated.
Second, for insufficiently large $N$, the empirical RMSE can substantially underestimate the asymptotic behavior predicted by Theorem~\ref{thm:main-shadow-bound} and may initially increase as $N$ grows. This finite-sample behavior reflects the sensitivity of the empirical mean to rare, large estimation errors and can be reduced by using a more robust estimator, such as median-of-means.

\subsection{\texorpdfstring{$k$}{k}-body reduced density matrix prediction}

In this section, we apply the \NAME ensemble to the physical setting studied in~\cite{zhao2021fermionic}, namely the prediction of elements of the $k$-body reduced density matrix of fermionic modes. For completeness, we introduce the necessary theory. Then we demonstrate how the distribution can be tailored to the respective family of observables, and finally compare the results of our numerical experiment with the method proposed in~\cite{zhao2021fermionic}.

\subsubsection{Fermionic modes}
A fermionic mode is a single one-particle degree of freedom that can be either unoccupied or occupied by one fermion, i.e.,\ a particle that follows the Pauli exclusion principle. Each mode $j$ has associated creation and annihilation operators $a_j$, $a_j^\dagger$. Many physical systems, especially molecules and closed electronic systems, have a well-defined number of fermions. Particle-number-preserving interactions are of particular interest because they respect this physical constraint.

The monomials $a_{j_1}^\dagger a_{j_1}, a_{j_1}^\dagger a_{j_2}^\dagger a_{j_1}a_{j_2},\dots$ form the basis for all such interactions. For a fixed-particle-number state $\rho$, the $k$-body reduced density matrix ($k$-RDM) is conventionally represented by the state obtained after tracing out all except $k$ modes, denoted by
\begin{equation}
{}^k D^{p_1\cdots p_k}_{q_1\cdots q_k}
=
\Tr\!\mleft(
a_{p_1}^\dagger\cdots a_{p_k}^\dagger
a_{q_k}\cdots a_{q_1}\rho
\mright).
\end{equation}
The monomials spanning fixed particle interactions can equivalently be expressed as Majorana monomials
\begin{equation}
\Gamma_A
=
(-\ii)^r\gamma_{\mu_1}\cdots\gamma_{\mu_{2r}},
\end{equation}
where the $2n$ Majorana operators
\begin{equation}
\gamma_{2j-1}=a_j+a_j^\dagger,
\qquad
\gamma_{2j}=-\ii(a_j-a_j^\dagger),
\qquad j=1,\ldots,n,
\end{equation}
exhibit the same algebraic properties as the Pauli operators. Therefore, there is a one-to-one correspondence between Majorana and Pauli operators, one of which is given by the Jordan--Wigner (JW) transformation,
\begin{equation}
\gamma_{2j-1}
=
Z_1\cdots Z_{j-1}X_j,
\qquad
\gamma_{2j}
=
Z_1\cdots Z_{j-1}Y_j.
\label{eq:JW}
\end{equation}
Hence every $\Gamma_A$ maps, up to a real sign, to an $n$-qubit Pauli operator.

Although the observables are local in the Majorana monomial basis, the Pauli operators obtained by the JW transformation are evidently nonlocal. In~\cite{zhao2021fermionic}, a shadow protocol based on Gaussian unitaries was introduced to address the shortcomings of the Pauli and Clifford protocols. However, for constant $k$, the \NAME protocol also provides an efficient alternative. Since observables corresponding to a $k$-RDM are spanned by Majorana monomials with degree at most $2k$, the respective Pauli observables given by the JW transformation must be $2k$-$X/Y$-local. Hence, Example~\ref{ex:binomial-shadow-norm} directly gives a sample complexity of $\OOO(n^{2k})$.

\subsubsection{Optimized \NAME distribution}
Although efficient, the aforementioned \NAME protocol does not exhaustively exploit the structure of the observables. By applying previous knowledge about the JW-transformed Majorana monomials, we propose two additional distributions for the \NAME protocol.
We start by stating the family of observables more rigorously. Physical fermionic operators lie in the space spanned by even-degree Majorana monomials, due to the parity superselection rule~\cite{streater2016pct}. Therefore, the target family is
\begin{equation}
    \mathcal O_{n,k}
    =
    \bigcup_{r=1}^{k}
    \mleft\{\Gamma_\mu:|\mu|=2r\mright\},
    \qquad
    L_{n,k}
    =
    |\mathcal O_{n,k}|
    =
    \sum_{r=1}^{k}\binom{2n}{2r}.
    \label{eq:rdm-target-family}
\end{equation}

By a slight abuse of notation, let $\lambda_{\Gamma_\mu}(p)$ denote the channel eigenvalue associated with the JW transformation of $\Gamma_\mu$. Then Theorem~\ref{thm:main-shadow-bound} relates its inverse to the corresponding shadow norm. Therefore, Proposition~\ref{prop:shadow-norm-sample-complexity} yields the following upper bound on the mean-squared error (MSE):
\begin{equation}\label{exp:ravg}
R_{\mathrm{avg}}(p)
\propto
\frac{1}{L_{n,k}}
\sum_{\Gamma_\mu\in\mathcal O_{n,k}}
\frac{1}{\lambda_{\Gamma_\mu}(p)}.
\end{equation}

To minimize this quantity, $p$ should be chosen such that $1/\lambda_{\Gamma_\mu}(p)$ is proportional to the corresponding multiplicity. The latter is the number of degree-$2r$ Majorana monomials with $X/Y$ support size $s$, which we denote by $N_{r,s}$ for $s=2,4,\ldots,2k$. Such a monomial contains $s$ singly occupied Majorana pairs, i.e.,\ pairs for which exactly one of $\mu_{2j-1}$ and $\mu_{2j}$ equals one. Each singly occupied mode admits two choices of Majorana, while the remaining $2r-s$ Majoranas form $r-s/2$ doubly occupied pairs chosen from the remaining $n-s$ modes, resulting in
\begin{equation}
    N_{r,s}
    =
    2^s\binom{n}{s}\binom{n-s}{r-s/2}
    \label{eq:rdm-degree-support-count}
\end{equation}
for even $s$ with invalid binomial coefficients interpreted as zero; $N_{r,s}=0$ for odd $s$. Write $N_s =\sum_{r=1}^{k}N_{r,s}$. Then
\begin{equation}
    N_0=\sum_{r=1}^{k}\binom{n}{r},
    \qquad
    N_s
    =
    2^s\binom{n}{s}
    \sum_{t=0}^{k-s/2}\binom{n-s}{t},
    \quad s=2,4,\ldots,2k.
    \label{eq:rdm-support-counts}
\end{equation}

When all occupied modes are doubly occupied, the JW transformation yields a $Z$-type observable. For a $Z$-support of size $t$, there are $\binom{n}{t}$ such observables. Substituting this count and the binomial distribution
\[
    p_q(a)=q^{\wt(a)}(1-q)^{n-\wt(a)}, \quad \hat{p}_q(b) = (1-2q)^{\wt(b)},
    \qquad 0<q<1,
\]
into Eq.~\eqref{exp:ravg} and applying Theorem~\ref{thm:main-channel-eigenvalues} gives
\begin{equation}
R_{\mathrm{avg}}(q)
=
\frac{1}{L_{n,k}}
\mleft[
\sum_{\substack{s=2\\ s \ \mathrm{even}}}^{2k}
\frac{2N_s}{q^s(1-q)^{n-s}}
+
\sum_{t=1}^{k}
\frac{2\binom{n}{t}}{1+(1-2q)^t}
\mright],
\label{eq:rdm-bernoulli-risk}
\end{equation}
which can be efficiently optimized over $q$ for fixed $n$ and $k$. However, the binomial distribution does not account for the fact that all target observables have even support.

A more flexible distribution first samples a mask weight
$w\in\{0,\ldots,n\}$ with probability $\pi_w$, and then selects
a mask uniformly among those of weight $w$:
\begin{equation}
    p_{\boldsymbol\pi}(a)
    =
    \frac{\pi_{\wt(a)}}{\binom{n}{\wt(a)}},
    \qquad
    \pi_w\geq0,
    \qquad
    \sum_{w=0}^{n}\pi_w=1.
    \label{eq:rdm-fixed-weight-distribution}
\end{equation}

For simplicity, we omit the $Z$-type observables, whose contribution can be uniformly upper-bounded. The resulting objective is
\begin{equation}
R_{\mathrm{avg}}^{\mathrm{JW}}(\boldsymbol\pi)
=
\frac{1}{L_{n,k}}
\sum_{\substack{s=2\\ \text{$s$ even}}}^{2k}
\frac{2N_s\binom{n}{s}}{\pi_s}.
\label{eq:rdm-fixed-weight-risk}
\end{equation}
Minimizing this expression over the probability simplex, subject to positive eigenvalues for all target observables, is a convex optimization problem with closed-form solution
\begin{equation}
\pi_s^{\mathrm{avg}}
=
\frac{\sqrt{N_s\binom{n}{s}}}
{\displaystyle
\sum_{\substack{u=2 \\ \text{$u$ even}}}^{2k}
\sqrt{N_u\binom{n}{u}}}.
\end{equation}
This follows from minimizing $\sum_s A_s/\pi_s$, for which the optimum satisfies $\pi_s\propto\sqrt{A_s}$. In practice, the optimizer may assign zero probability to masks that are irrelevant to the target family. To retain a globally invertible shadow channel, as required by Theorem~\ref{thm:main-shadow-bound}, we follow Example~\ref{ex:uniform-on-S} and assign a total probability mass $1-\delta>0$ uniformly to block encodings with $p_{\boldsymbol\pi}(a)=0$, where $1-\delta$ can be chosen below machine precision.

\subsubsection{Numerical experiment}
\begin{figure}[t]
    \centering
    \begin{subfigure}[t]{0.49\textwidth}
        \centering
        \includegraphics[height=0.8\textwidth]
        {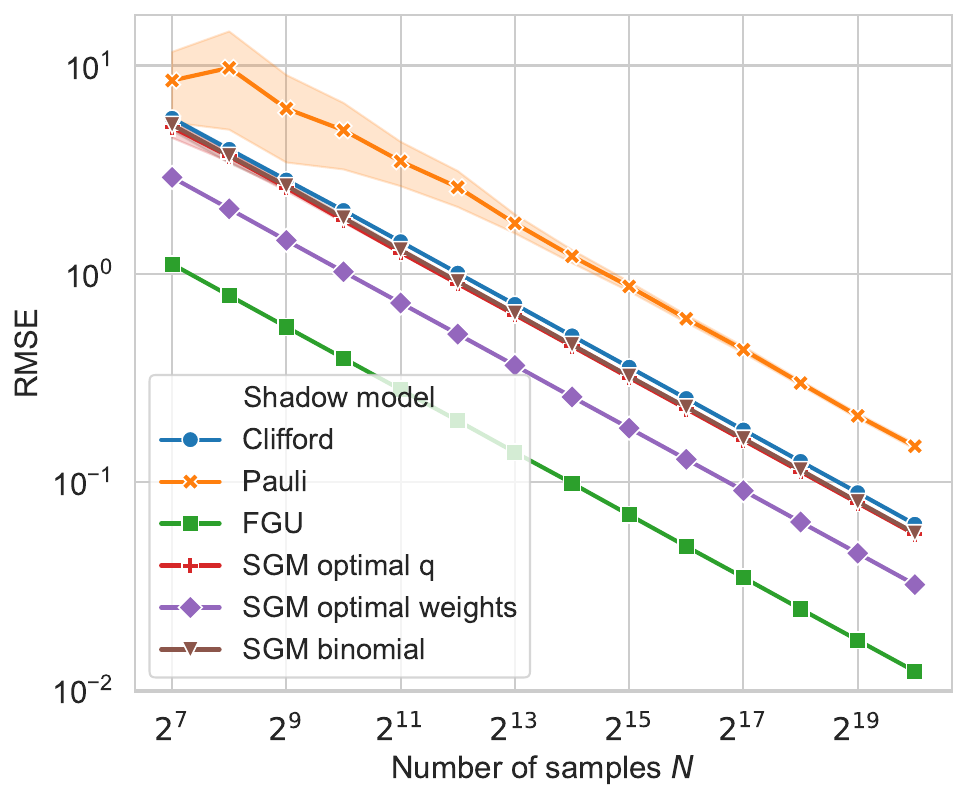}
        \caption{}
        \label{fig:majorana_rdm_scaling_rmse}
    \end{subfigure}
    \hfill
    \begin{subfigure}[t]{0.49\textwidth}
        \centering
        \includegraphics[height=0.8\textwidth]
        {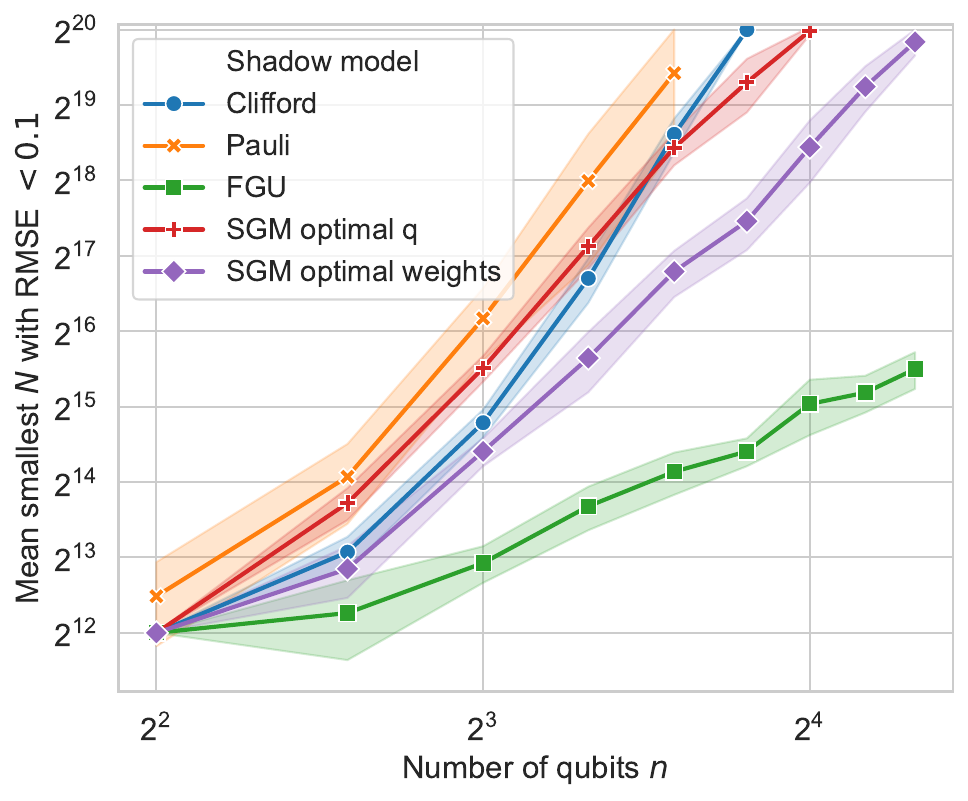}
        \caption{}
        \label{fig:majorana_rdm_scaling_sample_complexity}
    \end{subfigure}

    \caption{
        Performance for second-order Majorana RDM observables.
        \textbf{(a)} RMSE as a function of the number of shadow samples $N$
        for all 10,902 non-trivial observables of a 12-qubit Majorana state,
        averaged over 10 independent realizations.
        \textbf{(b)} Number of samples required to reach
        $\mathrm{RMSE}\leq 0.1$ as a function of system size $n$, using
        50 observables and 10 independent realizations per system size.
        We compare Clifford, Pauli, and fermionic Gaussian (FGU) shadows with
        \NAME using optimized binomial and fixed-weight sampling.
    }
    \label{fig:majorana_rdm_scaling}
\end{figure}

In our numerical experiment, we simulate $k$-RDM estimation for random states with particle number fixed to $n/2$. These states are obtained by projecting a Haar-random state onto the $n/2$-particle subspace and renormalizing. Figure~\ref{fig:majorana_rdm_scaling} shows that the \NAME protocol outperforms both the Pauli and Clifford protocols in this setting, with additional practical improvements obtained by tailoring the sampling distribution. Nevertheless, as suggested by Eq.~\eqref{eq:rdm-fixed-weight-distribution} with $\pi_s^{\mathrm{avg}}$, the MSE retains its $\OOO(n^{2k})$ scaling.

The measurement protocols also differ in their circuit complexity.
The fermionic Gaussian unitaries of~\cite{zhao2021fermionic} are implemented using random fermionic SWAP gates, which generally require $\OOO(n)$ Pauli rotation gates and yield circuits of depth $\OOO(n)$ with $\OOO(n^2)$ gates. While the practical implementation can be improved, for example using Majorana SWAP networks~\cite{fisher2026improving}, to the best of our knowledge the asymptotic gate count remains unchanged. In contrast, the \NAME protocol corresponds to a Moore-Nilsson circuit~\cite{moore2001parallel}, requiring depth $\OOO(\log(n))$ and $\OOO(n)$ gates in general. For the present example, its circuit complexity is constant for fixed $k$, since $\wt(a)\leq k$. Similar constructions can be applied, for example, to estimating the energy of the Kitaev chain~\cite{vodola2014kitaev}.

\section{Discussion}\label{sec:discussion}

In this work, we introduced selective GHZ measurement ensembles as a tunable family of classical shadow protocols. We determined the complete spectrum of the shadow channel for arbitrary probability distributions and derived a shadow-norm bound for structured observables with prescribed $X/Y$ support. These results place \NAME shadows between local measurement schemes, such as Pauli shadows, and nonlocal schemes, such as global Clifford shadows. While we do not strictly outperform either one, the underlying probability distribution can be adapted to certain families of target observables, leading to smaller worst-case shadow norms for these families while retaining comparatively simple measurement circuits. Notably, these families can contain exponentially many non-identity Pauli observables. Our numerical results support the theoretical predictions and further illustrate the flexibility of the protocol in a physically motivated setting, namely the estimation of observables in fermionic many-body systems.

In the following, we present an overview of related work on classical shadows to place our results in context and highlight connections to prior work. We then discuss limitations of our approach and outline several open questions and directions for future research.

\subsection{Related work}\label{sec:related-work}

Classical shadows provide a general framework for predicting many properties of an unknown quantum state from relatively few measurements~\cite{huang2020predicting,huang2022learning}. Standard constructions based on local Pauli and global random Clifford measurements occupy opposite ends of a \textit{circuit} versus \textit{sample} complexity tradeoff: local Pauli measurements are experimentally simple and efficient for local observables but become costly for operators with large support, whereas global Clifford measurements can efficiently estimate low-rank observables at the expense of substantially greater circuit depth and gate overhead.

This tradeoff has motivated the development of intermediate measurement ensembles that retain some of the statistical advantages of global scrambling while remaining experimentally accessible. Finite-depth local Clifford circuits provide one such route, interpolating naturally between local Pauli and global Clifford measurements. Tensor-network methods can be used to characterize the resulting shadow channel and reconstruction map~\cite{bertoni2024shallow,akhtar2023scalable}, and these protocols can yield improvements over Pauli measurements for quasi-local observables while requiring much shallower circuits. More broadly, these results highlight the central role of controlled operator spreading: increasing circuit depth can improve measurement efficiency by spreading information across the system, but only up to the point where the additional scrambling no longer compensates for the increased implementation cost.

This perspective naturally extends from engineered random circuits to native many-body dynamics, where the physical evolution itself supplies the required scrambling. Hamiltonian-shadow protocols exploit chaotic or locally scrambled dynamics to generate informative measurement ensembles~\cite{hu2022hamiltonian,hu2023classical}, while many-body-localized dynamics provides a complementary regime in which information spreads only quasi-locally~\cite{zhou2024efficient}. Related approaches use ancillas and emergent state designs generated by global evolution~\cite{mcginley2023shadow}, as well as fixed-quench analog protocols in enlarged Hilbert spaces~\cite{tran2023measuring}. More recent work has shown that even a single Hamiltonian, sampled at randomized evolution times, can suffice for general state-property prediction~\cite{liu2024predicting}, with extensions to fermionic systems~\cite{denzler2024learning} and to chaos-assisted conversion of classical randomness into quantum randomness for shadow tomography~\cite{mok2025optimal}.

The challenge of characterizing and efficiently inverting the shadow channel has also led to connections with several areas of mathematics. Representation-theoretic ideas already appear in robust shadow estimation, where symmetry and Schur-type decompositions are used to simplify the action of the shadow channel~\cite{chen2021robust}. Similar structure arises in fermionic settings, where Gaussian and matchgate ensembles decompose operator space into invariant sectors, enabling explicit reconstruction maps and efficient estimation of fermionic observables~\cite{zhao2021fermionic,wan2023matchgate}. These examples motivated increasingly general formulations of shadow tomography: frame-theoretic approaches describe reconstruction for arbitrary measurement schemes through dual frames~\cite{innocenti2023shadow}, while Pauli-invariant ensembles exploit diagonalization of the shadow channel in the Pauli basis~\cite{bu2024classical}. Group-theoretic formulations further organize the channel according to irreducible representation sectors and connect shadow reconstruction with symmetry-based error mitigation~\cite{zhao2024group-theoretic}. More recently, this perspective has been extended beyond group ensembles to random measurements associated with symmetric spaces, further broadening the class of structured shadow channels for which analytical reconstruction can be developed~\cite{chang2026classical}.

Experimental demonstrations have established the practicality of classical shadow protocols across several quantum platforms~\cite{zhang2021experimental,struchalin2021experimental}. In parallel, substantial effort has focused on robustness to experimental imperfections. Calibration-based approaches introduced robust shadow estimation for noisy randomized measurements~\cite{chen2021robust}, followed by general analyses of classical shadows under noisy quantum channels~\cite{koh2022classical}. Classical shadow data has also been incorporated directly into quantum-error-mitigation strategies, including shadow distillation and error-mitigated classical shadows~\cite{seif2023shadow,jnane2024quantum}, while symmetry and group structure have been exploited to suppress errors in shadow estimation~\cite{zhao2024group-theoretic}. More recent work has investigated how noise modifies the optimal amount of scrambling~\cite{rozon2024optimal}, established stability guarantees under gate-dependent noise~\cite{brieger2025stability}, and developed robust protocols for shallow measurement circuits~\cite{farias2025robust,hu2025demonstration}. These advances have increasingly connected noise mitigation with experimentally realizable shadow protocols, including direct demonstrations of error-mitigated and shallow-shadow techniques on quantum hardware~\cite{hu2025demonstration}.

Beyond reducing the measurement cost of quantum-state characterization, classical shadows have developed into a broadly useful primitive for extracting and reusing information from quantum experiments. The original framework demonstrated that a single randomized-measurement dataset could be used to estimate many observables, fidelities, correlation functions, and entanglement-related quantities simultaneously~\cite{huang2020predicting}. This idea was quickly extended to nonlinear properties of quantum states, including mixed-state entanglement and moments of partially transposed density matrices~\cite{elben2020mixed}, as well as quantum Fisher information and multipartite-entanglement certification~\cite{rath2021quantum,vitale2024robust}. Classical shadows have also become an important tool for quantum chemistry and variational algorithms, where the simultaneous estimation of many Hamiltonian terms can substantially reduce measurement overhead; this has motivated derandomized and locally biased measurement strategies tailored to molecular Hamiltonians~\cite{huang2021efficient,hadfield2022measurements}. Structure-adapted variants further enable efficient estimation of fermionic reduced density matrices and other observables relevant to electronic-structure calculations~\cite{zhao2021fermionic,wan2023matchgate}. Beyond observable estimation, shadows have been incorporated directly into variational optimization, for example to diagnose and avoid barren plateaus or to train variational circuits using large collections of covariance constraints~\cite{sack2022avoiding,boyd2022training}, and more recently to construct large quantum-subspace expansions using only classical post-processing of previously acquired shadow data~\cite{boyd2025high}. Another major direction applies shadows to dynamical and process characterization: randomized measurements have been used to probe scrambling and out-of-time-order correlations~\cite{garcia2021quantum, mcginley2022quantifying}, while shadow process tomography extends the framework from states to quantum channels and enables applications such as process characterization and Hamiltonian learning~\cite{kunjummen2023shadow,levy2024classical}.

The same reusable-data perspective has found applications in finite-temperature physics through pure thermal shadows~\cite{coopmans2023predicting}, in cross-platform verification of independently prepared quantum states~\cite{zhu2022cross}, and in spectroscopy, where time-resolved shadow data can be combined across many observables to infer spectral information and energy gaps~\cite{chan2025algorithmic}. Classical shadows have additionally become a useful interface between quantum experiments and classical machine learning, allowing experimentally generated shadow data to be used for phase classification and prediction of many-body properties~\cite{cho2024machine}, and enabling quantum-trained models whose inference can be subsequently performed classically~\cite{jerbi2024shadows}. Taken together, these developments have shifted the role of classical shadows from a sample-efficient tomography protocol to a general-purpose quantum-to-classical data representation that supports downstream tasks across quantum chemistry, many-body physics, metrology, spectroscopy, process learning, error mitigation, and machine learning.

\subsection{Limitations and outlook}\label{sec:outlook}

On the theoretical side, our work leaves several open questions.
While we derived the complete spectrum of the shadow channel and showed that the \NAME ensemble is tomographically complete for strictly positive distributions, this is only a sufficient condition. It remains an open question whether allowing
$p(\zero)=0$ can lead to smaller sample-complexity guarantees for suitable
classes of observables. Another open question is whether the \NAME shadow-norm bound in Theorem~\ref{thm:main-shadow-bound} is sharp. While the bound is attained for single Pauli observables up to a constant factor, it remains open whether there exist classes of observables for which it can be significantly improved. It also remains open how much the constant $6+\pi$ in the proof of Theorem~\ref{thm:main-shadow-bound} can be reduced.

On the practical side, we have mostly focused on the sample complexity of classical shadows, but other resources are also relevant. In particular, the circuit depth and connectivity required to implement the measurement ensemble can be limiting factors in near-term experiments. Although \NAME is designed to be experimentally accessible, it remains an open question how its implementation cost compares to other shadow protocols in practice, especially in the presence of noise and limited connectivity.
Additionally, it would be interesting to find more classes of practically relevant observables for which \NAME could provide an advantage over other shadow protocols
and to perform more extensive numerical studies to benchmark its performance in realistic scenarios.

Finally, there are multiple directions to extend our work. One natural direction is to incorporate a noise model into the \NAME ensemble and analyze its robustness to experimental imperfections. In this case, it may still be possible to compute the exact spectrum of the shadow channel and derive explicit shadow-norm bounds for various noise models. Another direction would be to generalize our framework to other types of entangled measurements. For example, one could consider measurements based on graph states or other stabilizer states, which may provide advantages for different classes of observables.

\section*{Acknowledgments}

NF, DD, AFK, and MW are currently supported by SSF (Swedish Foundation for Strategic Research), grant number FUS21-0063.
Computational resources were provided by the National Academic Infrastructure for Supercomputing in Sweden (NAISS), funded by the Swedish Research Council.
PPN is supported by the project “Quantum computing for future mobility solutions” funded by Chalmers.
AG and AFK are supported by the Knut and Alice Wallenberg Foundation through the Wallenberg Centre for Quantum Technology (WACQT). AFK also acknowledges support from the Swedish Foundation for Strategic Research (grant number FFL21-0279), the Horizon Europe programme HORIZON-CL4-2022-QUANTUM-01-SGA via the project 101113946 OpenSuperQPlus100, and from the Norwegian Research Council through the Norwegian Quantum Software Center (NorQSoft, project number 361350).

\paragraph{Declaration of generative AI use.}
The main ideas of the work were developed by us, and we worked out and wrote down all technical details.
ChatGPT 5.6 Sol was used for searching the literature, proofreading, improving wording and developing proof ideas. Notably, it helped us to discover the graph
construction and the connection to Markov semigroups in the proof of Theorem~\ref{thm:main-shadow-bound}. Furthermore, it assisted in implementing and debugging parts, as well as polishing the final version of the codebase.
All AI-generated material was reviewed and edited by us, and we take full responsibility for the manuscript. Prior work has been credited to the best of our knowledge.

\bibliographystyle{alphaurl}
\bibliography{references}

\end{document}